\newcommand {\rf} {\mathit{rank}}
\newcommand {\ent} {\mathrel{{\mathrel|\joinrel\sim}}}
\newcommand {\nott} {\lnot}
\newcommand {\perogni} {\forall}
\newcommand {\sx} {\langle}
\newcommand {\dx} {\rangle}
\newcommand {\appartiene} {\in}
\newcommand {\emme} {\mathcal{M}}
\newcommand {\enne} {\mathcal{N}}
\newcommand {\elle} {\mathcal{L}}
\newcommand {\modello} {\models}
\newcommand {\tc} {\mid}
\newcommand {\WW} {\mathcal{W}}
\newcommand{\tip}{{\bf T}}
\newcommand{\alc}{\mathcal{ALC}}
\newcommand{\alctmin}{\mathcal{ALC}+\tip_{min}}
\newcommand {\defincl} {\mathrel{{\scriptstyle \sim^{\!\!\!\!\! \sqsubset}}}}
\newcommand{\alctr}{\mathcal{ALC}+\tip_{\textsf{\tiny R}}}
\newcommand{\alctm}{\mathcal{ALC}^{\Pe}_{min}\tip}
\newcommand{\be}{\begin{enumerate}}
\newcommand{\ee}{\end{enumerate}}
\newcommand{\hide}[1]{}
\def \cases{\left \{\begin{array}{l}}
\def \endcases{\end{array}\right .}
\newcommand {\Pe} {{\bf P}}
\newcommand {\Ra} {{\bf R}}
\newcommand {\ri} {\rightarrow}
\newcommand {\Ri} {\Rightarrow}
\newcommand {\bes} {\begin{description}}
\newcommand{\ens} {\end{description}}
\newcommand {\la} {\langle}
\newcommand {\ra} {\rangle}
\newcommand {\beq} {\begin{quote}}
\newcommand {\enq} {\end{quote}}
\newcommand {\bit} {\begin{itemize}}
\newcommand {\enit} {\end{itemize}}
\newenvironment{pozz}{\color{black}}{\color{black}}
\def \ri{\rightarrow}
\def \Ri{\Rightarrow}
\begin{document}

\begin{frontmatter}

\title{A reconstruction of  multipreference closure } 


 \author[unipmn]{Laura Giordano\corref{ca}}
\ead{laura.giordano@uniupo.it}

\author[unito]{Valentina Gliozzi\corref{ca}}
\ead{valentina.gliozzi@unito.it}

\cortext[ca]{Corresponding author}

\address[unipmn]{Universit\`a del Piemonte Orientale ``A. Avogadro'' \\
       DISIT, viale Teresa Michel, 11, 15121 Alessandria, Italy}
\address[unito]{Universit\`a degli Studi di Torino \\
       Center for Logic, Language and Cognition and Dipartimento di Informatica, \\
       C.So Svizzera, 185 - 10149 Torino, Italy}

 \begin{abstract} 
 The paper describes a preferential approach for dealing with exceptions in KLM preferential logics, based on the rational closure.  It is well known that rational closure  does not allow an independent handling of inheritance of different defeasible properties of concepts.  
 In this work, we consider an alternative closure construction, called  Multi Preference closure (MP-closure), which has been first considered  for reasoning with exceptions in DLs. We reconstruct the notion of MP-closure in the propositional case and show that it is a natural (weaker) variant of Lehmann's lexicographic closure,
which appears to be too bold in some cases.
The MP-closure defines a preferential consequence relation that, although weaker than  lexicographic closure, is stronger than Relevant Closure. 

\end{abstract}

\begin{keyword}
Nonmonotonic Reasoning \sep Preferential semantics \sep Rational Closure   \sep Knowledge Representation
\end{keyword}

\end{frontmatter}

\newtheorem{theorem}{Theorem}
\newdefinition{definition}{Definition}
\newdefinition{example}{Example}
\newtheorem{proposition}{Proposition}
\newtheorem{lemma}{Lemma}
\newtheorem{fact}{Fact}
\newtheorem{corollary}{Corollary}
\newproof{proof}{Proof}

\section{Introduction}
Kraus, Lehmann and Magidor  \cite{KrausLehmannMagidor:90} and Lehmann and Magidor  \cite{whatdoes}  investigate 
 the properties that a notion of plausible inference from a conditional knowledge base should satisfy (these are the KLM properties, for short). 
These properties led to the definition of  the notions of preferential and rational consequence relation,
as well as to the definition of the rational closure of a conditional knowledge base  \cite{whatdoes}.
Although not all non-monotonic formalisms in the literature  satisfy KLM properties, and although the adequacy of these properties 
has been  (and still is)  subject of debate, starting from the work by Pearl and Geffner \cite{GeffnerAIJ1992}, by Benferhat et al. \cite{Benferhat2000}, and more recently by 
Bonatti and Sauro \cite{BonattiSauro17},   Koutras et al. \cite{KoutrasKR18},  and Casini et al. \cite{CasiniJelia2019}, 
 the rational closure construction 
has been considered for defeasible reasoning in description logics  \cite{casinistraccia2010,CasiniJAIR2013,CasiniDL2013,AIJ15,CasiniISWC15}, which are the formalisms at the basis of OWL ontologies \cite{OWL}.

While the rational closure  provides a simple and efficient approach for reasoning with exceptions 
(as its construction requires a polynomial number of inference steps in the underlying logical formalism),
it is well known that ``it does not provide for inheritance of generic properties to exceptional subclasses" \cite{Lehmann95}.
This problem was called by Pearl \cite{PearlTARK90} the ``blockage of property inheritance" problem, 
and it is an instance of the ``drowning problem"  \cite{BenferhatIJCAI93}.
\color{black} Roughly speaking, the problem is that property inheritance from classes to subclasses is not guaranteed: if the subclass is exceptional with respect to the superclass for a given property, it does not inherit from that superclass any other property. For instance, in the well-known penguin example, penguins are exceptional birds with respect to the property of flying, hence in rational closure they do not inherit any other property of birds (whereas we would like to conclude that penguins inherit from birds the property of having wings) \cite{PearlTARK90} .
\normalcolor

To overcome this weakness of rational closure, Lehmann  \cite{Lehmann95} introduced the notion of  lexicographic closure, as a ``uniform way of constructing a rational superset of the rational closure" 
thus strengthening the rational closure but still defining a rational consequence relation.

In this paper  we propose an alternative construction to rational closure which does not incur into the blockage of inheritance problem: MP-closure.
The MP-closure is a natural (weaker) variant of Lehmann's lexicographic closure, 
which simply uses a different lexicographic ordering.
{\color{black} Our investigation of the MP-closure construction is motivated by the fact that in some cases the lexicographic  closure appears to be too bold and the weaker MP-closure better suited (see for instance Example \ref{example-differenza_MC_LC} in Section  \ref{sez:Lex-closure}). 
A similar motivation is at the basis of the development of other refinements of rational closure, such as the Relevant Closure,  
a closure recently proposed by Casini et al.  \cite{Casini2014}  in the context of description logics, as a weaker alternative to  lexicographic closure.}


The MP-closure construction departs from  lexicographic closure in the choice
that, in case of contradictory defaults with the same rank, 
one tries to satisfy as many defaults as possible 
(where the {\em number} of defaults matters, rather than the defaults themselves). 
Abandoning this choice, 
and following 
an alternative route that was also considered but not explored by Lehmann \cite{Lehmann95},
  leads to a construction  
which defines a preferential consequence relation rather than a rational one,
and to a more cautious notion of entailment. 

We believe that MP-closure defines  {\color{black} a simple and  interesting} notion of entailment, 
\color{black} which is strongly related with other proposals in the literature,  such as: {\em system ARS} \cite{IsbernerRitterskamp2010}, that refines and improves {\em system Z} \cite{PearlTARK90}, Brewka's Basic Preference Descriptions \cite{Brewka04}, the Relevant Closure \cite{Casini2014}, and Conditional Entailment \cite{Geffner1992}.
MP-closure may be reasonable in specific contexts, for instance,   
when reasoning about multiple inheritance in ontologies,
where the lexicographic closure appears to be too bold. 
%
In particular, as we will see below, in a situation in which typical $A$s are $P$s and typical $B$s are $\neg P$s, what do we want to conclude about typical $A \wedge B$s?
MP-closure does not conclude $P$ nor $\neg P$, while the lexicographic closure might conclude either $P$ or $\neg P$ depending on the syntactic 
formulation of the knowledge base 
giving rise to
some hazardous inferences. 
%
We will come back on this point with some examples in Sections  \ref{sez:Lex-closure} and \ref{sez:BPsem}.

Following the pattern by Lehmann \cite{Lehmann95}, in this paper we present a characterization of MP-closure both in terms of maxiconsistent sets and 
in terms of preferential models. 
MP-closure  defines a preferential consequence relation but not a rational consequence relation.
It is stronger than  rational closure but weaker than lexicographic closure.
Considering the natural (semantic) transformation, proposed by Lehmann and Magidor \cite{whatdoes},  to map a preferential consequence relation into a rational one, we apply 
it to  MP-closure and define a consequence relation which is a rational superset of  MP-closure. 
{\color{black} This is another uniform way of constructing a rational superset of the rational closure, with respect to the lexicographic closure.} We investigate its relationships with the lexicographic closure  and show that they are incomparable (none of them is weaker than the other).
%
%
Interestingly enough, 
we consider the operator 
used for defining 
a rational extension of  MP-closure,
and show that the minimal fixed-points of such an operator lead
to the multipreference  semantics proposed by Gliozzi \cite{GliozziAIIA2016}.

We conclude the paper 
by comparing the MP-closure 
 with
{\em system ARS} 
\cite{IsbernerRitterskamp2010}, 
with Brewka's Basic Preference Descriptions \cite{Brewka04}, with Geffner's conditional entailment \cite{Geffner1992}, 
with  Relevant Closure  \cite{Casini2014}, and with other preferential formalisms in the literature.
 In particular, we show that both the Basic and the Minimal Relevant Closure are weaker than MP-closure.

%

\normalcolor

The paper is organized as follows.
In Section \ref{sez:RC} we recall the definition of  rational closure and its semantics and,
in Section 
\ref{sez:Lex-closure}, the definition of the lexicographic closure, and discuss some examples which motivate the interest in investigating a weaker notion of closure. 
In Section \ref{sez:BPsem} we reformulate the MP-closure construction in the propositional setting in terms of maxiconsistent sets and, then, we study its model-theoretic semantics, its properties  and its relations with the lexicographic closure. 
In Section \ref{sec:rational_relation} a rational consequence  relation is defined, which is a superset of MP-closure 
and it is compared with  lexicographic closure. The relationships with 
 multipreference semantics and other strongly related constructions are investigated in Section \ref{sec:further_issues}. Section \ref{sec:conclu} concludes the paper.




\vspace{-0.1cm}
\section{The rational closure}\label{sez:RC}
\vspace{-0.2 cm}
In this section we recall the definition of rational closure by Lehmann and Magidor \cite{whatdoes} and its semantics,
which we will exploit it to define the semantics of MP-closure. 

Let language $\elle$ be the propositional language
defined from a set of propositional variables $\mathit{ATM}$ and 
Boolean connectives.
A {\em conditional assertion over $\elle$} has the form $A \ent B$, where $A, B \in \elle$.
\color{black}
A {\em knowledge base} $K$ is a set of conditional assertions $A \ent B$ over $\elle$. \normalcolor
In the following, we will restrict our attention to  {\em finite knowledge bases}
over a finite language.

{\color{black}  Kraus, Lehmann and Magidor \cite{KrausLehmannMagidor:90} define the family of preferential models and study related entailment relations.
Their work led to a classification of nonmonotonic consequence relations, 
and to the definition of the related KLM properties,
which have been widely accepted as the conservative core of default reasoning.
Lehmann and Magidor \cite{whatdoes} then introduce a family of stronger models, namely ranked models, and a stronger notion of entailment, called rational entailment.
As both preferential and ranked models are central to our work, in the following we will recall their definitions as well as the representation results relating preferential (ranked) models with KLM properties.}

The  semantics of conditional KBs in a preferential structure is defined by considering a set of worlds $\WW$ equipped
with a preference relation $<$.  Intuitively, the
meaning of $x < y$ is that $x$ is more typical/more normal/less
exceptional than $y$.  As we will see in Definition \ref{semantica_rational} below, a conditional $A\ent B$ is  satisfied
 in a preferential model if $B$ holds in all the most normal worlds satisfying $A$, i.e., in all $<$-minimal worlds satisfying $A$.

\begin{definition}[Preferential models and ranked models]\label{semantica_rational}
A $\mathit{preferential}$ model  is a triple $\emme= \sx \WW, <, v \dx$ where:
\begin{itemize} 
\item $\WW$ is a non-empty set of worlds;
\item $<$ is 
an irreflexive, transitive  relation 
on $\WW$ satisfying the Smoothness condition defined below;
\item  $v$ is a function $v: \WW \longrightarrow$ $2^{\mathit{ATM}}$, which
    assigns to every world $w$ the set of atoms holding in
    that world. If $F$ is a Boolean combination of formulas, its truth
conditions ($\mathcal{M}$, $w \models F$) are defined as for
propositional logic. Let $A$ be a propositional formula; we define
$Min_{<}^{\emme}(A)=\{w\appartiene \WW
     \tc \emme$$, w \modello A $ and $\perogni w'$, $w' < w$ implies $\emme,$$
w' \not\modello A\}$. Moreover: 
\begin{center}
{\color{black} $\emme \modello A \ent B$} iff for all $w'$, if $w' \appartiene
  Min_{<}^{\emme}(A)$ then $\emme,$$ w' \modello B$.
\end{center}
 \end{itemize}

\noindent At this point we can define the {\em Smoothness condition}: if $\emme,$$ w
\modello A$, then either $w \in Min_<^{\emme}(A)$ or there is $w' \in Min_<^{\emme}(A)$
such that $w' < w$ \footnote{\color{black} As observed by Lehmann and Magidor \cite{whatdoes}, the smoothness condition is a technical condition and 
when the language ${\cal L}$ is logically finite, we could limit ourselves to finite models and forget the smoothness condition.}.

A $\mathit{ranked}$ model is a preferential model  $\emme= \sx \WW, <, v \dx$ for which the relation $<$ is modular: for all $x, y, z$, if
$x < y$ then $x < z$ or $z < y$.

\end{definition}
%
{\color{black} The satisfiability of a conditional formula $A \ent B$ in a model is not defined with respect to a specific world, and we write
 $\emme \modello A \ent B$ to mean: $\emme$ {\em satisfies} $A \ent B$.}
 $A \ent B$ is {\em satisfiable} in the preferential (rational) semantics, if there is some preferential (ranked) model $\emme$ such that  $\emme \modello A \ent B$.
 Given a set  $K$ of conditional assertions over $\elle$ and a model $\emme$$=\sx \WW$$,
<, v \dx$, we say that $\emme$ is a model of $K$, written $\emme$
$\models K$, if for every $A \ent B \in K$, 
$\emme \models A \ent B$.
$K$ {\em preferentially entails} a
conditional assertion $A \ent B$, written $K \models_{\Pe} A \ent B$ if $A \ent B$ is satisfied in all 
preferential models of $K$.
 $K$ {\em rationally entails} a
conditional assertion $A \ent B$, written $K \models_{\Ra} A \ent B$ if $A \ent B$ is satisfied in all
ranked models of $K$.

As a consequence of Theorems 6.8 and 6.9 in \cite{toclKLM}, 
if a set of formulas $K$ is satisfiable in a ranked model, then it is satisfiable in a {\em finite} ranked model. 
In the following, we will restrict our consideration to ranked models with a finite set of worlds.
\color{black} This is in agreement with \cite{BoothParis98} where, for knowledge bases over a finite set of propositional variables, finite ranked 
models are considered, which are represented as finite sequences $t_1,t_2, \ldots, t_n$ of sets of atoms, each atom representing a propositional interpretation (a world).


Kraus, Lehmann and Magidor \cite{KrausLehmannMagidor:90} studied the properties of several families of nonmonotonic {\em consequence relations},
namely, well-behaved sets of conditional assertions (i.e., binary relations $\ent$ on $\cal L$).
In particular, they introduced a notion of {\em preferential consequence relation} as a consequence relation satisfying the following properties, expressed in the form of inference rules (also called KLM postulates):
\vspace{-0.3cm}
\begin{tabbing}
$\mathit{(Left \; Logical\; Equivalence)}$ \= xxxxxxxxxx \= \kill \\
$\mathit{(Left \; Logical\; Equivalence)}  ~ \mbox{ If } \models \mathit{A \equiv B}  \mbox{ and }   \mathit{A \ent C} \mbox{ then }    \mathit{B \ent  C} $ \\
$\mathit{(Right\; Weakening)}  ~ \mbox{ If } \models  \mathit{B \rightarrow  C}  \mbox{ and }       \mathit{A \ent B}    \mbox{ then }   \mathit{ A \ent C} $ \\
$\mathit{(Reflexivity)} ~  \mathit{A \ent A } $ \\
$\mathit{(And)}  ~ \mbox{ If }   \mathit{A \ent B}  \mbox{ and }   \mathit{A \ent C} \mbox{ then }  \mathit{A \ent  B \wedge C} $ \\
$\mathit{(Or)}  ~ \mbox{ If }  \mathit{A \ent  C}  \mbox{ and }   \mathit{B \ent C} \mbox{ then }    \mathit{A \vee B \ent C} $ \\
$\mathit{(Cautious \; Monotonicity)}  ~ \mbox{ If } \mathit{A \ent B}  \mbox{ and }   \mathit{A \ent C} \mbox{ then }  \mathit{A \wedge B \ent C} $ 
\end{tabbing}
Kraus, Lehmann and Magidor also proved a characterization result for preferential consequence relations, stating that 
a binary relation $\ent$ on $\cal L$ is a preferential consequence relation if and only if it is the consequence relation defined by some preferential model, 
where the consequence relation $\ent_\emme$ defined by a preferential model $\emme= \sx \WW, <, v \dx$ is the set of pairs $(A,B)$ such that 
for all $w'$, if $w' \appartiene
  Min_{<}^{\emme}(A)$, then $\emme,$$ w' \modello B$ (i.e., $A \ent B$  is satisfied in $\emme$).

The nonmonotonic system defined by the properties above is also known as system $\Pe$, and corresponds to a  flat (not nested) fragment of some conditional logic \cite{Nute80}. 
System $\Pe$ was proposed by Pearl \cite{Pearl:88}  as the {\em conservative core} of a nonmonotonic reasoning system, 
and the notions of {\em p-entailment} by Adams \cite{Adams:75}, {\em $\epsilon$-entailment} by Pearl \cite{Pearl:88} and {\em 0-entailment} in the {\em System Z} \cite{PearlTARK90} 
are all equivalent to preferential entailment. In particular, Lehmann and Magidor \cite{whatdoes} 
established a link between preferential entailment and Adams' probabilistic entailment.



Kraus, Lehmann and Magidor \cite{KrausLehmannMagidor:90}  studied representation results for systems weaker than $\Pe$.
Such systems, same as for system  $\Pe$ itself,  lack a property that seems to be desirable, namely the property of rational monotonicity:
\begin{tabbing}
$\mathit{(Rational \; Monotonicity)}  ~ \mbox{ If } \mathit{A \ent C}  \mbox{ and }     \mathit{A \not \ent \neg B} \mbox{ then }    \mathit{A \wedge B \ent C} $ 
\end{tabbing}
i.e., if $ A \ent C$ belongs to the consequence relation and $A \ent \neg B$ does not, then 
$A \wedge B \ent  C$ must belong as well to the consequence relation.
Lehmann and Magidor \cite{whatdoes}  considered 
{\em rational consequence relations}, i.e., preferential consequence relations that satisfy Rational Monotonicity,  and  proved  a semantic characterization result for rational consequence relations based on ranked models: each rational consequence relation is the consequence relation defined by some ranked model, and vice-versa.

In ranked models, each world can be associated with a rank in a natural way. \normalcolor

\begin{definition}[Rank $k_{\emme}(w)$ of a world in $\emme$]\label{definition_rank_prop} 
Given  a (finite) ranked model $\emme$ $=\sx\WW$$, <, v\dx$, the rank $k_{\emme}$  of a world $w \in \WW$, written $k_{\emme}(w)$,  is the length of the maximal chain $w_0 < \dots < w$ from $w$
to a minimal $w_0$ (i.e., there is no ${w'}$ such that  ${w'} < w_0$).\footnote{A chain $w_0 < w_1 < \ldots < w_n$ is maximal if there is no element 
$w'$ such that for some $i=0, \ldots, n-1$ it  holds $w_i < w' < w_{i+1}$.}
\end{definition}
Hence, the preference relation $<$ of a ranked model $\emme$ defines a ranking function $k_{\emme}: \WW \longrightarrow \mathbb{N}$
(this is just a special case of the general result in \cite{whatdoes} where there is no restriction to finite models).
%
%
Notice also that Definition \ref{definition_rank_prop}  makes sense even if the relation $<$ is not modular
and that, for a modular relation on a finite set, all maximal chains 
from an element $w$ to a minimal $w_0$ have the same length.

The previous definition  defines from $<$ a   ranking function $k_{\emme}: \WW \longrightarrow \mathbb{N}$. The opposite is also possible 
and $<$ can be defined from a ranking function $k_{\emme}$  by letting $x < y$ if and only if $k_{\emme}(x) < k_{\emme}(y)$ (this is similarly stated in \cite{whatdoes}, where a ranking function $r$ over a possibly infinite set is considered, since there is no restriction to finite models). 

The rank of a propositional formula $F$ in a model $\emme$ depends on the rank of the worlds satisfying the formula.
\begin{definition}[Rank of a formula in a model]\label{definition_rank_formula_prop}
The rank $k_{\emme}(F)$ of a formula $F \in {\cal L}$ in a model $\emme$ is $i = min\{k_{\emme}(w):
\emme,$ $w \models F \}$. If there is no $w$ such that $\emme,$ $w \models F$, then
$F$ has no rank in $\emme$.
\end{definition}

Lehmann and Magidor  \cite{whatdoes} proved that, for a  knowledge base $K$ which  is a
set of positive conditional assertions,  i.e., assertions with form $A \ent B$,
rational entailment is equivalent to preferential entailment.
Indeed, the set of conditional assertions satisfied in all ranked models of $K$
exactly coincides with the  set of conditional assertions satisfied in all preferential models of $K$.
Also, rational entailment does not define a rational consequence relation, i.e., a consequence relation satisfying the property of Rational Monotonicity.

In order to strengthen  rational entailment and to define a rational consequence relation, Lehmann and Magidor \cite{whatdoes} 
introduce the notion of {\em rational closure}, which 
can be seen as the ``minimal", in some sense (see below)
rational consequence completing a set of conditionals.
Let us recall the definition of the rational closure.

\begin{definition}[Exceptionality of formulas] 
Let $K$ be a knowledge base (i.e., a finite set of positive
conditional assertions) and $A$ a propositional formula. $A$ is
said to be {\em exceptional} for $K$ if and only if $K \models_{\Pe} \top \ent
\neg A$.  
A conditional formula $A \ent B$ is exceptional for $K$ if its antecedent $A$ is exceptional for $K$.  The set of conditional formulas of $K$ which are exceptional for $K$ will be denoted
as $E(K)$.
\end{definition}

\noindent   It is possible to define a non-increasing sequence of subsets of
$K$, $C_0 \supseteq C_1, C_1 \supseteq C_2, \dots$ by letting $C_0 = K$ and, for
$i>0$, $C_i$ the set of conditionals of $C_{i-1}$ exceptional for $C_{i-1}$, i.e.,  $C_i = E(C_{i-1})$. 
Observe that, being $K$ finite, there is
an $n\geq 0$ such that $C_n = \emptyset$ or for all $m> n, C_m = C_n$. 
\color{black} The sets $C_i$  are used to define the rank of a formula, as  in the next definition. Notice that if there is an $m$ such that $C_m = C_{m+1}$, then for all $k > m$, it will hold that $C_m = C_{k}$ (indeed $E(C_{m}) = E(C_{m+1}) = \dots =E(C_{k}$)). 
\normalcolor

\begin{definition}[Rank of a formula] \label{Def:Rank of a
formula}A propositional formula $A$ has {\em rank} $i$ (for $K$)\color{black}, written $rank(A)=i$, \normalcolor
if and only if $i$ is the least natural number for which $A$ is
not exceptional for $C_{i}$. {\color{black} If $A$ is exceptional for all
$C_{i}$, then 
we let $rank(A)=\infty$.}
\end{definition}
A conditional $A \ent B$ has rank equal to $\rf(A)$, and $C_i \setminus C_{i-1}$ is the set of conditionals (defaults) in $K$ having rank $i$.
 When $\rf(A)=\infty$, we say that the conditional $A \ent B$ has no rank.

\begin{example} \label{example-Student}
Let $K$ be the knowledge base containing the conditionals:
\begin{quote}
 1. $\mathit{ Student \ent \neg Pay\_Taxes}$\\
 2. $\mathit{Student \ent  Young}$\\
 3. $\mathit{ Employee \wedge Student \ent Pay\_Taxes}$
\end{quote}
stating that normally students do not pay taxes and are young, while employed students  normally pay taxes.
From the construction above, $C_0 = K$.
Furthermore, the set $C_1=E(K)$ of conditional formulas which are exceptional for $K$ only contains the default $ \mathit{ Employee \wedge Student  \ent Pay\_Taxes}$.
In fact, $\mathit{Student}$ is not exceptional for $K$, i.e.,  $K \not \models_{\Pe} \top \ent \neg \mathit{Student}$ (as there is some preferential model of $K$ falsifying $ \top \ent \neg \mathit{Student}$).
Hence, $\rf(\mathit{Student})=0$.
On the contrary, $\mathit{Employee \wedge Student }$ is exceptional for $K$, i.e., $K \models_{\Pe} \top \ent \neg (\mathit{Employee \wedge Student })$ 
(in all the preferential models of $K$, there cannot be a world with rank $0$ that satisfies $\mathit{Employee \wedge Student }$).
Therefore,
\begin{quote}
$C_0 = K$;

{ $C_1 = \{  \mathit{ Employee \wedge Student  \ent Pay\_Taxes} \}$}.
\end{quote}
In turn, $C_2=E(C_1)= \emptyset$ , as $\mathit{Employee \wedge Student }$ is not exceptional for $C_1$, i.e.,  $C_1 \not \models_{\Pe} \top \ent \neg (\mathit{Employee \wedge Student })$ (there is a preferential model satisfying $C_1$,  in which $\mathit{Employee \wedge Student}$ is satisfied in some world with rank $0$). {\color{black} As $\mathit{Employee \wedge}$ \linebreak $\mathit{Student }$ is exceptional for $C_0$, but not for $C_1$, in the rational closure  $\rf(\mathit{Employee \wedge}$ \linebreak $\mathit{ Student})=1$.}
%
Thus, the third conditional describing the properties of employed students has rank $1$ and is more specific than the conditionals describing the properties of  students, which have rank $0$.
%

\end{example}
Rational closure builds on the notion of exceptionality. Roughly speaking a conditional $A \ent B$ is in the rational closure of $K$ if 
$A \wedge B$ is less exceptional than $A \wedge \neg B$. 
 In next definition, we recall the rational closure construction by Lehmann and Magidor  in  \cite{whatdoes}, but limiting our consideration  to finite knowledge bases.
\begin{definition}[Rational closure \cite{whatdoes}] \label{def:rational closureDL}
Let $K$ be a (finite) knowledge base. The
rational closure  of $K$ is defined as: 
\begin{align*}
    \overline{K}= & \{A\ent B  \tc \mbox{ either } \ \rf(A) < \rf(A \wedge \nott B) \mbox{ or }  \rf(A)=\infty\} 
\end{align*}
where $A$ and $B$ are propositions in the language of $K$.
\end{definition}
Referring to Example \ref{example-Student}, $\mathit{Student \wedge Italian \ent  \neg Pay\_Taxes}$ is in the rational closure of 
$K$,
as $\rf(\mathit{Student \wedge Italian})=0 < \rf( \mathit{Student \wedge Italian \wedge Pay\_Taxes})$ $=1$. 
Similarly,
 $\mathit{Employee \wedge Student \wedge Italian}$ $\mathit{ \ent   Pay\_Taxes}$ is in $\overline{K}$.

It is worth noticing that a strongly related construction has been proposed by Pearl \cite{PearlTARK90} with his notion of 1-entailment, originating from a probabilistic interpretation of conditionals within the well-established System Z.


For what concerns the semantics of the rational closure,
Lehmann and Magidor   
have developed a model-theoretic semantics for the rational closure \cite{whatdoes}, 
showing that, given a well-founded preferential consequence relation $P$\footnote{\color{black} A preferential consequence relation is well-founded \cite{whatdoes} 
if the strict ordering relation $<$ on formulas it defines is  well-founded. Kraus, Lehmann and Magidor \cite{KrausLehmannMagidor:90} have shown that any preferential consequence relation defines a strict ordering on formulas by: $\alpha < \beta$ iff $\alpha \vee \beta \ent \alpha$ and $\alpha \vee \beta \not \ent \beta$.} and any well-founded preferential model $W$ defining it, a ranked model can be constructed 
by ``letting all the states of $W$ sink as low as they can respecting the order of $W$". The resulting ranked model provides a characterization of the rational closure of $P$.
%
 For 
knowledge bases over a finite language, a characterization of the rational closure has been developed by Booth and Paris 
by a simple model-theoretic construction  \cite{BoothParis98}, 
that has been further extended by the same authors to provide a semantic characterization of the rational closure of a knowledge base also including  negative assertions (i.e., assertions of the form $A \not \ent B$).
Furthermore, Pearl  has introduced a notion of minimal ranking function,  
showing the existence of a unique minimal Z ranking of a consistent knowledge base \cite{PearlTARK90}. 
 In the following, we will report the semantic characterization of  rational closure
as formulated  by Giordano et al. \cite{AIJ15}, given in terms of {\em minimal canonical} ranked models. 
In such models,  the rank of worlds is minimized to make each world as normal as possible and, 
roughly speaking, the needed propositional valuations (or worlds) are all taken into consideration (by the canonical world requirement).
While two alternative minimization criteria are considered therein,
here we will just recall  the notion of  {\em fixed interpretations minimal semantics}
($\mathit{FIMS}$), 
 %
%
%
where only models with the same set of worlds $\WW$ and valuation function $v$ are compared. 
\normalcolor
\begin{definition}[Minimal ranked models]\label{Preference between models in case of fixed valuation}
Let $\emme = $$\langle \WW$$, <, v \rangle$ and $\emme' =
\langle \WW'$$, <', v' \rangle$ be two ranked models.
 $\emme$ is preferred to
$\emme'$ with respect to the fixed interpretations minimal
semantics (and we write $\emme <_{\mathit{FIMS}} \emme'$) if: \  $\WW = \WW'$, $v = v'$ and
\begin{quote}
 for all $x \in \WW$, $ k_{\emme}(x) \leq k_{\emme'}(x)$ and \\
there exists $x'\in \WW$ such that $ k_{\emme}(x') < k_{\emme'}(x')$.\footnote{The second condition is needed as we are defining a strict partial order among models, so to give preference to those models assigning lower ranks to worlds.} 
\end{quote}
 Given a knowledge base $K$, we say that
$\emme$ is a  minimal model of $K$ with respect to $<_{\mathit{FIMS}}$ if $\emme$ is a model of $K$ and there is no
$\emme'$ such that $\emme'$ is a model of $K$ and $\emme' <_{\mathit{FIMS}} \emme$. 
\end{definition}
In \cite{AIJ15} it was also shown that a notion of canonical model is needed
when reasoning about the (relative) rank of the propositions in a model of $K$: it is important to have them true in some world of the model, whenever they are consistent with the knowledge base. 

Given a knowledge base $K$ and a query $Q$,
let ${\mathit{ATM}_{K,Q}}$ be the set of all the propositional variables of $\mathit{ATM}$ occurring in $K$ or in the query $Q$,
and let $\elle_{K,Q}$ be the restriction of the language  ${\cal L}$ to the propositional variables in ${\mathit{ATM}_{K,Q}}$.

A truth assignment 
\normalcolor
$v_0:  \mathit{ATM_{K,Q}} \longrightarrow
\{true, false\}$ 
\normalcolor
is $\mathit{compatible}$ with $K$, if there is no
propositional formula $A \in \elle_{K,Q}$ such that $v_0(A) = true$ and
$K\models A \ent \bot$
(where $v_0$ is extended as usual to arbitrary propositional formulas over the language  $\elle_{K,Q}$). 
The following definitions of canonical model and minimal canonical model are reported from \cite{AIJ15}.


\begin{definition}[Canonical models]\label{canonical_model} 
A model $\emme=$$\sx \WW$$, <, v \dx$ satisfying a knowledge base $K$
is  {\em canonical} if 
for each $v_0$ compatible with $K$,  there
exists a world $w$ in $\WW$ such that, for all propositional
formulas  $B \in \elle_{K,Q}$, $\emme,$$w \models B$ if and only if $v_0(B) = true$.
\end{definition}

\begin{definition}[Minimal canonical ranked models]\label{def-minimal-canonical-model}
$\emme$ is a minimal canonical ranked model of $K$, 
if it is a canonical ranked model of $K$ and it is minimal with respect to $<_{\mathit{FIMS}}$ (see Definition \ref{Preference between models in case of fixed
valuation}) among the canonical ranked models of $K$.
\end{definition}
We define a notion of minimal entailment w.r.t. minimal canonical ranked models of $K$.
\begin{definition}[Minimal entailment]\label{def-minimal-entailment}
 $K$ {\em minimally entails} a formula $F$, and we
write $K \models_{min} F$, if $F$ is true in all the minimal canonical ranked models
of $K$.
\end{definition}
It has been shown that, for any satisfiable knowledge base, a finite minimal canonical ranked model exists (\cite{AIJ15}, Theorem 1),
and that minimal canonical ranked models are an adequate semantic counterpart of rational closure.
 Let $Min_{RC}(K)$ be the set of all the minimal canonical ranked models of $K$.
The correspondence between minimal canonical ranked models and rational closure is established by the following  theorem. 
\begin{theorem}[\cite{AIJ15}]\label{rat_closure_modelli_minimali}
Let $K$ be a knowledge base and $\emme \in Min_{RC}(K)$ be a minimal canonical ranked model of
$K$. For all conditionals $A \ent B$ over ${\cal L}$:
\begin{center}
 $\emme$ $\models A \ent B$ if and only if $A \ent B \in \overline{K}$,
\end{center}
where $ \overline{K}$ is the rational closure of $K$.
\end{theorem}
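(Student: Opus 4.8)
The plan is to reduce the statement to a single \emph{rank-correspondence lemma} and then read the equivalence off a semantic reformulation of conditional satisfaction in ranked models. Throughout, fix a minimal canonical ranked model $\emme = \sx \WW, <, v \dx$ of $K$.

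First I would establish the semantic reformulation. Since $\emme$ is a ranked model, $<$ is modular, so on the finite set $\WW$ the rank function $k_{\emme}$ induces a total preorder; hence $Min_{<}^{\emme}(A)$ is exactly the set of worlds satisfying $A$ whose rank equals $k_{\emme}(A)$. Consequently $\emme \modello A \ent B$ holds iff no world of rank $k_{\emme}(A)$ satisfies $A \wedge \nott B$, which (recalling that every $A$-world has rank at least $k_{\emme}(A)$) is equivalent to
\[
   k_{\emme}(A) < k_{\emme}(A \wedge \nott B) \quad\text{or}\quad A \text{ has no rank in } \emme ,
\]
the second disjunct covering the vacuous case $Min_{<}^{\emme}(A) = \vuoto$. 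Comparing this with Definition~\ref{def:rational closureDL}, the theorem becomes immediate once we know that the \emph{semantic} rank $k_{\emme}$ and the \emph{syntactic} rank $\rf$ agree on every formula.

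The core of the proof is therefore the lemma: for every formula $A$ over ${\cal L}$, $k_{\emme}(A) = \rf(A)$, with the convention that $A$ has no rank in $\emme$ precisely when $\rf(A) = \infty$. I would prove this by induction on $i = \rf(A)$, using the exceptionality characterization ``$\rf(A) > i$ iff $A$ is exceptional for $C_i$, i.e.\ $C_i \models_{\Pe} \top \ent \nott A$''. For the upper bound $k_{\emme}(A) \leq \rf(A)$ I combine \emph{canonicity} and \emph{minimality}: since $A$ is not exceptional for $C_{\rf(A)}$, there is a preferential model of $C_{\rf(A)}$ satisfying $A$ at a minimal world; by canonicity the corresponding valuation is realized by some world of $\emme$, and by $<_{\mathit{FIMS}}$-minimality that world sinks down to the layer matching $C_{\rf(A)}$, giving $k_{\emme}(A) \leq \rf(A)$. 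For the lower bound $k_{\emme}(A) \geq \rf(A)$ I use that $A$ is exceptional for every $C_j$ with $j < \rf(A)$: in the restriction of $\emme$ to the worlds of rank $\geq j$, which models $C_j$, the minimal worlds are exactly the rank-$j$ worlds, and $C_j \models_{\Pe} \top \ent \nott A$ forbids any of them from satisfying $A$; applying this for $j = 0, \dots, \rf(A)-1$ forces $k_{\emme}(A) \geq \rf(A)$. The case $\rf(A) = \infty$ is handled separately by showing $A$ is then incompatible with $K$, so no compatible valuation and hence (by canonicity) no world of $\emme$ satisfies $A$.

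The main obstacle is making the layer correspondence precise, namely that the restriction of $\emme$ to the worlds of rank $\geq i$ behaves as a minimal canonical model of $C_i$. This is exactly the point where canonicity (enough worlds are present so that a formula consistent at a given level is witnessed there) and $<_{\mathit{FIMS}}$-minimality (each world is as normal as the constraints of $K$ permit) must be used together, so that the inductive exceptionality argument can be transferred back and forth between the syntactic sequence $C_0 \supseteq C_1 \supseteq \cdots$ and the semantic stratification by $k_{\emme}$. Once both bounds on $k_{\emme}(A)$ are in place, the rank-correspondence lemma holds, and the reformulation in the first step closes the argument.
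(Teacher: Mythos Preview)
The paper does not prove this theorem: it is quoted verbatim from \cite{AIJ15} and used as an imported result, with no proof supplied here. So there is no ``paper's own proof'' to compare against. That said, the paper does indicate how the cited proof goes: immediately after the theorem it states that ``when $\rf(A)$ is finite, the rank $k_\emme(A)$ of a proposition $A$ in any minimal canonical ranked model of $K$ is equal to the rank $\rf(A)$ assigned by the rational closure construction,'' and later (proof of Proposition~\ref{prop:functor}) it attributes exactly this rank-correspondence to Proposition~13 in \cite{AIJ15}. Your reduction of the theorem to the lemma $k_{\emme}(A)=\rf(A)$ is therefore precisely the decomposition the original source uses, and your semantic reformulation of $\emme\models A\ent B$ in terms of $k_\emme(A)<k_\emme(A\wedge\neg B)$ is the standard one.

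Your sketch of the rank-correspondence lemma is along the right lines but the upper-bound step is underspecified. Saying that ``by canonicity the corresponding valuation is realized by some world of $\emme$, and by $<_{\mathit{FIMS}}$-minimality that world sinks down to the layer matching $C_{\rf(A)}$'' hides the actual work: you need to argue that there is a valuation $v_0$ compatible with $K$ that satisfies $A$ together with the materialization $\tilde C_i$ of all defaults of rank $\geq i$ (this is what non-exceptionality of $A$ for $C_i$ gives, via the equivalence $\tilde C_i\models\alpha$ iff $C_i\models_{\Pe}\top\ent\alpha$), and then that any world realizing such a valuation can be assigned rank $\leq i$ without violating $K$, so minimality forces it down. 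Your lower-bound argument via the restricted models of $C_j$ is correct in spirit; the cleaner formulation, used in \cite{AIJ15}, is that in a minimal canonical model every world of rank $j$ satisfies $\tilde C_j$ (their Proposition~2, also invoked in this paper in the proof of Proposition~\ref{MP_stronger _than_RC}), whence a world satisfying $A$ cannot have rank $<\rf(A)$. With these two refinements your outline becomes a complete proof matching the one in \cite{AIJ15}.
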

Furthermore, when $\rf(A)$ is finite, the rank $k_\emme(A)$ of a proposition $A$ in any minimal canonical ranked model of $K$ is equal to the rank $\rf(A)$ assigned by the rational closure construction. Otherwise, $\rf(A)= \infty$ and  proposition $A$ is not satisfiable in any ranked model of $K$ (in any ranked model of $K$, $A$ has no rank).

Observe that, by Theorem \ref{rat_closure_modelli_minimali}, the set of conditionals minimally entailed from $K$ coincide with the set of conditionals true in any (arbitrarily chosen) minimal canonical ranked model $\emme$ of $K$. 
In the following, we will restrict our consideration to the {\em finite} minimal canonical models of the knowledge base $K$ (which, as said above, always exist when $K$ is consistent), and we denote their set by $Min_{RC}(K)$. 
As mentioned above, Booth and Paris \cite{BoothParis98} developed canonical model constructions for the rational closure and  Pearl \cite{PearlTARK90}  proved the existence of a unique minimal Z ranking for a consistent (finite) knowledge base.
 As a difference, in the representation theorem above, we 
consider the set $ Min_{RC}(K)$ of all the (finite) minimal canonical models of a knowledge base $K$.
Although the models in $Min_{RC}(K)$ may differ as concerns the worlds in $\WW$ and their propositional valuations, 
as a consequence of Theorem 1 above, 
all of them must satisfy exactly the same positive conditionals $A \ent B$ over ${\cal L}$, i.e., the conditionals belonging to the rational closure of $K$, and falsify all the others. 
This means that 
the ranked models $\emme \in Min_{RC}(K)$ all define the same rational consequence relation,
corresponding to  rational closure, as well as to the set of conditional assertions $A \ent B$ which are minimally entailed by $K$, according to Definition \ref{def-minimal-entailment}.\footnote{ The situation is not dissimilar from the one considered by  Lehmann and Magidor in  \cite{whatdoes}, where, for a well-founded preferential relation $P$, a ranked model defining the rational closure of $P$ is built starting from {\em any} preferential model defining $P$. }


Given the above mentioned canonical model constructions, in the following when needed we will feel free to refer to any (arbitrarily chosen)  model in $Min_{RC}(K)$ as {\em the} minimal canonical model of  rational closure. 
%


\begin{example} \label{es:modello_canonico}
Let us consider again the knowledge base $K$ in Example \ref{example-Student}.
We have seen that in the rational closure of $K$ conditionals  $\mathit{Student \ent \neg Pay\_Taxes}$, and  $\mathit{Student \ent}$ $\mathit{ Young}$ have rank $0$,
while  $\mathit{Employee \wedge Student \ent  Pay\_Taxes}$ has rank $1$.

Let us represent a world as the set of the propositional variables it satisfies (e.g.,  $\{ s, y\}$ represents the world in which $\mathit{Student}$ ($s$) and $\mathit{Young}$ ($y$) are true, 
while  $\mathit{Em}$- $\mathit{ployee}$ ($e$) and $\mathit{Pay\_Taxes}$ ($p$) are false).
We can describe a ranked model partitioning the worlds according to their rank. 
The following is a minimal canonical model for $K$ (let us call it $\emme$):

Rank $0$: \ $\emptyset$, $\{ p\}$, $\{ y \}$, $\{ p, y \}$,  $\{ e\}$, $\{ e,y\}$, $\{ e,p\}$, $\{ e,y,p\}$, $\{ s,y\}$

Rank $1$: \   $\{ s,e,p\}$, $\{ s,e,p,y\}$,  $\{ s\}$, $\{ s,p\}$, $\{ s,p,y\}$

Rank $2$: \   $\{ s,e\}$, $\{ s,e,y\}$

\noindent
The worlds with rank $0$ do not violate any conditional in $K$;  those with rank $1$ violate conditionals with rank $0$; and
those with rank $2$ violate conditionals with rank $1$.

In a minimal model each world has the least rank it may have.
For instance, if  $\{ s,p,y\}$ were assigned rank $2$, rather than $1$, the resulting interpretation would still be a ranked model of $K$,
but not a minimal one.
To get a minimal model from a given ranked model of $K$, one lets  ``worlds  sink as low as they can" \cite{whatdoes}, provided the resulting structure still satisfies $K$.

To establish whether conditional $\mathit{Employee \wedge Student } \ent \mathit{ Young}$ is satisfied in $\emme$,
one has to consider all the minimal worlds satisfying the antecedent $\mathit{Employee \wedge}$ $\mathit{ Student }$ and verify that $\mathit{ Young}$ holds in all of them.
As the minimal worlds satisfying $\mathit{Employee \wedge Student }$ are $\{ s,e,p\}$, $\{ s,e,p,y\}$, and in the first one $\mathit{ Young}$ is false,
the conditional $\mathit{Employee \wedge Student \ent Young}$ is not satisfied in $\emme$. 
This is enough to conclude that the conditional is not minimally entailed from $K$. 
Similarly, conditional $\mathit{Employee \wedge Student } \ent \mathit{ \neg Young}$ is not satisfied in $\emme$ and is not minimally entailed from $K$.
This is in agreement with the fact that both these conditionals are not in the rational closure of $K$. 

Let us observe that, in the general case, a ranked model may contain more than one world with a given propositional valuation, as the valuation function $v$ in the model may assign the same propositional interpretation to more than one world. However, worlds with the same valuation will have the same rank in all minimal canonical models.
\end{example}
\normalcolor
In the last example, nothing can be concluded about the typical employed students being young or not.
Employed students do not ``inherit" any of the more general defeasible properties of students, not even the property that students are normally young. 
Indeed, the rational closure ``does not provide for inheritance of generic properties to exceptional subclasses" \cite{Lehmann95}.

{\color{black} In particular, the rational closure does not satisfy the desirable property called by Lehmann  {\em the presumption of independence}:
``even if typicality is lost with respect to one consequent, we may still presume typicality with respect to another, unless there is reason to the contrary" \cite{Lehmann95}.}
This and other desirable properties led to the definition of the  {\em lexicographic closure}  as a ``uniform way of constructing a rational superset of the rational closure" \cite{Lehmann95},
thus strengthening the rational closure but still providing a rational consequence relation.

 \section{From  Lexicographic closure to  MP-closure} \label{sez:Lex-closure}
 
To overcome the weakness of rational closure, Lehmann introduced the notion of  {\em lexicographic closure} \cite{Lehmann95}, which strengthens the rational closure by allowing, roughly speaking, a class  to inherit as many as possible of the defeasible properties of more general classes,  
giving preference to the more specific properties.
In the example above, the property of students being young should be inherited by employed students, as it is consistent with all other default properties of  employed students (i.e., with default 3) and, by ``presumption of independence" \cite{Lehmann95}, 
even if employed students do not inherit from students the typical property of not paying taxes, they can still inherit other typical properties of students such as being young.

%

Let us recall the definition of the lexicographic closure from  \cite{Lehmann95}.

Let $K$ be a finite knowledge base (a set of defaults) and let $k$ be the {\em order} of $K$, i.e., the least finite $i$ such that $C_i-C_{i+1}=\emptyset$
(notice that $C_i-C_{i+1}$ is the set of defaults with rank $i$).
The order $k$ is such that there is no default in $K$ with a finite rank equal to $k$ or higher than $k$ but, if $k>0$, there is at least one default with rank $k-1$. 

In order to compare alternative subsets of defaults in $K$, Lehmann introduces a notion of {\em seriousness ordering} $\prec$ among subsets of defaults in $K$.

\begin{definition}[Seriousness ordering \cite{Lehmann95}] \label{def:seriousness_ord_LC}
Let $K$ be a set of defaults and $k$ its order.
 To every subset $D \subseteq K$ may be associated
a $k+1$-tuple of numbers $\langle n_0, n_1, \ldots, n_k \rangle_D$,  
%
where:  
\begin{itemize}
\item[-]
 $n_0$ is the number of defaults in $D$ with no rank 
 in the rational closure of $K$;
\item[-]
$n_i$ (for $1 \leq i \leq  k$) is the number of defaults  in $D$ with rank $k-i$ in the rational closure of $K$. 
 \end{itemize}
We shall order the subsets of $K$ by the natural lexicographic ordering on their associated tuples.
This is a strict modular partial ordering, that will be denoted by $\prec$ (the seriousness ordering).
\end{definition}
\normalcolor
%
%
Considering again the knowledge base $K$ of Example  \ref{example-Student} containing the conditionals (defaults):
\begin{quote}
 1. $\mathit{ Student \ent \neg Pay\_Taxes}$\\
 2. $\mathit{Student \ent  Young}$\\
 3. $\mathit{ Employee \wedge Student \ent Pay\_Taxes}$
\end{quote}
the subset of defaults $D=\{ \mathit{Student \ent }$ $\mathit{   Young, \;Employee \wedge Student \ent Pay\_Taxes} \}$, that  we will denote synthetically as $D=\{2,3\}$,
for instance,  is associated with the tuple $\langle n_0, n_1,  n_2 \rangle_D$ =$\langle 0, 1, 1  \rangle_D$ meaning that $D$ contains: no default with no rank ($n_0=0$), 
one default (default $3$)  with rank $1$ (i.e., $n_1=1$ default with rank $k-1=1$) and one default  (default $2$)  with rank $0$ (i.e., $n_2=1$ default with rank $k-2=0$).
The set $B=\{3\}$ is instead associated with the tuple $\langle 0, 1, 0  \rangle_B$, meaning that $B$ contains: no default with no rank, 
one default (default $3$)  with rank $1$ and no default  with rank $0$.
The set $D=\{2,3\}$ is more serious than $B=\{3\}$ as  $\langle 0, 1, 0  \rangle_B$ $\prec \langle 0, 1, 1  \rangle_D$.

The modular order $\prec$ among sets of defaults gives preference to those sets of defaults containing more specific defaults.
Notice that, for $i\neq 0$, the numbers $n_i$ in the tuple $\langle n_0, n_1, \ldots, n_k \rangle_D$ are in decreasing order w.r.t. the rank of the corresponding defaults, and the higher is the rank of a default, the more specific is the default. \normalcolor


Lehmann defines a notion of  {\em basis for a formula $A$ in a knowledge base $K$}. Let the material counterpart of $D$, denoted by $\tilde{D} $, be the set containing a material implication $A \ri B$, for each conditional $A \ent B$ in $ D$.

\begin{definition}[Basis \cite{Lehmann95}]
A  {\em basis} for $A$ is a set $D$ of defaults in $K$ such that $A$ is consistent with $\tilde{D} $,
the material counterpart of $D$,
and $D$ is maximal w.r.t. the seriousness ordering $\prec$ for this property.
\end{definition}
In the example above, the set of  defaults $D=\{2,3\}$ forms a  basis for  $\mathit{Employee \wedge}$ $\mathit{ Student}$,  as its materialization 
$\tilde{D} =\{ \mathit{ Student \rightarrow Young, \;  Employee  \wedge Student \rightarrow}$ $\mathit{ Pay\_Taxes} \}$ is consistent (in the propositional calculus) with 
$\mathit{Employee \wedge Student}$, and $D$ is maximal w.r.t. the seriousness ordering  among the sets having this property.
 $D$ is actually  the unique basis for $\mathit{Employee \wedge Student}$.

\begin{definition}[Lexicographic closure \cite{Lehmann95}]
A conditional $A \ent B$ {\em is in  the lexicographic closure of $K$} 
if $\tilde{D} \cup {A} \models B$, for any basis $D$ for $A$.
\end{definition}
In the example, $\mathit{Employee \wedge Student \ent Young}$ belongs to the lexicographic closure of $K$, 
as $\tilde{D} \cup \{ \mathit{(Employee \wedge Student)} \} \models Young$, for the unique basis $D$ for 
 $\mathit{Employee}$ \linebreak $\mathit{ \wedge Student}$. This is what is expected, since the property of typical students of being young is inherited by employed students by presumption of independence.

In the following we will consider two variants of the knowledge base in Example  \ref{example-Student} 
to illustrate the lexicographic closure and, later to describe its common points and differences with respect to MP-closure, that will be introduced in  Section \ref{sez:BPsem}.
\begin{example} \label{example-Student-new}
Let $K'$ be the knowledge base containing the conditionals:
\begin{quote}
1. $\mathit{ Student \ent \neg Pay\_Taxes}$\\
2. $\mathit{Student \ent  Bright}$\\
3. $\mathit{Employee \ent   Pay\_Taxes}$\\
4. $\mathit{Employee \wedge Student  \ent  Busy}$
\end{quote}
Here, Students and Employees have a conflicting property: students normally do not pay taxes,  while employees normally do pay taxes.
Furthermore, students are normally bright and  employed students are normally busy. 

According to  rational closure,  the formulas  $\mathit{Student}$ and $\mathit{Employee}$ have both rank $0$,
while the formula  $\mathit{Employee \wedge Student}$ has rank $1$.
Therefore, conditionals $1, 2, 3$ have rank $0$, while conditional $4$ has rank $1$.
 Furthermore, the propositions $\mathit{Employee \wedge Student \wedge  Pay\_}$ $\mathit{Taxes}$  and  $\mathit{Employee \wedge Student \wedge  \neg Pay\_Taxes}$ have both rank $1$  in the rational closure.
As a consequence, the conditional $\mathit{Employee \wedge}$  $\mathit{ Student \ent  Pay\_Taxes}$ 
does not belong to  rational closure of $K'$, since $\rf(\mathit{Em}$- $\mathit{ployee \wedge Student})=$ $\rf(\mathit{Employee \wedge Student \wedge \neg Pay\_Taxes})=1$,
and similarly for  conditional $\mathit{Employee \wedge Student \ent  \neg Pay\_Taxes}$.
The problem is that the same holds for  conditional $\mathit{Employee \wedge Student \ent  Bright}$, which is also not in the rational closure of $K'$  (since $\rf(\mathit{Employee \wedge Student})= 1 =$ $\rf(\mathit{Employee \wedge}$  $\mathit{Student \wedge}$  
$\mathit{ \neg Bright})$). \normalcolor
However, we would like to conclude it, as the property that typical students are bright does not conflict with other properties of typical employees and of typical employed students.


In this example, there are two bases for $\mathit{Employee \wedge Student }$ in the lexicographic closure of $K'$: $D=\{1, 2, 4\}$ and $B=\{2, 3, 4\}$. They represent two alternative scenarios, 
the first one in which typical employed students inherit from typical students the property of not paying taxes, and the second one 
in which typical employed students inherit from typical employees the property of paying taxes.
Observe that $D= \{1, 2, 4\}  $ and  $B= \{2, 3, 4\}$ are not comparable with each other, i.e., none of them is {\em more serious} than the other
(that is, $D \not \prec B$ and $B \not \prec D$),
as the tuples $\langle 0, 1, 2  \rangle_D$ and $\langle 0, 1, 2  \rangle_B$, associated with $D$  and $B$ (respectively), are not comparable
in the lexicographic order.

Both the bases contain the default that normally students are bright and, as intended, this property extends to employed students.
For this reason, the conditional $\mathit{Employee \wedge}$ $\mathit{ Student  \ent  Bright}$ is in the lexicographic closure of $K'$, since
$\tilde{D} \cup \mathit{Em}$- $\mathit{ployee} \wedge \mathit{Student} \models \mathit{ Bright}$ and 
$\tilde{B} \cup {\mathit{Employee \wedge Student}} \models \mathit{ Bright}$.
%
 On the contrary, the lexicographic closure neither contains the conditional  $\mathit{Employee \wedge Student \ent Pay\_}$ $\mathit{Taxes}$ 
 (as  $\tilde{D} \cup {\mathit{Employee \wedge Student }} \not \models \mathit{ Pay\_Taxes}$)
 nor the conditional $\mathit{Employee}$ $\mathit{ \wedge Student \ent \neg Pay\_Taxes}$
  (as  $\tilde{B} \cup {\mathit{Employee \wedge Student }} \not \models \mathit{ \neg Pay\_Taxes}$) \normalcolor
 i.e., each of these conditionals is falsified in one of the two bases of $K'$
 (which are conflicting).

\end{example}
The following variant of  Example  \ref{example-Student-new} 
has a single basis and suggests that the lexicographic closure is sometimes too bold.
\begin{example}  \label{example-differenza_MC_LC}
Let the knowledge base $K''$ contain the following conditionals:
\begin{quote}
1. $\mathit{ Student \ent \neg Pay\_Taxes}$\\
2. $\mathit{Student \ent  Young}$\\
3. $\mathit{Employee \ent  \neg Young \wedge  Pay\_Taxes}$\\
4. $\mathit{Employee \wedge Student  \ent  Busy}$
\end{quote}
 As in previous example, according to the rational closure,  the formulas  $\mathit{Student}$ and $\mathit{Employee}$ have both rank $0$,
while the formula  $\mathit{Employee \wedge Student}$ has rank $1$. Indeed, formula $\mathit{Employee \wedge Student}$ is exceptional for $C_0=K''$, as $C_0 \models_{\Pe} \top \ent \neg ( \mathit{Employee \wedge Student})$. The intuition is that typical employed students cannot be typical students as well as typical employees, 
as typical students and typical employees have conflicting properties (concerning paying taxes).
Therefore, defaults 1, 2 and 3 have rank $0$ in the rational closure, while default 4 has rank $1$. \normalcolor

As a difference with the previous example, the  lexicographic closure has 
a single basis, $D=\{ 1,2, 4\}$.
Indeed,  the two sets of defaults $D=\{1,2, 4 \}$ and  $B=\{ 3, 4 \}$, whose materializations are both consistent with $\mathit{Employee \wedge Student}$,
have the associated tuples $\langle 0, 1, 2  \rangle_D$ and  $ \langle 0, 1, 1  \rangle_B$
and, therefore, $B$ is less serious than $D$ ($B \prec D$).
 As a consequence, there is a single basis $D$ for $\mathit{Employee \wedge Student}$, and we can conclude that typical employed students are not only busy, but (like typical students) they are also young and do not pay taxes.
The conditional
\begin{align} \label{conditional_in_LC-MP}  
\mathit{Employee \wedge Student  \ent  Young  \wedge \neg  Pay\_Taxes }
\end{align}
is in the lexicographic closure of $K''$,
 as
 $\tilde{D} \cup \{ \mathit{Employee \wedge Student\}  \models  Young  \wedge \neg }$ $\mathit{ Pay\_Taxes}$, and $D$ is the only basis for $K''$.
\end{example}
The result above is in line with the choice of  the lexicographic closure that, in the case of contradictory defaults with the same rank, as many defaults as possible  should be satisfied. 
However, the reason to accept that typical employed students are not young and pay taxes (rather than the converse) may be questioned
and, in this last example, the lexicographic closure appears to be {\em too bold}. 
Indeed, the conclusion that normally employed students are young and do not pay taxes, i.e., conditional (\ref{conditional_in_LC-MP}), here follows from the accidental fact that the properties of Employees are expressed by a single default, while the properties of Students are expressed by two defaults.
Notice that, if we replace default $3$ with the two defaults 
  3.1 $\mathit{Employee \ent \neg Young}$ and 
 3.2 $\mathit{Employee \ent  Pay\_Taxes}$,
there would be two bases in the lexicographic closure, and one would not be allowed to conclude any more that typical employed students are young and do not pay taxes. 
As observed by Lehmann, the  lexicographic closure construction 
is ``extremely sensitive to the way defaults are presented" and ``the way defaults are presented is important" \cite{Lehmann95}.
\normalcolor

In the following section, we will consider a different notion of closure, the MP-closure,  that departs 
from the assumption of  the lexicographic closure for which, in case of contradictory defaults with the same rank, as many defaults as possible should be satisfied.
In particular, in Example \ref{example-differenza_MC_LC}, it considers both the sets of defaults $D$ and $B$ to be maximally serious,
and it does not conclude conditional (\ref{conditional_in_LC-MP}).
Although also the MP-closure is somewhat syntax dependent, in this case, differently from the lexicographic closure, it treats in the same way the two different formulations of the knowledge base $K''$ above  (the one with default 3, and the other one with the two defaults 3.1 and 3.2). 
We will show that the MP-closure is stronger than the rational closure but weaker than the lexicographic closure,
%
%
and it defines a preferential consequence relation,
rather than a rational consequence relation,
as well as a more cautious notion of entailment (with respect to the lexicographic closure), 
that does not satisfy the property of Rational Monotonicity.


Following the pattern in \cite{Lehmann95}, in Section \ref{sez:BPsem}, we present both a characterization of the MP-closure in terms of maxiconsistent sets and a model-theoretic construction.
 In Section \ref{sec:rational_relation}, we exploit the idea,
proposed by Lehmann and Magidor in their semantic characterization of the rational closure \cite{whatdoes},
to transform a well-founded preferential model into a ranked model, 
and apply it to the models of the MP-closure to define a rational consequence relation which is a superset of the MP-closure. 
\color{black} We see that such a rational consequence relation, as lexicographic closure, is another rational superset of the rational closure, and is neither stronger nor weaker than the lexicographic closure.
\normalcolor


 \section{The MP-closure revisited} \label{sez:BPsem}

The multipreference closure  (MP-closure, for short), was preliminarily 
introduced in the technical report \cite{Multipref_arXiv2018} as a construction 
which soundly approximates the multipreference semantics proposed by Gliozzi \cite{GliozziNMR2016,GliozziAIIA2016}  for the description logic $\alc$ with typicality, 
thus defining a refinement of the rational closure of $\alc$.
This semantics was originally proposed for separately reasoning about the inheritance of different properties and, hence, to provide a solution to the drowning problem related to rational closure.

The interest of the MP-closure construction goes beyond description logics and 
its definition and semantics can be reconstructed and significantly simplified in the context of propositional logic.
 MP-closure can be regarded as the natural 
variant of lexicographic closure, 
if we are ready to abandon 
 the assumption that the {\em number} (rather than the set) of defaults with the same rank matters
(as illustrated by Example \ref{example-differenza_MC_LC} in the previous section),
an alternative route already considered but not explored by Lehmann \cite{Lehmann95}.
%
In this section 
we reformulate the MP-closure construction from \cite{Multipref_arXiv2018} in the propositional setting and, then, we focus on its semantics, its properties 
and relations with the lexicographic closure. 
In Section \ref{sec:further_issues}, we will investigate the relationships of the MP-closure with  {\em system ARS} \cite{IsbernerRitterskamp2010}, with Brewka's Basic Preference Descriptions \cite{Brewka04}, with the Relevant Closure \cite{Casini2014}, with Geffner and Pearl's Conditional entailment \cite{Geffner1992} and other approaches.


\subsection{The MP-closure construction} \label{sec:construction}

 For a given finite knowledge base $K$, with order $k$, 
in the following, we exploit the MP-closure construction to define the plausible consequences of a formula $A$ with a finite rank $\rf(A)$.
Observe that, for any formula $A$ with infinite rank, i.e., such that $\rf(A)=\infty$, the conditional $A \ent C$ is in the rational closure of $K$, for any $C$.

Given a subset $D$ of conditional assertions in $K$ (a set of defaults), we let $D_i$ be the set of defaults in $D$ with finite rank $i \leq k$, and
$D_\infty$ be the set of defaults in $D$ with no rank. 
The tuple $\langle D_{\infty}, D_k, \ldots, D_1,D_0 \rangle_D$, associated with $D$,
defines a partition of $D$, according to the ranks of the defaults in the rational closure of $K$.

We define a preference relation $\prec^{MP}$ among sets of defaults, by comparing the tuples associated to these sets according to
the natural lexicographic order on such tuples, defined inductively as follows. 
 Given  two tuples  $\langle X_n, \ldots, X_1 \rangle$ and $\langle X'_n, \ldots, X'_1  \rangle$ of sets of defaults in $K$, we let:
\begin{align*}
\langle X_1 \rangle \ll \langle  X'_1  \rangle  & \mbox{ iff  }  X_1 \subset X'_1   \\
\langle X_n, \ldots, X_1 \rangle \ll \langle X'_n, \ldots, X'_1  \rangle  &\mbox{ iff  }  X_n \subset X'_n  \mbox{ or  }  \\
&( X_n = X'_n \mbox{ and } \langle X_{n-1}, \ldots, X_1 \rangle \ll \langle X'_{n-1}, \ldots, X'_1 \rangle )
\end{align*}
As  the (strict) subset inclusion relation $\subset$ among sets is a strict partial order, the lexicographic order $\ll$ on the tuples of sets of defaults is a strict partial order as well. \normalcolor 
This lexicographic order provides a new seriousness ordering among sets of defaults.

\begin{definition}[MP-seriousness ordering]\label{MP-order}
$D \prec^{MP} B$ ($D$ is less serious than $B$ w.r.t. the MP-seriousness ordering) iff
\begin{center}
$\langle D_{\infty}, D_k, \ldots, D_1,D_0 \rangle_D$ $\ll \langle B_{\infty}, B_k, \ldots, B_1,B_0 \rangle_B$.
\end{center}
\end{definition}
Notice that the relation $\prec^{MP}$ defines a seriousness ordering among sets of defaults, which is different from the seriousness ordering used by the lexicographic closure, in that here we pay attention to which defaults are in the $D_i$'s, while in lexicographic closure to the cardinality of the sets $D_i$'s.
In fact, the corresponding tuple associated with $D$ by the lexicographic closure would be $\langle \mid D_{\infty}\mid , \mid D_k\mid , \ldots, \mid D_1\mid ,\mid D_0 \mid \rangle_D$ (see Section \ref{sez:Lex-closure}). 
As  the lexicographic order $\ll$ on tuples of sets of defaults is a strict partial order, $ \prec^{MP} $ is  a strict partial order as well. 
This partial order is not necessarily modular and we will see at the end of this section that the MP-closure does not satisfy Rational Monotonicity.


The difference of the seriousness ordering between lexicographic closure and the MP-closure has an impact on the kind of conclusions one draws in the two cases, as we will see below.
Let us first give a characterization of the MP-closure in terms of bases.
\begin{definition}[MP-basis]
Given a finite knowledge base $K$, and a formula $A$ with finite rank, a set of defaults $D \subseteq K$ is a basis for $A$
if $A$ is consistent with  $\tilde{D}$ (the material counterpart of $D$) and $D$ is maximal w.r.t. the MP-seriousness ordering for this property.
\end{definition}
Notice that the definition of a basis is  exactly the same as in the lexicographic closure \cite{Lehmann95}, but for the fact that it uses a different lexicographic ordering.

\begin{definition}[MP-closure]\label{def:MP-closure}
A default $A \ent B$ is in {\em $MP(K)$, the MP-closure of a knowledge base $K$},  if for all the MP-bases $D$ for $A$:
$$ \tilde{D} \cup \{ A\} \models B,$$%
where $\models$ is logical consequence in the propositional calculus and $\tilde{D}$ is the materialization of $D$. 
\end{definition}
Consider again Example   \ref{example-differenza_MC_LC},
the two sets of defaults $D=\{1,2, 4 \}$ and  $B=\{ 3, 4 \}$ are now  incomparable using the $\prec^{MP}$ preference relation,
as the tuples associated to the sets $D$ and $B$ are respectively:
$\langle \emptyset, \{4\},\{1,2\}  \rangle_D$ and  $ \langle \emptyset, \{4\},\{3\} \rangle_B$
and neither $\langle \emptyset, \{4\},\{1,2\}  \rangle_D$  $ \ll \langle \emptyset, \{4\},\{3\} \rangle_B$
nor $ \langle \emptyset, \{4\},\{3\} \rangle_B$ $ \ll \langle \emptyset, \{4\},\{1,2\}  \rangle_D$  
(on the contrary, as we have seen above, in the lexicographic closure, $D$ is more serious than $B$).
Thus, there are two MP-bases for $\mathit{Employee \wedge Student}$, namely $D=\{1,2, 4 \}$ and  $B=\{ 3, 4 \}$.
Therefore, neither  $\mathit{Employee \wedge Student  \ent  }$ $\mathit{Young }$ nor $\mathit{Employee \wedge }$ $\mathit{ Student  \ent  \neg Young }$ are in the MP-closure of $K''$.
In this example, the MP-closure is less bold than the lexicographic closure, which, as we have seen, includes the default $\mathit{Employee \wedge }$ $\mathit{Student  \ent  Young }$, as $D$ is the only basis for $\mathit{Employee \wedge Student}$ in the lexicographic closure.

Concerning  Examples  \ref{example-Student} and  \ref{example-Student-new} above, it is easy to see that in both of them the MP-closure has the same bases as the lexicographic closure, as well as the same consequences.
 As a further example on which the MP-closure allows weaker conclusions than the lexicographic closure,
consider the following one, which is an instance of the  ``evidence comparison" example from \cite{Weydert03},
and shows that, as a difference with respect to the  lexicographic closure (and with respect to System JLZ  \cite{Weydert03}),
in the MP-closure the ``weight of independent reasons" supporting some conclusion is not taken into account.

\begin{example}\label{exa:evidence_comparison}
Consider the knowledge base $K$ containing the conditionals: 
\begin{quote}
1. $\mathit{ Olympic\_Swimmer \ent Young}$\\
2. $\mathit{Adult \ent  \neg Young}$\\
3. $\mathit{Employee \ent \neg Young}$.
\end{quote}
meaning that normally Olympic  swimmers are young, that normally employees are not young and that normally adults are not young.
The three conditionals 1, 2 and 3 have rank 0 in the rational closure.
The conditionals
\begin{quote}
$\mathit{ Olympic\_Swimmer \wedge Adult \wedge Employee \ent  Young}$  and \\
$\mathit{ Olympic\_Swimmer \wedge Adult \wedge Employee \ent  \neg Young}$ 
\end{quote}
do not belong to the MP-closure of $K$.
Indeed, there are two MP-bases for $A=\mathit{ Olympic\_Swimmer \wedge Adult \wedge Employee}$, namely $D=\{1 \}$ and  $B=\{ 2, 3 \}$ (which are incomparable using the $\prec^{MP}$ preference relation), and
 $ \tilde{D} \cup \{ A\} \models \mathit{Young}$, while $ \tilde{B} \cup \{ A\} \models \mathit{\neg Young}$.

On the contrary, the conditional $\mathit{ Olympic\_Swimmer \wedge Adult \wedge Employee \ent}$ 
 \linebreak $\mathit{ \neg Young}$ belongs to the lexicographic closure of $K$,
as the only basis for $A$ in the lexicographic closure is $B=\{ 2, 3 \}$,
which is preferred to $D$ as $B$ contains two defaults with rank 0, while $D$ just one.
In this example, the lexicographic closure (as System JLZ) accepts the conditional $A \ent \mathit{\neg Young}$, as it is supported by two independent defaults (2 and 3),
while $A \ent \mathit{ Young}$ is only supported by default 1.

Notice that 
in the lexicographic closure the same result would be obtained if a fourth conditional  $\mathit{Employee \ent Adult}$ were added in $K$
and,  clearly, in this case, conditionals 2 and 3 would not be independent evidences for $\mathit{\neg Young}$.
\end{example}
\normalcolor

It can be proved that the MP-closure is stronger than the rational closure, but weaker than the lexicographic closure.
We prove the second result (Corollary  \ref{cor:MP-LC}), while postponing the proof of the first one until the introduction of the semantics of the MP-closure in Section \ref{sec:semantics}.
 In order to prove Corollary 2, let us first prove the following proposition and Corollary  \ref{corollary:bases}. We show that $ \prec^{MP}$ is  at least as coarse as $ \prec$.   

\begin{proposition} \label{Prop:coarser}
$ \prec^{MP}$ is at least as coarse as $ \prec$, that is, for all the sets of defaults $D$ and $B$, if $D \prec^{MP} B$ then $D \prec B$. 
\end{proposition}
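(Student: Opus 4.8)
The plan is to exploit the fact that the two orderings are built from exactly the same data: for a set of defaults $D$ the MP-closure associates the tuple of \emph{sets} $\langle D_\infty, D_k, \ldots, D_1, D_0 \rangle_D$, whereas the lexicographic closure associates the tuple of \emph{natural numbers} $\langle |D_\infty|, |D_k|, \ldots, |D_1|, |D_0| \rangle_D$, which is precisely the componentwise image of the former under the cardinality map $|\cdot|$. Both orderings are then obtained by running the same lexicographic construction over aligned positions, the only difference being that $\prec^{MP}$ compares components by strict inclusion $\subset$ while $\prec$ compares them by $<$ on their cardinalities. The whole argument rests on the elementary observation that, since $K$ is finite and hence every $D_i$ and $D_\infty$ is finite, $X \subset Y$ implies $|X| < |Y|$, while $X = Y$ implies $|X| = |Y|$.

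First I would prove the statement by induction on the length $n$ of the tuples, following the inductive definition of the lexicographic order $\ll$ given just above the proposition. In the base case of a single component, $\langle X_1 \rangle \ll \langle X'_1 \rangle$ means $X_1 \subset X'_1$; by the observation above this yields $|X_1| < |X'_1|$, which is exactly the ordering of the corresponding one-element cardinality tuples underlying $\prec$. For the inductive step, assume $\langle X_n, \ldots, X_1 \rangle \ll \langle X'_n, \ldots, X'_1 \rangle$. By definition of $\ll$ there are two cases. If $X_n \subset X'_n$, then $|X_n| < |X'_n|$, so the cardinality tuples already differ in the right direction at their most significant component. Otherwise $X_n = X'_n$ and the tails satisfy $\langle X_{n-1}, \ldots, X_1 \rangle \ll \langle X'_{n-1}, \ldots, X'_1 \rangle$; then $|X_n| = |X'_n|$, and the induction hypothesis applied to the tails shows that the tail cardinality tuples are lexicographically ordered, whence the full cardinality tuples are ordered as well. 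Instantiating the generic tuples with $\langle D_\infty, D_k, \ldots, D_0 \rangle_D$ and $\langle B_\infty, B_k, \ldots, B_0 \rangle_B$ then turns $D \prec^{MP} B$ into $D \prec B$, which is the claim.

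I do not expect any genuine obstacle: the mathematical content is simply the monotonicity of cardinality under strict inclusion, transported through the lexicographic construction. The two points deserving a word of care are the bookkeeping that the $\prec$-tuple is the componentwise cardinality of the $\prec^{MP}$-tuple (so the two lexicographic comparisons run over the same positions in the same order), and the explicit appeal to the finiteness of $K$ that licenses the step $X \subset Y \Rightarrow |X| < |Y|$. With these in hand the induction is routine, and it is worth noting that the converse fails, since two distinct sets of the same cardinality are $\prec$-incomparable but may well be strictly ordered by $\subset$ is not forced either way, which is exactly what makes $\prec^{MP}$ strictly coarser than $\prec$ in general.
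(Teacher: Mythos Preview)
Your proof is correct and follows essentially the same approach as the paper: both arguments rest on the observation that the $\prec$-tuple is the componentwise cardinality of the $\prec^{MP}$-tuple, and that for finite sets $X \subset Y$ implies $|X| < |Y|$ while $X = Y$ implies $|X| = |Y|$. The paper presents this by picking the highest index $j$ at which $D_j \neq B_j$ and arguing directly, whereas you structure the same idea as an induction along the recursive definition of $\ll$; the mathematical content is identical.
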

\begin{proof}
Given a knowledge base $K$ and  two sets of conditionals $D, B \subseteq K$, let us assume that $D \prec^{MP} B$. 
 As $D$ is less serious than $B$ in the MP-ordering, it must be that:
\begin{center}
$\langle D_{\infty}, D_k, \ldots, D_1,D_0 \rangle_D$ $\ll \langle B_{\infty}, B_k, \ldots, B_1,B_0 \rangle_B$.
\end{center}
consider  the highest  $j$, with $0 \leq j \leq k$, such that $D_j \neq B_j$.
For such a $j$, it must be that  $D_j \subset B_j$, while $D_r = B_r$, for all $r$ such that $k \geq r >j$.

Let us now consider the two tuples of numbers 
\begin{center}
$\langle n_{\infty}, n_k, \ldots, n_1,n_0 \rangle_D$  and $ \langle m_{\infty}, m_k, \ldots, m_1,m_0 \rangle_B$
\end{center}
associated with $D$ and $B$, respectively, in the lexicographic closure construction.
Notice that, $n_i= \mid D_i \mid$, for all $i$, and  $m_i= \mid B_i \mid$, for all $i$.
Furthermore, $D_{\infty}(=B_{\infty})$ is the set of all the conditionals with rank $\infty$ in the rational closure and,
hence, $n_{\infty}=m_{\infty}$.
For all $r$ such that $k \geq r >j$, as $D_r = B_r$, it must be that $n_r=m_r$.
Also, from $D_j \subset B_j$, we get $n_j <m_j$.
Thus, using the lexicographic ordering on numbers
tuple
$\langle n_{\infty}, n_k, \ldots, n_1,n_0 \rangle_D$ comes before tuple $ \langle m_{\infty}, m_k, \ldots, m_1,m_0 \rangle_B$,
and, therefore, $D \prec B$.
\hfill $\Box$
\end{proof}
As a consequence of this result, it is easy to prove the following corollaries. 
\begin{corollary} \label{corollary:bases}
Let $K$ be a knowledge base, $A$ a formula and $D \subseteq K$ a set of defaults. 
If $D$ is a basis for $A$ in the lexicographic closure, then $D$ is a basis for $A$ in the MP-closure.
\end{corollary}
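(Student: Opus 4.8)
The plan is to observe that the two notions of basis share \emph{exactly} the same consistency requirement, so that only the maximality clause differs between them, and that this maximality clause can be transferred from $\prec$ to $\prec^{MP}$ by a direct contrapositive use of Proposition \ref{Prop:coarser}. Concretely, suppose $D$ is a basis for $A$ in the lexicographic closure. By definition this means (i) $A$ is consistent with $\tilde{D}$, and (ii) $D$ is maximal with respect to $\prec$ among the sets of defaults whose materialization is consistent with $A$. I want to show that $D$ satisfies the defining conditions of an MP-basis, namely that $A$ is consistent with $\tilde{D}$ (which is condition (i), already available verbatim) and that $D$ is maximal with respect to $\prec^{MP}$ among such sets.

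The only real content is the maximality transfer, and I would argue it by contradiction. Suppose $D$ were \emph{not} $\prec^{MP}$-maximal among the sets whose materialization is consistent with $A$. Then there exists a set of defaults $B \subseteq K$ such that $A$ is consistent with $\tilde{B}$ and $D \prec^{MP} B$. Applying Proposition \ref{Prop:coarser} to the pair $D,B$, from $D \prec^{MP} B$ we obtain $D \prec B$. But then $B$ witnesses that $D$ is not $\prec$-maximal among the sets consistent (in their materialization) with $A$, contradicting clause (ii). Hence no such $B$ exists, $D$ is $\prec^{MP}$-maximal, and therefore $D$ is a basis for $A$ in the MP-closure.

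I do not expect a genuine obstacle here: the corollary is essentially a one-line consequence of the coarseness relation already established in Proposition \ref{Prop:coarser}. The single point that must be checked carefully is that both definitions of ``basis'' quantify over the \emph{same} admissible family of sets, i.e.\ those $D' \subseteq K$ with $\tilde{D'}$ consistent with $A$; since the definitions of Basis and MP-basis use an identical consistency condition and differ only in which seriousness ordering ($\prec$ versus $\prec^{MP}$) is used for maximality, this is immediate, and the contrapositive argument above then goes through without friction. (Note that the converse inclusion need not hold, precisely because $\prec^{MP}$ is strictly coarser, which is exactly why the MP-closure can admit more bases and thus yield weaker conclusions than the lexicographic closure.)
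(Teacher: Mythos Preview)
Your proof is correct and follows essentially the same approach as the paper's: both argue by contradiction, assuming a $B$ with $D \prec^{MP} B$ and $\tilde{B}$ consistent with $A$, then invoke Proposition~\ref{Prop:coarser} to obtain $D \prec B$, contradicting the $\prec$-maximality of $D$. Your additional remark that the two notions of basis quantify over the same admissible family of sets is a helpful clarification, but the core argument is identical.
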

\begin{proof}
Let $D$ be a basis for $A$ in the lexicographic closure, i.e., 
$A$ is consistent with  $\tilde{D}$ and $D$ is maximal w.r.t. $\prec$-seriousness ordering for this property.

We show that $D$ is also a basis in the MP-closure. 
If not, there is a set of defaults $B \subseteq K$ such that $A$ is consistent with $\tilde{B}$ and $D \prec^{MP} B$.
But, then, by Proposition \ref{Prop:coarser}, $D \prec B$, and $D$ is not  maximal w.r.t. $\prec$ among the sets of default whose materialization is consistent with $A$,
which contradicts the hypothesis that $D$ is a basis for $A$ in the lexicographic closure.
\hfill $\Box$
\end{proof}

\begin{corollary} \label{cor:MP-LC}
Let $K$ be a knowledge base and $A$ a formula.
If $A \ent C$ is in the MP-closure of $K$, then $A \ent C$ is in the lexicographic closure of $K$.
\end{corollary}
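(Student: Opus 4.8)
The plan is to derive the result as an almost immediate consequence of Corollary~\ref{corollary:bases}, which already tells us that every lexicographic basis for $A$ is also an MP-basis for $A$. Once this containment of bases is in hand, the inclusion of the two consequence relations follows by a quantifier-monotonicity argument: the MP-closure requires that $\tilde{D} \cup \{A\} \models C$ hold for \emph{every} MP-basis $D$ of $A$, whereas the lexicographic closure only demands this for every \emph{lexicographic} basis; since the lexicographic bases form a subfamily of the MP-bases, a universal statement over the larger family entails the same universal statement over the smaller one.

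Concretely, I would split on the rank of $A$. First suppose $\rf(A)$ is finite and assume $A \ent C \in MP(K)$, so that $\tilde{D} \cup \{A\} \models C$ holds for all MP-bases $D$ of $A$. Let $D'$ be an arbitrary lexicographic basis for $A$. By Corollary~\ref{corollary:bases}, $D'$ is an MP-basis for $A$, and hence $\tilde{D'} \cup \{A\} \models C$. Since $D'$ was an arbitrary lexicographic basis, the entailment $\tilde{D'} \cup \{A\} \models C$ holds for every lexicographic basis, which is exactly the condition for $A \ent C$ to lie in the lexicographic closure of $K$.

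It then remains to cover the degenerate case $\rf(A) = \infty$, which falls outside the basis-based definition. Here I would invoke the observation made just before Definition~\ref{def:MP-closure}: when $\rf(A) = \infty$, the conditional $A \ent C$ already belongs to the rational closure of $K$ for every $C$. As Lehmann's lexicographic closure is a superset of the rational closure, $A \ent C$ lies in the lexicographic closure as well, and nothing further is needed.

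I do not expect a genuine obstacle here, since the substance of the argument has already been discharged by Proposition~\ref{Prop:coarser} and Corollary~\ref{corollary:bases}, which together show that coarsening the seriousness ordering can only enlarge the family of $\prec$-maximal (that is, basis) sets. The single point requiring a moment's care is the direction of the two universal quantifiers: one must verify that it is the lexicographic bases that sit \emph{inside} the MP-bases, so that the stronger (MP) hypothesis yields the weaker (lexicographic) conclusion, and not the other way around. Apart from the separate, routine treatment of the $\rf(A)=\infty$ case, the proof is a one-line reduction to Corollary~\ref{corollary:bases}.
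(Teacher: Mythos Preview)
Your proposal is correct and follows essentially the same argument as the paper: both reduce the claim to Corollary~\ref{corollary:bases} and then observe that the universal condition defining the MP-closure, taken over the larger family of MP-bases, specializes to the universal condition defining the lexicographic closure over the subfamily of lexicographic bases. Your explicit treatment of the $\rf(A)=\infty$ case is a small refinement the paper leaves implicit, but the substance of the two proofs is identical.
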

\begin{proof}
If $A \ent C$ is in the MP-closure of $K$, then, in all the bases $D$ for $A$ in the MP-closure, $\tilde{D} \cup \{ A \} \models C$.
Let $D$ be any basis for $A$ in the lexicographic closure of $K$.
As,  by Corollary  \ref{corollary:bases}, all the bases for $A$ in lexicographic closure of $K$ are also MP-bases for $A$,
$\tilde{D} \cup \{ A \} \models C$.
Hence,  for all the bases $D$ for $A$ in the lexicographic closure of $K$, $\tilde{D} \cup \{ A \} \models C$,
and $A \ent C$ is in the lexicographic closure of $K$.
\hfill $\Box$
\end{proof}

To conclude this section, we show that the MP-closure, 
differently from the lexicographic closure, does not define a rational consequence relation.  
 For a knowledge base $K$, let ${\cal MP}_K$ be  the set of conditionals in the MP-closure of $K$.
The following counterexample shows a knowledge base $K$ such that  ${\cal MP}_K$  does not satisfy the property of Rational Monotonicity, and is a reformulation of Lehmann's musician example \cite{Lehmann95}. 

\begin{example}\label{exa:counterexa_RM}
Given the following knowledge base $K$: 
\begin{quote}
1. $\mathit{ Student \ent Merry}$\\
2. $\mathit{Student \ent  Young}$\\
3. $\mathit{Adult \ent Serious}$\\
4. $\mathit{Student \wedge Adult \ent (\neg Young \wedge \neg Merry) \vee \neg Serious}$
\end{quote}
the conditionals 1, 2 and 3 have rank 0 in the rational closure, while conditional 4 has rank 1.
There are two bases for $\mathit{ Student \wedge Adult}$  in the MP-closure of $K$, $D=\{1,2,4\}$ and $B=\{3,4\}$,
and the conditional $\mathit{ Student \wedge Adult \ent  Young \leftrightarrow Merry}$ is in the MP-closure of $K$ (in ${\cal MP}_K$).
On the contrary,   the conditional $\mathit{ Student \wedge Adult \ent }$ $\mathit{  Young}$ is not in ${\cal MP}_K$ (as $Young$ does not hold in the basis $B$), that is, $\mathit{ Student \wedge Adult }$ $\mathit{\not \ent  Young}$.
By the property of Rational Monotonicity, the conditional
\begin{quote}
$\mathit{ Student \wedge Adult \wedge \neg Young \ent Young \leftrightarrow Merry}$
\end{quote}
should be in ${\cal MP}_K$.
On the contrary, the last conditional 
 is not in the MP-closure of $K$.
In fact, there are two bases for $\mathit{ Student \wedge Adult  }$
$\mathit{ \wedge \neg Young}$,
namely $D'=\{1,4\}$ and $B'=\{3,4\}$, and the formula $\mathit{Young \leftrightarrow Merry}$ does not hold in the first basis, as  $\mathit{ \tilde{D'} \cup \{ Student \wedge Adult \wedge \neg Young\}  \models}$
$\mathit{  \neg Young \wedge Merry}$.
\end{example}
Notice that the example above is not a counterexample to Rational Monotonicity for the lexicographic closure, which is known to define a rational consequence relation.
In fact,  $D=\{1,2,4\}$ is the only basis for $\mathit{ Student \wedge Adult}$  in the lexicographic
closure of $K$  and, hence, the conditional $\mathit{ Student \wedge }$
$\mathit{  Adult \ent  Young}$ is in the lexicographic closure of $K$. 

\subsection{A semantic characterization for MP-closure}\label{sec:semantics}

A semantics for  MP-closure is defined in \cite{arXiv_Skeptical_closure}  building on the preferential semantics for rational closure of $\alctr$, 
introducing a notion of {\em refined, bi-preference interpretation}, 
which contains two preference relations, let us call them $<$ and $<'$:  the first one plays the role of the preference relation in a model of the RC in $\alctr$, while the second one $<'$ is built from $<$ exploiting a specificity criterium, and represents a refinement of $<$.

In this section we define a simpler semantic characterization of the MP-closure of a propositional knowledge base, starting from the propositional models of the rational closure. 
This simplified setting, that corresponds to the one considered by Lehmann in his semantic characterization of the lexicographic closure \cite{Lehmann95},
also allows for an easy comparison among the two semantics.

Given a finite satisfiable knowledge base $K$,
in the following we  define the semantics of the MP-closure by means of some preferential models of $K$ (that we call MP-models) and, then, we prove a characterization result.
To this purpose, we introduce a functor ${\cal F}$ associating a preferential interpretation $\enne$ to each finite minimal canonical ranked model  $\emme \in Min_{RC}(K)$ characterizing the rational closure of $K$ according to  Theorem \ref{rat_closure_modelli_minimali}.
As we will see, $\enne$ will be a model of the MP-closure.
\normalcolor

\begin{definition}[Functor ${\cal F}$]\label{def:functor}
Given a minimal canonical ranked interpretation $\emme=\langle \WW, <, v \rangle$ in  $\mathit{Min_{RC}(K)}$, we let
${\cal F}(\emme)= \enne$ such that: $\enne=\langle \WW, <', v \rangle$ and 
\begin{align}  \label{order_on_worlds}
 x <' y \mbox{\ \ {\bf \em iff} } V(x)  \prec^{MP} V(y)
 \end{align}
 where, for $z \in \WW$, $V(z)$ is the set of defaults in $K$ which are violated by $z$ (i.e., the set of conditionals $A \ent C \in K$ such that $\emme, z \models A \wedge \neg C$).
\end{definition}
As $\prec^{MP}$, introduced in  Definition \ref{MP-order}, 
is a strict partial order,
it is easy to see  that $<'$ in the definition above is a strict partial order as well.
Indeed, $<'$ is irreflexive: if $x <' x$ then $ V(x)  \prec^{MP} V(x)$ but, by irreflexivity of $\prec^{MP} $, $ V(x) \not  \prec^{MP} V(x)$, a contradiction. Hence, $x <'x$ does not hold.
Also, $<'$ is transitive: if $x <' y$ and $y <' z$, then, by definition of $<'$, $ V(x)  \prec^{MP} V(y)$ and $ V(y)  \prec^{MP} V(z)$. From the transitivity of 
$\prec^{MP} $, $ V(x)  \prec^{MP} V(z)$. Therefore, $x<'z$.

Hence, $\enne=\langle \WW, <', v \rangle$, in Definition \ref{def:functor},  is a preferential interpretation.
Propositions \ref{MP_stronger _than_RC} and \ref{enne_is_model_of_K} below  will show that $< \subseteq <'$, i.e., the preference relation $<'$ is 
 at least as fine as the modular preference relation $<$, and that $\enne$ is a model of $K$. 
Thus, $\enne$ is a preferential model of $K$ which is the refinement of the model $\emme$ of the rational closure of $K$ (in the sense that the preference relation in $\enne$ is 
at least as fine as the preference relation in $\emme$).

Before stating Propositions \ref{MP_stronger _than_RC} and \ref{enne_is_model_of_K}, let us consider a simple example.

\begin{example}
Consider again the minimal canonical model $\emme=\langle \WW, <, v \rangle$ of knowledge base $K$ in Example  \ref{es:modello_canonico}:
\begin{quote}
 1. $\mathit{ Student \ent \neg Pay\_Taxes}$\\
 2. $\mathit{Student \ent  Young}$\\
 3. $\mathit{ Employee \wedge Student \ent Pay\_Taxes}$
\end{quote}
As we have seen, default 1 and 2 have rank $0$, while default $3$ has rank $1$.
Given the following ranking of worlds:

Rank $0$: \ $\emptyset$, $\{ p\}$, $\{ y \}$, $\{ p, y \}$,  $\{ e\}$, $\{ e,y\}$, $\{ e,p\}$, $\{ e,y,p\}$, $\{ s,y\}$

Rank $1$: \   $\{ s,e,p\}$, $\{ s,e,p,y\}$,  $\{ s\}$, $\{ s,p\}$, $\{ s,p,y\}$

Rank $2$: \   $\{ s,e\}$, $\{ s,e,y\}$, 

\noindent
the relation $<$ in $\emme$ is defined as usual (see Section \ref{sez:RC}): for each world $w$ with rank $i$ and each world $z$ with rank $j>i$, $w<z$ holds.
 
Let us consider the model $\enne=\langle \WW, <', v \rangle$ such that $\enne={\cal F}(\emme)$.
Consider the two worlds: $x=\{s,y\}$ and $y=\{s,e,p,y\}$. As $x$ has rank $0$ and $y$ has rank $1$, $x <y$ holds in $\emme$.
Does $x <'y$ hold as well?
As $x$ violates no default (i.e., $V(x)=\emptyset$) and $y$ violates default 1 with rank $0$ (i.e., $V(x)=\{1\}$), the violation of $x$ is less serious than the violation of $y$ and $V(x)  \prec^{MP} V(y)$. Then, by condition  (\ref{order_on_worlds}), $x <'y$ holds in $\enne$. As expected, $x$ and $y$ keep their relative order from $\emme$.

Let us now consider two worlds having the same rank in $\emme$, namely $w= \{ s,e,p,y\}$ and $z=\{ s,e,p\}$. 
As $w$ violates default 1, while $z$ violates defaults 1 and 2, $V(w)  \prec^{MP} V(z)$, that is,
the violations in $w$ are less serious  than those  in $z$ and, hence, $w<'z$.
As a consequence, in $\enne$ $z$ is no more a minimal world satisfying  $\mathit{ Employee \wedge Student }$.
On the contrary,  $w$ is the only minimal world satisfying $\mathit{ Employee \wedge Student }$
and, as $\mathit{Young}$ is true in $w$, the conditional $\mathit{ Employee \wedge Student \ent Young}$ is satisfied in $\emme$.

\end{example}
\normalcolor

Let us prove that $<'$ is a refinement of $<$.

\begin{proposition}\label{MP_stronger _than_RC} 
For all $\emme=\langle \WW, <, v \rangle \in Min_{RC}(K)$ and $\enne=\langle \WW, <', v \rangle$ such that
$\enne= {\cal F}(\emme)$, it holds that  $< \subseteq <'$, i.e., the preference relation $<'$ is  at least as fine as $<$.

\end{proposition}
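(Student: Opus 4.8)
The plan is to prove the inclusion $< \subseteq <'$ directly: I would show that whenever $x < y$ in $\emme$, the violation set $V(x)$ is strictly less serious than $V(y)$ in the MP-ordering, so that $x <' y$ follows immediately from condition (\ref{order_on_worlds}). Since $\emme$ is a ranked model, $x < y$ is equivalent to $k_{\emme}(x) < k_{\emme}(y)$, and I would carry out the whole argument in terms of these ranks.

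The fact I would rely on is the standard description of worlds' ranks in a minimal canonical model of the rational closure: a world $w$ with $V(w) = \emptyset$ has $k_{\emme}(w) = 0$, and otherwise $k_{\emme}(w) = 1 + \max\{\rf(\delta) : \delta \in V(w)\}$, where $\rf(\delta)$ is the rank of the default $\delta$ in the rational closure of $K$. Intuitively, a world sits exactly one level above the highest-ranked default it violates; this is the ``worlds sink as low as they can'' property already illustrated in Example \ref{es:modello_canonico}, where rank-$0$ worlds violate nothing and rank-$i$ worlds violate some default of rank $i-1$ and none of higher rank. I would also record that no world violates a default of infinite rank: such a default has antecedent $A$ with $\rf(A) = \infty$, which (as noted after Theorem \ref{rat_closure_modelli_minimali}) is unsatisfiable in any ranked model of $K$, so the $\infty$-component of every violation tuple is empty.

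Granting this, assume $x < y$ and write $M_x = \max\{\rf(\delta) : \delta \in V(x)\}$ and $M_y = \max\{\rf(\delta) : \delta \in V(y)\}$, with the convention that the maximum over the empty set is $-1$. From $k_{\emme}(x) < k_{\emme}(y)$ and the rank formula I get $M_x < M_y$, so in particular $M_y \geq 0$ and $V(y)$ contains a default of rank exactly $M_y$. I then compare the tuples $\langle V(x)_{\infty}, V(x)_k, \ldots, V(x)_0 \rangle$ and $\langle V(y)_{\infty}, V(y)_k, \ldots, V(y)_0 \rangle$ componentwise from the most significant end, as $\ll$ requires. The $\infty$-components are both empty, and for every rank $r > M_y$ both $V(x)_r$ and $V(y)_r$ are empty, since neither $x$ nor $y$ violates a default of rank exceeding $M_y$ (recall $M_x < M_y < r$). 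At the index $r = M_y$ the two tuples first differ, with $V(x)_{M_y} = \emptyset \subset V(y)_{M_y}$. By the definition of $\ll$ this gives $\langle V(x)_{\infty}, \ldots, V(x)_0 \rangle \ll \langle V(y)_{\infty}, \ldots, V(y)_0 \rangle$, i.e. $V(x) \prec^{MP} V(y)$, whence $x <' y$ as desired.

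I expect the rank-versus-violation characterization of the second paragraph to be the only real content; once it is in hand, the rest is a routine lexicographic comparison. The main obstacle is therefore to secure that characterization, either by quoting it from the minimal canonical model construction for rational closure or by establishing it as a short preliminary lemma. The delicate half is the inequality $k_{\emme}(w) \leq 1 + M_w$, which holds precisely because $\emme$ is \emph{minimal} (soundness alone only yields $k_{\emme}(w) \geq 1 + M_w$); this is exactly the point where the hypothesis $\emme \in Min_{RC}(K)$ is used.
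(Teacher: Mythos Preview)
Your proof is correct and follows essentially the same route as the paper: both arguments show that if $k_\emme(x)=j<h=k_\emme(y)$ then $V_r(x)=\emptyset$ for all $r\geq j$ while $V_{h-1}(y)\neq\emptyset$, which forces $V(x)\prec^{MP}V(y)$. The only packaging difference is that you condense the two needed facts into the single formula $k_\emme(w)=1+\max\{\rf(\delta):\delta\in V(w)\}$, whereas the paper cites Proposition~2 of \cite{AIJ15} for the first half and argues the second half inline; incidentally, your treatment of the $\infty$-component (empty, since an infinite-rank antecedent is unsatisfiable) is the correct one.
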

\begin{proof}
We show that, for all $x,y \in \WW$, $x <y$ implies $x <'y$.

If $x <y$ in $\emme$, then for some $j,h$, $k_{\emme}(x)=j < h= k_{\emme}(y)$. 
As $\emme$ is a ranked model of $K$, 
by Proposition 2 in \cite{AIJ15},
$\emme, x \models A \ri C$ for all the conditionals  $A \ent C \in C_j$
(i.e., for all the conditionals $A \ent C$ with rank $r \geq j$). 
In particular, letting $V_r(x)$ be the defaults with rank $r$ violated by $x$, we have:
$V_r(x)=\emptyset$, for all $r\geq j$. 
The tuple associated with $V(x)$ has the form 
$$\langle C_{\infty}, \emptyset, \ldots, \emptyset,V_{j-1}(x),  \ldots, V_0(x) \rangle_{V(x)}.$$
where $C_\infty$ is the set of conditionals in $K$ with rank $\infty$.
Such conditionals must be in $V(x)$ as they cannot be satisfied in any model of $K$.

As $k_{\emme}(y)=h>j$, there must be some conditional $A \ent B$ with $\rf(A)= h-1\geq j$, which is falsified by $y$. 
Hence, $V_{h-1}(y)$ is nonempty.
The tuple associated with $V(y)$ has the form 
$$\langle C_{\infty}, \emptyset, \ldots, \emptyset,V_{h-1}(y), \ldots, V_{j}(y), V_{j-1}(y),  \ldots, V_0(y) \rangle_{V(y)},$$
with  $V_{h-1}(y)\neq \emptyset$ and $h-1\geq j$.
Then, the tuple associated with $V(x)$ is less serious (in the MP-seriousness ordering) than the tuple associated to $V(y)$. Hence, $V(x) \prec^{MP} V(y)$.
Therefore, by (\ref{order_on_worlds}), $x<' y$ holds,
and we can then conclude that,  $< \subseteq <'$.
\hfill $\Box$
\end{proof}
To prove that a preferential interpretation $\enne$ such that  $\enne= {\cal F}(\emme)$, for some $\emme \in Min_{RC}(K)$, is a model of $K$,
we exploit the following general property of preferential interpretations.

\begin{proposition}\label{prop:<' subseteq <''}
Let $\enne' = \langle \WW, <', v \rangle$ and $\enne'' = \langle \WW, <'', v \rangle$ be two  preferential interpretations such that $<' \subseteq <''$. 
For all conditionals $A \ent C$, if $\enne' \models A \ent C$ then  $\enne'' \models A \ent C$.
\end{proposition}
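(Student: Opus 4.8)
The plan is to reduce the statement to a single set-inclusion between the sets of minimal $A$-worlds in the two interpretations, and then to read off the conclusion directly from the definition of conditional satisfaction (Definition~\ref{semantica_rational}).

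First I would record a preliminary remark: since $\enne'$ and $\enne''$ share the same universe $\WW$ and the same valuation $v$, the truth of any propositional formula $F$ at a world $w$ is the same in both interpretations, as it is determined by $v(w)$ alone. In particular the two interpretations agree on which worlds satisfy $A$ and on which satisfy $C$, so I may write $w \models A$ and $w \models C$ without specifying the interpretation.

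The core of the argument is the claim
\[
 Min_{<''}^{\enne''}(A) \incluso Min_{<'}^{\enne'}(A).
\]
To prove it, I would take any $w \appartiene Min_{<''}^{\enne''}(A)$; then $w \models A$ and there is no $z$ with $z <'' w$ and $z \models A$. Suppose, towards a contradiction, that $w \notin Min_{<'}^{\enne'}(A)$. Since $w \models A$, by the very definition of the minimal set this can only happen if there is some $z$ with $z <' w$ and $z \models A$. But the hypothesis $<' \incluso <''$ gives $z <'' w$, and $z \models A$, contradicting the $<''$-minimality of $w$. Hence $w \appartiene Min_{<'}^{\enne'}(A)$, which establishes the inclusion. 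I would note that the smoothness condition is not needed here: the argument uses only the defining property of the minimal sets together with $<' \incluso <''$.

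Finally I would close the proof. Assuming $\enne' \models A \ent C$, that is, every $w \appartiene Min_{<'}^{\enne'}(A)$ satisfies $C$, I take an arbitrary $w \appartiene Min_{<''}^{\enne''}(A)$; by the claim $w \appartiene Min_{<'}^{\enne'}(A)$, so $w \models C$, and since $w$ was arbitrary I conclude $\enne'' \models A \ent C$. I do not expect a genuine obstacle, as the proposition is essentially a monotonicity observation. The one point requiring care is the direction of the inclusion: passing to the \emph{finer} relation $<''$ can only \emph{shrink} the set of minimal $A$-worlds, so a universally quantified condition holding over the larger set $Min_{<'}^{\enne'}(A)$ is automatically inherited by the smaller set $Min_{<''}^{\enne''}(A)$. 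Getting this direction right, rather than reversed, is the only place where attention is needed.
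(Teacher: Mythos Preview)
Your proof is correct and follows essentially the same approach as the paper: both establish the inclusion $Min_{<''}^{\enne''}(A) \incluso Min_{<'}^{\enne'}(A)$ from $<' \incluso <''$ and the shared $\WW,v$, and then read off the conclusion from the definition of conditional satisfaction. Your version is slightly more explicit in spelling out the contradiction that proves the inclusion, but the structure and key idea are identical.
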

\begin{proof}
Let $\enne' \models A \ent C$, i.e., for all $w \in \WW$, if $w \in Min_{<}^{\enne'}(A)$ then $\enne', w \modello C$.
We prove that $\enne'' \models A \ent C$.
Assume that $w \in Min_{<''}^{\enne''}(A)$.
We show that $\enne'', w \modello C$.

As $w \in Min_{<''}^{\enne''}(A)$,  clearly $w \in \WW$ and $\enne'', w \modello A$.
Since $\enne'$ and $\enne''$ have the same set of worlds $\WW$ and valuation function $v$, it follows that $ \enne', w \modello A$.
Furthermore, as the worlds satisfying $A$ in $\enne'$ and in $\enne''$ are the same and,
from $<' \subseteq <''$,
it follows that $Min_{<''}^{\enne''}(A) \subseteq Min_{<'}^{\enne'}(A)$.

Therefore $w \in Min_{<'}^{\enne'}(A)$ and, $\enne' \models A \ent C$,  then $ \enne', w \modello C$.
Again, as $\enne'$ and $\enne''$ have the same set of worlds $\WW$ and valuation function $v$, it follows that $ \enne'', w  \modello C$, which concludes the proof.
\hfill $\Box$
\end{proof}
As a consequence of the proposition above, an interpretation $\enne$ such that $\enne= {\cal F}(\emme)$ is a model of $K$.

\begin{proposition}\label{enne_is_model_of_K}
For all $\emme=\langle \WW, <, v \rangle \in Min_{RC}(K)$ and $\enne=\langle \WW, <', v \rangle$ such that
$\enne= {\cal F}(\emme)$, it holds that $\enne$ is a model of $K$.

\end{proposition}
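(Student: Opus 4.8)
The plan is to obtain this as an immediate consequence of the two preceding propositions, so that no new combinatorial reasoning about ranks or about the tuples $\langle D_\infty, \ldots, D_0\rangle$ is required. The key observation is that the property of \emph{being a model of $K$} is preserved when one passes from a coarser preference relation to a finer one, and that $\enne$ differs from $\emme$ precisely by such a refinement.

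First I would recall that $\emme = \langle \WW, <, v \rangle \in Min_{RC}(K)$ is, in particular, a ranked (hence preferential) model of $K$, so that $\emme \models A \ent B$ holds for every default $A \ent B \in K$. Next I would invoke Proposition \ref{MP_stronger _than_RC}, which gives $< \subseteq <'$, i.e. the preference relation $<'$ of $\enne = {\cal F}(\emme)$ refines $<$. These are exactly the two ingredients the argument needs.

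Then I would apply Proposition \ref{prop:<' subseteq <''} with $\enne' := \emme = \langle \WW, <, v \rangle$ and $\enne'' := \enne = \langle \WW, <', v \rangle$. These two interpretations share the same set of worlds $\WW$ and the same valuation $v$, and by the previous step $< \subseteq <'$, so the hypotheses of that proposition are met. It then follows that for every conditional $A \ent C$, $\emme \models A \ent C$ implies $\enne \models A \ent C$. Instantiating $A \ent C$ with each default $A \ent B \in K$ and using $\emme \models A \ent B$, I conclude $\enne \models A \ent B$ for all $A \ent B \in K$, which is exactly the statement that $\enne$ is a model of $K$.

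The only points requiring a little care, rather than genuine difficulty, are to confirm that the hypotheses of the two propositions are correctly matched: that $\emme$ qualifies as a preferential interpretation in the sense of Proposition \ref{prop:<' subseteq <''} (it does, since every ranked model is a preferential one), that $\enne$ is itself a preferential interpretation (already established right after Definition \ref{def:functor}, where $<'$ is shown to be a strict partial order), and that no smoothness issue arises (guaranteed here because we work with finite models over a finite language, as noted in the footnote to Definition \ref{semantica_rational}). I do not expect any real obstacle: the argument is essentially the composition of the inclusion $< \subseteq <'$ with the monotonicity-of-satisfaction lemma, so the proof will be short.
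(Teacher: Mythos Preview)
Your proposal is correct and follows essentially the same approach as the paper: both invoke Proposition \ref{MP_stronger _than_RC} to obtain $< \subseteq <'$ and then apply Proposition \ref{prop:<' subseteq <''} (with $\enne' = \emme$, $\enne'' = \enne$) to transfer satisfaction of each conditional in $K$ from $\emme$ to $\enne$. Your additional remarks about verifying the hypotheses (ranked models being preferential, $\enne$ being preferential, smoothness in the finite setting) are sound but not spelled out in the paper's more terse proof.
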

\begin{proof}
From the hypothesis we know that $\emme$ is a minimal canonical ranked model of $K$.
Hence, $\emme$ satisfies all the conditionals in $K$, i.e., for all conditionals $A \ent {C} \in K$, $\emme \models A \ent C$.
As $\emme$ and $\enne$ are preferential models and, by Proposition \ref{MP_stronger _than_RC}, $< \subseteq <'$,
we can use Proposition \ref{prop:<' subseteq <''} to conclude that,
for all conditionals $A \ent  {C} \in K$, as $\emme \models A \ent C$, also 
$\enne \models A \ent C$. Thus $\enne$ is a model of $K$.
\hfill $\Box$
\end{proof}

%
%
We can naturally extend the functor ${\cal F}$ to a set of models, and let
$$ {\cal F}(Min_{RC}(K)) =\{  \enne \mid  \enne= {\cal F}(\emme) \mbox{  for some } \emme \in Min_{RC}(K) \}.$$
${\cal F}(Min_{RC}(K))$ is a set of preferential models of $K$ that we call  MP-models of $K$.


\begin{definition}[MP-models of $K$] \label{defi:MP-models}
Given a knowledge base $K$,  an {\em MP-model of $K$} is  any preferential model $\enne \in {\cal F}(Min_{RC}(K))$.  
\end{definition}
We prove that MP-models provide a semantic characterization of the MP-closure.
\begin{theorem}[Characterization result for the MP-closure] \label{MP-characterization}
Given a satisfiable  knowledge base $K$, 
a conditional $A \ent C$ is true in all the MP-models of $K$
if and only if $A \ent C$ belongs to the MP-closure of $K$.
\end{theorem}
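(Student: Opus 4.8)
The plan is to prove the theorem by establishing a tight correspondence between the $<'$-minimal worlds satisfying $A$ in an MP-model $\enne = {\cal F}(\emme)$ and the MP-bases for $A$. For a world $w$, write $D(w) = K \setminus V(w)$ for the set of defaults that $w$ does \emph{not} violate; since $w \models E \imp F$ for every $E \ent F \in D(w)$, the valuation of $w$ witnesses that $A$ is consistent with $\widetilde{D(w)}$ whenever $\emme, w \models A$. The pivotal observation is a complementation duality: for any two worlds $x,y$ that satisfy $A$ (with $\rf(A)$ finite), $V(x) \prec^{MP} V(y)$ if and only if $D(y) \prec^{MP} D(x)$. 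This holds because both sets are partitioned by rank, $D_i(\cdot)=K_i \setminus V_i(\cdot)$ complements $V_i(\cdot)$ within each block $K_i$, the highest rank at which $x$ and $y$ differ is the same for $V$ and for $D$, and $V_j(x)\subset V_j(y)$ is equivalent to $D_j(y)\subset D_j(x)$; the $\infty$-blocks agree and are maximal, so they play no role. Consequently, by the definition of ${\cal F}$ via $x <' y$ iff $V(x) \prec^{MP} V(y)$, minimizing the violation set under $<'$ is \emph{exactly} maximizing the satisfied-default set $D(\cdot)$ under $\prec^{MP}$.

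After disposing of the case $\rf(A)=\infty$ (no compatible world satisfies $A$, so $\enne \models A \ent C$ holds vacuously and $A \ent C$ is in $MP(K)$ by convention), I assume $\rf(A)$ finite. For the easy direction I would show: if $A \ent C \in MP(K)$, then for any MP-model $\enne$ and any $<'$-minimal $A$-world $w$, the set $D(w)$ is an MP-basis for $A$. Consistency is witnessed by $w$, and maximality follows from the duality together with canonicity: any set $D'$ consistent with $A$ and strictly more serious than $D(w)$ would be realized by a world $w'$ (its materialization is satisfiable, so by canonicity it is a world of $\emme$), and by the duality $w' <' w$, contradicting minimality. Then, since $D(w)$ is a basis and $A \ent C \in MP(K)$, we have $\widetilde{D(w)} \cup \{A\} \models C$; as $w \models A \wedge \widetilde{D(w)}$, propositional reasoning gives $\emme,w \models C$. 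Hence $C$ holds at every minimal $A$-world, i.e.\ $\enne \models A \ent C$.

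For the converse I argue contrapositively. If $A \ent C \notin MP(K)$, there is an MP-basis $D$ and a valuation $v_0$ with $v_0 \models A \wedge \tilde{D} \wedge \neg C$. Two facts make $v_0$ usable. First, $D(v_0)=D$ exactly: if $v_0$ satisfied a default outside $D$, then $D(v_0) \supsetneq D$ would still be consistent with $A$ and, by monotonicity of $\prec^{MP}$ under $\subset$, strictly more serious, contradicting maximality of $D$. Second, $v_0$ must be \emph{compatible} with $K$ so that it is realized as a world $w$ in the canonical model underlying some $\enne$. I would reduce compatibility to $v_0 \models \widetilde{K_\infty}$ (a world has finite rank iff it violates no rank-$\infty$ default, which follows from the rank characterization used in Proposition~\ref{MP_stronger _than_RC}), i.e.\ to $K_\infty \subseteq D$, and then show every basis satisfies $K_\infty \subseteq D$: the $\infty$-block is the most significant component of the $\prec^{MP}$ tuple, and $K_\infty$ is itself consistent with $A$ (since $\rf(A)$ finite yields a compatible world satisfying $A \wedge \widetilde{K_\infty}$), so any basis with $D_\infty \subsetneq K_\infty$ would be beaten by the set $K_\infty$. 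With $v_0$ compatible, let $w$ realize it; then $D(w)=D$ is a basis, so by the duality $w$ is $<'$-minimal among $A$-worlds, while $\emme,w \models \neg C$, witnessing $\enne \not\models A \ent C$.

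The routine steps are the duality lemma and the two ``witnessing'' arguments, which combine canonicity with the monotonicity of $\prec^{MP}$ under inclusion. The main obstacle, and the step requiring the most care, is the compatibility bookkeeping around rank-$\infty$ defaults: I must guarantee that the valuations invoked (both the candidate beating $w$ in the first direction and the counterexample $v_0$ in the second) are genuinely present in the canonical model. The key enabling fact is that every MP-basis contains all of $K_\infty$, which forces the $\infty$-block of every relevant satisfied-default set to equal $K_\infty$ and thereby makes those valuations compatible; once this is secured, the duality transfers $\prec^{MP}$-optimality of bases into $<'$-minimality of worlds and the characterization follows.
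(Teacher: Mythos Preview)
Your proposal is correct and follows essentially the same strategy as the paper: both directions rest on the correspondence between MP-bases for $A$ and $<'$-minimal $A$-worlds via $D(w)=K\setminus V(w)$, and both need to certify that certain valuations are realized in the canonical model. Your presentation is somewhat more modular---you isolate the complementation duality $V(x)\prec^{MP}V(y)\Leftrightarrow D(y)\prec^{MP}D(x)$ and the compatibility criterion $v_0$ compatible $\Leftrightarrow v_0\models\widetilde{K_\infty}$ as reusable lemmas---whereas the paper verifies compatibility inline by direct exceptionality/rank arguments (showing, e.g., that $\tilde{D}\wedge A\wedge\neg C$ has the same rank $i$ as $A$, hence is realized at a rank-$i$ world). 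Your route is cleaner to read; the paper's route is slightly more self-contained in that it avoids stating the equivalence $\rf(B)=\infty\iff\widetilde{K_\infty}\models\neg B$ as a separate fact.
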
 \normalcolor  
\begin{proof}
$(\Rightarrow)$ By contraposition, let us assume that $A \ent C$ does not belong to the MP-closure of $K$.
We show that there is a model $\enne$ in ${\cal F}(Min_{RC}(K))$ such that $A \ent C$ is not satisfied in $\enne$.

If $A \ent C$ does not belong to the MP-closure of $K$, $A$ must have a finite rank $i$ and 
there must be some basis $D$ for $A$ in $K$ such that $\tilde{D} \cup \{ A\} \not \models C$.
Therefore, there is some propositional interpretation satisfying $ \tilde{D} \wedge A \wedge \neg  C$.

Let $\emme =\langle \WW, <, v \rangle$ be any minimal canonical ranked model  in $Min_{RC}(K)$. 
We have seen in Section \ref{sez:RC} that, for a satisfiable knowledge base $K$, a minimal canonical model of $K$ always exists (Theorem 1 in \cite{AIJ15}).

First we prove that here must be a world $w \in \WW$ such that 
$\emme, w \models \tilde{D} \wedge A \wedge \neg  C$ (here $\tilde{D}$ stands for the conjunction $\bigwedge  \tilde{D}$).

As $A$ has rank $i$ in the rational closure, $A$ is not exceptional for $C_i$, i.e., $C_i \not \models_\Pe \top \ent \neg A$,
where  $C_i$ is the set of defaults with rank greater or equal to $i$, defined according to the rational closure construction in Section \ref{sez:RC}.
Also $\tilde{D} \wedge A \wedge \neg  C$ is not exceptional for $C_i$.
Suppose that this is not the case and that 
$C_i  \models_\Pe \top \ent \neg (\tilde{D} \wedge A \wedge \neg  C)$.
From the equivalence $\tilde{K} \models \alpha$ iff $K \models_\Pe \top \ent \alpha$,proved by Lehmann and Magidor \cite{whatdoes} ,
it would follow that $\tilde{C_i}  \models \neg (\tilde{D} \wedge A \wedge \neg  C)$ and, hence 
$\tilde{C_i}  \models \tilde{D} \wedge A \ri  C$ and, by the deduction theorem, $\tilde{C_i} \cup \tilde{D} \cup \{A \} \models  C$.
Notice that all the defaults in $C_i$ have rank greater or equal than $i$  and they must be in $\tilde{D}$.
In fact, the materialization of $C_i$ is consistent with $A$ (as, from $C_i \not \models_\Pe \top \ent \neg A$, it follows that
$\tilde{C_i} \not \models \neg A$) and the base $D$ is maximally serious among the sets of defaults consistent with $A$.
Hence, we would conclude $\tilde{D} \cup \{A\}  \models  C$, a contradiction.
Therefore, $\tilde{D} \wedge A \wedge \neg  C$ cannot be exceptional for $C_i$, and then  $\rf(\tilde{D} \wedge A \wedge \neg  C)=i$.
This also means that, in the minimal canonical model $\emme$ of $K$ there must be a world $w$ with rank $i$ 
satisfying $\tilde{D} \wedge A \wedge \neg  C$.

Let $\enne={\cal F}(\emme)$, i.e., $\enne$ is an MP-model of $K$. By construction, $\enne=\langle \WW, <', v \rangle $  and
$ x <' y \mbox{\ \ { \em iff} } V(x)  \prec^{MP} V(y)$.
We have to prove that $\enne$ falsifies $A \ent C$. 
As $\WW$ and $v$ in $\enne$ are the same as in $\emme$, it must be that $\enne, w \models \tilde{D} \wedge A \wedge \neg  C$.
We prove that $w$ is a minimal world satisfying $A$ in $\enne$, i.e., that $w \in Min_{<'}^{\enne}(A)$. As  $w$ falsifies $C$, 
it follows that $\enne$ falsifies $A \ent C$. 
\normalcolor

By absurd, if $w$ were not in $ Min_{<'}^{\enne}(A)$, there would be an element $z \in  Min_{<'}^{\enne}(A)$, such that $z <'w$.
In the following, we show that this leads to a contradiction with the hypothesis that $D$ is a basis for $A$ in $K$.

Let $B$ be the set of defaults $A' \ent C'$ in $K$  that are not violated in $z$, i.e., $B = K \backslash V(z)$.
As $z <'w$, by definition of $<'$,  $V(z)  \prec^{MP} V(w)$, that is, there is some $h$ such that $V_h(z) \subset V_h(w)$ and
$V_j(z)=V_j(w)$ for all $j$ such that $k \geq j>h$ (all defaults with no rank are violated both in $z$ and in $w$).
Hence, there is some $E \ent F \in V_h(w) \backslash V_h(z)$, 
 a conditional with rank $h$ which is violated by $w$ and not by $z$).
Clearly, $E \ent F \in B$. On the contrary, $E \ent F \not \in D$, as default $E \ent F$ is violated in $w$ while, by construction, all the defaults in $D$ must be satisfied in $w$ (as $\enne, w \models \tilde{D}$).

To prove that $D \prec^{MP} B$,
we have to show that all defaults $G \ent G'$ with rank $l \geq h$ that belong to $D$ also belong to $B$.
Let $G \ent G' \in D$, then $G \ri G' \in \tilde{D}$ and hence $\enne, w \models G \ri G'$.
As $G \ent G' \not \in V(w)$, then $G \ent G' \not \in V(z)$, since we know that 
$V_h(z) \subset V_h(w)$ and, for all $j$ such that $j>h$, $V_j(z)=V_j(w)$.
Therefore, by definition of $B$, $G \ent G'  \in B$,
and $D \prec^{MP} B$ follows.

%
This contradicts the hypothesis. 
In fact, as $D$ is a basis for $A$ in $K$, 
there cannot be a set of defaults $B$ such that  $\tilde{B} \cup \{A\}$  is consistent and $D \prec^{MP} B$.
We can then conclude that $w \in Min_{<'}^{\enne}(A)$ and that the conditional $A \ent C$ is false in $\enne$.

$(\Leftarrow)$ By contraposition, let us assume that there is an MP-model $\enne$ in ${\cal F}(Min_{RC}(K))$ such that $A \ent C$ is false in $\enne$.
Let $\enne= {\cal F}(\emme)$ for some $\emme =\langle \WW, <, v \rangle  \in Min_{RC}(K)$.
We prove that  $A \ent C$ does not belong to the MP-closure of $K$,  i.e., there is a basis $D$ for $A$ such that $\tilde{D} \cup \{A\} \not \models C$.

As $A \ent C$ is false in $\enne$, there is a world $x \in Min_{<'}^{\enne}(A)$ such that $\enne, x \models \neg C$.
We define $D$ as $K \backslash V(x)$,  the set of conditionals $F \ent E$ in $K$ which are not violated in $x$, i.e., the conditionals $F \ent E$  such that $\enne, x \models \neg F \vee E$.
We have to prove that $D$ is a basis for $A$ in $K$ and that $\tilde{D} \cup \{A\} \not \models C$. 

Let us prove that  $D$ is a basis for $A$.

(1) $\tilde{D} \cup \{A\} $ is consistent. In fact, by construction, $\enne, x \models \neg F \vee E$, for all  $F \ent E$ in $D$. Hence, $\enne, x \models \tilde{D}$
(where $\tilde{D}$ stands for the conjunction $\bigwedge  \tilde{D}$).
Also,  $\enne, x \models A$, as $x \in Min_{<'}^{\enne}(A)$. Therefore, $\enne, x \models \tilde{D} \wedge A$.

(2)  $D$ is maximal w.r.t. $\prec^{MP}$ among the sets of defaults $B \subseteq K$ such that $\tilde{B} \cup \{A\} $ is consistent.
The proof of this point is by contradiction.

By absurd, suppose that there is a set $B$ of defaults in $K$ such that $\tilde{B} \cup \{A\} $ is consistent and $D \prec^{MP} B$.
As the model $\emme  \in Min_{RC}(K)$ is a canonical model of $K$, from the fact that $\tilde{B} \cup \{A\} $ is consistent, we will prove that there must be a world $w \in \WW$ such that $\emme, w \models  \tilde{B} \cup \{A\} $.
From $\enne= {\cal F}(\emme)$ it will also follow that $\enne, w \models  \tilde{B} \wedge \{A\}) $.

 Let us suppose that $A$ has rank $i$ in the rational closure.
We prove that proposition $\tilde{B} \wedge A$, has rank $i$ as well.
We have to exclude that it has a rank higher than $i$.
As $\tilde{B} \cup \{A\} $ is consistent, 
$\not \models \tilde{B} \ri \neg A$. 
Let us refer to the sets $C_i$ in rational closure construction (Section \ref{sez:RC}).
As all defaults in $C_i$ have rank greater or equal than $i$, and $B$ is a basis for a proposition $A$ with rank $i$, 
the materialization of $C_i$ is consistent with $A$ (i.e., $\tilde{C_i} \not \models \neg A$)
and  $C_i \subseteq B$. In fact, $B$ is maximal (w.r.t. the MP-seriousness ordering) among the sets of defaults whose materialization is consistent  with $A$.
Then,  $\not \models \tilde{C_i}  \wedge \tilde{B} \ri \neg A$ and, by propositional reasoning, $\not \models \tilde{C_i}  \ri( \tilde{B} \ri \neg A)$.
By the deduction theorem for the propositional calculus, $\tilde{C_i}  \not \models  \tilde{B} \ri \neg A$, that is, $\tilde{C_i}  \not  \models \neg ( \tilde{B} \wedge A)$.
Using Lehmann and Magidor's equivalence \cite{whatdoes}  (mentioned above), we get $C_i \not \models_\Pe \top \ent  \neg (\tilde{B} \wedge A)$.
Therefore, proposition $\tilde{B} \wedge A$ is not exceptional for $C_i$ and must have rank $i$. 

As $\emme  \in Min_{RC}(K)$ is a minimal canonical model of $K$, the rank of a proposition in a minimal canonical model of $K$ must be the same as its rank in the rational closure of $K$, formula $\tilde{B} \wedge A$ must have rank $i$ in $\emme$.
Therefore, there must be a world $w \in \WW$ in $\emme$
such that $\emme, w \models  \tilde{B} \wedge A$ 
and, similarly, $\enne, w \models  \tilde{B} \wedge A$.
\normalcolor

Let us consider the tuples $\langle D_{\infty}, D_k, \ldots, D_1,D_0 \rangle_D$ 
and $ \langle B_{\infty}, B_k, \ldots, B_1,B_0 \rangle_B$ associated with the sets $D$ and $B$, respectively.
As $D \prec^{MP} B$,
there is some $h$ such that $D_h \subset B_h$ and
$D_j=B_j$ for all $j$ such that $k \geq j>h$.

Let $F \ent G \in B_h \backslash D_h$.  
By construction, $V(x) = K \backslash D$ and, hence, $F \ent G \in V_h(x)$.
On the contrary, $F \ent G \not \in V_h(w)$, as $F \ent G \in B$ and $\enne,w \models \tilde{B}$.
Therefore, $F \ent G \in V_h(x) \backslash V_h(w)$.

We prove that, for all $r \geq h$, $V_r(w) \subseteq V_r(x)$. 
From this, together with $F \ent G \in V_h(x) \backslash V_h(w)$, it follows that 
$V_h(w) \subset V_h(x)$,
and that
$V(w) \prec^{MP} V(x)$.

To show that,  for all $r \geq h$, $V_r(w) \subseteq V_r(x)$,
let $E \ent E' \in V_j(w)$, for some $j \geq h$ ($E \ent E'$ is a default of rank $j$ violated in $w$).
Since all defaults in $B$ are satisfied in $w$, $E \ent E' \not \in B_j$.
As,  for all $e \geq h$, $D_r \subseteq B_r$, $E \ent E' \not \in D_j$ (and $E \ent E' \not \in D$).
Hence, $E \ent E'  \in V(x)$ and, specifically, $E \ent E'  \in V_j(x)$ (as $E \ent E' $ has rank $j$).

Then we can conclude that $V(w) \prec^{MP} V(x)$
 and, by definition of $<'$, that $w <'x$.
As $\enne, w \models A$, this contradicts the hypothesis that $x$ is minimal among the worlds satisfying $A$ in $\enne$,
i.e., $x \in Min_{<'}^{\enne}(A)$. The assumption that $D$ is not maximal w.r.t. $\prec^{MP}$ leads to a contradiction.
This concludes point (2) of the proof.

By points (1) and (2) we have proved that  $D$ is a basis for $A$ in $K$.
We have still to prove that  
 $\tilde{D} \cup \{A\} \not \models C$.
 Remember that $\enne, x \models \neg C$ and that $\enne, x \models \tilde{D} \wedge A$.
Therefore, the propositional valuation $v(x)$ satisfies $\tilde{D} \cup \{A\}$ but falsifies $C$.

We have proved that there is a basis $D$ for $A$ in $K$ such that $\tilde{D} \cup \{A\} \not \models C$. It follows that 
$A \ent C$ does not belong to the MP-closure of $K$. \hfill $\Box$
\end{proof} 
%
%

We conclude this section by showing that the functor ${\cal F}$ in Definition \ref{def:functor} can be reformulated by making it explicit the dependency of the preference relation  $<'$ in $\enne$ from the preference relation $<$ in $\emme$.
Observe that, by Proposition 13 in \cite{AIJ15}, for a model $\emme \in  Min_{RC}(K)$, the ranking function $k_\emme$ associated  with $\emme$ 
is such that, for all formulas $A$ in the language of $K$, $k_\emme(A)=\rf(A)$, where $\rf(A)$ is the rank of $A$ in the rational closure of $K$.
We can therefore give an alternative  equivalent definition of the functor ${\cal F}$ above replacing $\prec^{MP}$ with the preference relation $\prec^{k_\emme}$ defined below, 
which makes  the dependency of $<'$ on ${k_\emme}$ (and thus on $<$) explicit.
\normalcolor

\begin{definition}[$k_\emme$-seriousness ordering]\label{k-order}
Given a finite ranked model $\emme$ of $K$ and a set of defaults $D \subseteq K$, 
 let $k$ be the greatest rank of a world in the finite model $\emme$, and
 let the {\em tuple associated with $D$ in $\emme$}  be the following tuple of subsets of $D$ 
$$\langle D_{\infty}, D_k, \ldots, D_1,D_0 \rangle_D^\emme, $$
where 
  $D_{\infty}$ is the set of defaults $ A \ent C$ in $ D$ such that $A$ has no rank in the model $\emme$, and,
 for all finite $i$, $D_i$ is the set of defaults $ A \ent C$ in $ D$ such that $k_\emme(A)=i$.

Given two sets of defaults $D$ and $B$, 
 {\em $D$ is less serious than $B$ w.r.t. the $k_\emme$-seriousness ordering}, written $D \prec^{k_\emme} B$, if and only if
\begin{center}
$\langle D_{\infty}, D_k, \ldots, D_1,D_0 \rangle_D^\emme$ $\ll\langle B_{\infty}, B_k, \ldots, B_1,B_0 \rangle_B^\emme$. 
\end{center}
where, $\langle D_{\infty}, D_k, \ldots, D_1,D_0 \rangle_D^\emme$ and $ \langle B_{\infty}, B_k, \ldots, B_1,B_0 \rangle_B^\emme$ are,
respectively, the tuples associated with $D$ and with $B$ in the model $\emme$.
\end{definition}
Observe that the definition above uses the same lexicographic order $\ll$ on tuples of sets of defaults used in the definition of the MP-seriousness ordering $\prec^{MP}$, which is  inductively defined just before  Definition \ref{MP-order}.
 Observe also that, for a set of defaults $D$, $\langle D_{\infty}, D_k, \ldots, D_1,D_0 \rangle_D^\emme$ corresponds to the tolerance partitioning of $D$ in system $Z$ \cite{GeffnerAIJ1992}. 

We can then reformulate the functor ${\cal F}$ according to the following Proposition:
\begin{proposition}\label{prop:functor}
Given a ranked interpretation $\emme=\langle \WW, <, v \rangle \in Min_{RC}(K)$
and a preferential interpretation $\enne= \langle \WW, <', v \rangle$
such that $\enne= {\cal F}(\emme)$, then: 
\begin{align}\label{defi:<inBP}
 x <' y \mbox{\ \ {\bf \em iff} } V(x)  \prec^{k_\emme} V(y)
 \end{align}
\end{proposition}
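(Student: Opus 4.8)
The plan is to reduce the statement to the claim that the two seriousness orderings $\prec^{MP}$ and $\prec^{k_\emme}$ coincide on every pair of default sets. By Definition \ref{def:functor}, the relation $<'$ in $\enne = {\cal F}(\emme)$ is defined by $x <' y$ iff $V(x) \prec^{MP} V(y)$. Hence, to establish the equivalent formulation (\ref{defi:<inBP}) it suffices to prove that, for all $D, B \subseteq K$, we have $D \prec^{MP} B$ if and only if $D \prec^{k_\emme} B$. Everything then reduces to comparing the tuples the two orderings associate with a set of defaults.

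First I would fix a set of defaults $D$ and compare the tuple $\langle D_{\infty}, D_k, \ldots, D_0 \rangle_D$ used by $\prec^{MP}$ (Definition \ref{MP-order}), whose components are determined by the rank $\rf$ of the antecedents in the rational closure, with the tuple $\langle D_{\infty}, D_{k'}, \ldots, D_0 \rangle_D^\emme$ used by $\prec^{k_\emme}$ (Definition \ref{k-order}), whose components are determined by $k_\emme$. The decisive ingredient is Proposition 13 of \cite{AIJ15}, recalled above, which gives $k_\emme(A) = \rf(A)$ for every formula $A$ in the language of $K$, and in particular $A$ has no rank in $\emme$ exactly when $\rf(A) = \infty$. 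Consequently each default $A \ent C \in D$ is placed in the same component by both constructions: $A \ent C \in D_i$ in the $\prec^{MP}$ tuple iff $A \ent C \in D_i$ in the $\prec^{k_\emme}$ tuple, and likewise for the component $D_{\infty}$. Thus the two tuples agree entry by entry at every shared index, and the same holds for $B$.

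The point requiring care is that the two tuples need not have the same length, since $\prec^{MP}$ uses the order $k$ of $K$ while $\prec^{k_\emme}$ uses the greatest rank $k'$ of a world in $\emme$. I would argue that every default antecedent has finite rank at most $m$, where $m$ is the maximum finite antecedent rank occurring in $K$; since $k_\emme(A) = \rf(A)$, a default antecedent $A$ with $\rf(A) = m$ forces a world of rank $m$ in $\emme$, so $k' \geq m$ and both tuples are long enough to list every default. Moreover, for every index strictly above $m$ the corresponding component is empty in both $D$ and $B$. Because the inductive definition of $\ll$ (given just before Definition \ref{MP-order}) discards equal top components and recurses on the tails, these leading empty components cannot affect the outcome: the lexicographic test on the $\prec^{MP}$ tuples and on the $\prec^{k_\emme}$ tuples returns the same verdict for $D$ and $B$.

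I would then conclude that $D \prec^{MP} B$ iff $D \prec^{k_\emme} B$, and therefore that $x <' y$ iff $V(x) \prec^{k_\emme} V(y)$, which is exactly (\ref{defi:<inBP}). The main obstacle is essentially bookkeeping, namely matching up the two tuple representations and checking that the only discrepancy between them lies in empty, hence order-irrelevant, components; once Proposition 13 of \cite{AIJ15} identifies the two ranking functions $\rf$ and $k_\emme$ on the language of $K$, the equivalence of the two orderings, and with it the reformulation of the functor ${\cal F}$, follows immediately.
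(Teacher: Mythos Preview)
Your proposal is correct and follows essentially the same route as the paper: both reduce the claim to the identity $\prec^{MP}=\prec^{k_\emme}$ via Proposition~13 of \cite{AIJ15}, which equates $\rf(A)$ with $k_\emme(A)$ and hence makes the two tuples associated with any $D\subseteq K$ coincide component by component. The one point you handle more carefully than the paper is the potential mismatch between the tuple lengths (the order $k$ of $K$ versus the greatest world rank $k'$ in $\emme$); the paper simply asserts the tuples ``exactly correspond'' without discussing this, whereas you correctly observe that any extra leading components are empty on both sides and therefore irrelevant to the lexicographic comparison.
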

\begin{proof}
The proof is trivial since, as a consequence of Proposition 13 in \cite{AIJ15}, 
the rank of a conditional $A \ent B$ in the rational closure of $K$
corresponds to the rank of $A$ in any minimal canonical model of $K$ (and hence in  $\emme$). In particular, when $A \ent B$ has no rank in the rational closure,
$A$ has no rank in $\emme$.
For any set of defaults $D$ (and in particular for $V(x)$ and for $V(y)$), the tuple $\langle D_{\infty}, D_k, \ldots, D_1,D_0 \rangle_D$ associated with $D$ in the definition of MP-seriousness ordering (Definition \ref{MP-order}) exactly corresponds to the tuple $\langle D_{\infty}, D_k, \ldots, D_1,D_0 \rangle_D^\emme$
associated with $D$ in $\emme$ according to the Definition \ref{k-order} of $k_\emme$-seriousness ordering, when $\emme$ is a minimal canonical model of $K$.
Therefore, the $k_\emme$-seriousness ordering $ \prec^{k_\emme}$  and the MP-seriousness ordering  $\prec^{MP}$ coincide,
when $\emme \in Min_{RC}(K)$.
Under this condition, $V(x)  \prec^{k_\emme} V(y)$ iff $V(x)  \prec^{MP} V(y)$, and
the claim follows.
\hfill $\Box$
\end{proof}
This last reformulation of the functor ${\cal F}$, besides making the dependency of $<'$ on $<$ more evident, also allows 
to point at the differences
between MP-models introduced above and the notion of bi-preference interpretation in \cite{arXiv_Skeptical_closure}.  

 As we have mentioned above, a bi-preference interpretation is a quadruple $\langle \WW,<,  <', v \rangle$, where $<$ is a modular preference relation while $<'$ is a (irreflexive and transitive) preference relation, which additionally satisfies the {\em if part} of Condition (\ref{defi:<inBP}).  
Minimization was then used therein to define the minimal BP-models of $K$ characterizing the MP-closure. 
In minimal BP-models  $\langle \WW,<,  <', v \rangle$ of a knowledge base $K$, essentially, $\langle \WW,<, v \rangle$ corresponds to a minimal model 
in $Min_{RC}(K)$ (a model of the rational closure of $K$),  while $<'$ has to satisfy the if part of Condition (\ref{defi:<inBP}).  
In this paper, we have chosen a different and simpler route, and we have 
defined a functor ${\cal F}$ which builds an MP-model $\enne$ of a knowledge base $K$ starting from a minimal canonical model $\emme \in Min_{RC}(K)$ 
without requiring any further minimization step.
As a difference, the {\em if and only if } Condition (\ref{defi:<inBP}) 
is used to bind the preference relation $<'$ in $\enne$ to the preference relation $<$ in $\emme$, when $\enne \in {\cal F}(\emme)$.
As $\emme$ is already a minimal model of $K$, no further minimization is needed to define the MP-model $\enne$.

%
 \normalcolor 

%
%
%

\subsection{Some properties and relation with lexicographic closure}

Observe that there is a strong correspondence between the ordering on models defined by Condition (\ref{order_on_worlds}) in Definition \ref{def:functor}  
and the semantics of  lexicographic closure given by Lehmann \cite{Lehmann95}, 
where the seriousness ordering $\prec$ based on the lexicographic order on the tuples of numbers
 is used to order the propositional models as follows:
\begin{align}\label{cond:preference_prop_int}
m \prec m' \mbox{\ \ {\bf \em iff} } V(m)  \prec V(m')
\end{align}
 where the seriousness ordering $\prec$ is the one based on the lexicographic order on the tuples of numbers in Definition \ref{def:seriousness_ord_LC}.
``This modular ordering on models defines a modular preferential model that in turn defines a consequence relation"   \cite{Lehmann95}, the lexicographic closure of the knowledge base.
It was proven by Lehmann \cite{Lehmann95} (Theorem 2) 
that the relation $\prec$  on propositional models is finer than $\ll$, a modular relation defining a model of the rational closure. 
In Proposition \ref{Prop:coarser} in Section  \ref{sec:construction} 
we have proven that $ \prec^{MP}$ is  at least as coarse as $ \prec$.
As a consequence, all conditionals belonging to the MP-closure of a knowledge base $K$ also belong to the lexicographic closure of $K$.

Let   ${\cal LC}_K$ be  the set of conditionals belonging to the lexicographic closure of $K$, and
    ${\cal RC}_K$ be  the set of conditionals belonging to the rational closure of $K$.
Remember that ${\cal MP}_K$ is  the set of conditionals belonging to the MP-closure of $K$.
We have already proven that ${\cal MP}_K \subseteq {\cal LC}_K$.
Example \ref{example-differenza_MC_LC} shows a knowledge base $K$ for which the inclusion is strict,
i.e., ${\cal MP}_K \subset {\cal LC}_K$. In fact, conditional (\ref{conditional_in_LC-MP}) belongs to ${\cal LC}_K$ but not to  ${\cal MP}_K$.
Therefore the converse inclusion ($ {\cal LC}_K \subseteq {\cal MP}_K$) does not hold.

We will now exploit the correspondence result in Theorem  \ref{MP-characterization} to show that 
 the conditionals belonging to the rational closure of a knowledge base $K$ also belong to the MP-closure of $K$, i.e., that
 ${\cal RC}_K \subseteq {\cal MP}_K$.

\begin{proposition}\label{prop:RC_MP}
Given a knowledge base $K$,
if $A \ent C$ is in the rational closure of $K$, than  $A \ent C$ is in the MP-closure of $K$.
\end{proposition}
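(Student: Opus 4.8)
The plan is to establish ${\cal RC}_K \subseteq {\cal MP}_K$ \emph{semantically}, leaning on the characterization of the MP-closure in Theorem \ref{MP-characterization}, rather than reasoning directly about MP-bases. The whole argument is essentially an assembly of results already proved: it suffices to show that every MP-model of $K$ satisfies each conditional of the rational closure, and then to invoke the characterization to transfer this back to membership in ${\cal MP}_K$.

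First I would assume that $A \ent C$ belongs to the rational closure $\overline{K}$ of $K$. By Theorem \ref{rat_closure_modelli_minimali}, this is equivalent to $\emme \modello A \ent C$ for every minimal canonical ranked model $\emme = \langle \WW, <, v \rangle \in Min_{RC}(K)$. Next I would pick an arbitrary MP-model $\enne$ of $K$: by Definition \ref{defi:MP-models} we have $\enne = {\cal F}(\emme)$ for some $\emme \in Min_{RC}(K)$, so $\enne = \langle \WW, <', v \rangle$ shares the worlds $\WW$ and the valuation $v$ with $\emme$, and by Proposition \ref{MP_stronger _than_RC} its preference relation refines that of $\emme$, i.e. $< \incluso <'$.

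The key step is then to transfer satisfaction from $\emme$ to its refinement $\enne$. Since $\emme \modello A \ent C$ and $< \incluso <'$, Proposition \ref{prop:<' subseteq <''} applies directly, with the coarser relation $<$ of $\emme$ playing the role of ``$<'$'' in that proposition and the finer relation $<'$ of $\enne$ playing the role of ``$<''$''; it yields $\enne \modello A \ent C$. As $\enne$ was an arbitrary MP-model, $A \ent C$ holds in all MP-models of $K$, and Theorem \ref{MP-characterization} then gives $A \ent C \in {\cal MP}_K$, which is what we want.

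The one point that requires care — and which I expect to be the only real obstacle — is the \emph{direction} of the monotonicity in Proposition \ref{prop:<' subseteq <''}. Refining $<$ to $<'$ enlarges the relation and hence can only shrink the sets $Min_{<'}^{\enne}(A)$ of minimal $A$-worlds, so a conditional true under the coarser relation stays true under the finer one; one must match the coarse/fine roles to this proposition correctly, since swapping them would reverse the inclusion and break the argument. Everything else is bookkeeping over the already-established facts that $< \incluso <'$ (Proposition \ref{MP_stronger _than_RC}) and that $\enne$ and $\emme$ agree on $\WW$ and $v$ (Definition \ref{def:functor}).
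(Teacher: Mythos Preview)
Your proposal is correct and follows essentially the same approach as the paper: both argue semantically via Theorem~\ref{rat_closure_modelli_minimali}, Proposition~\ref{MP_stronger _than_RC}, and Theorem~\ref{MP-characterization}. The only cosmetic difference is that you invoke Proposition~\ref{prop:<' subseteq <''} as a black box, whereas the paper re-derives its content inline (arguing directly that $Min_{<'}^{\enne}(A) \subseteq Min_{<}^{\emme}(A)$ from $< \subseteq <'$); your version is slightly cleaner but not genuinely different.
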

\begin{proof}
Let $A \ent C$ be in the rational closure of $K$, then, by the correspondence  Theorem \ref{rat_closure_modelli_minimali} 
$\emme \models A \ent C$, for all $\emme$ in $Min_{RC}(K)$.
To show that $A \ent C$ is in the MP-closure of $K$, we prove that for all  $\enne \in {\cal F}(Min_{RC}(K))$, $\enne \models A \ent C$.

Let us consider any $\enne \in {\cal F}(Min_{RC}(K))$.
It must be $\enne= {\cal F}(\emme)$ for some $\emme =\langle \WW, <, v \rangle$  in $Min_{RC}(K)$.
From the hypothesis we know that that $\emme \models A \ent C$ and, hence, for all $w \in Min_<^{\emme}(A)$, $\emme, w \models C$.

We prove that $\enne \models A \ent C$, i.e., that for all $w' \in Min_{<'}^{\enne}(A)$, $\enne, w' \models C$.
From $w' \in Min_{<'}^{\enne}(A)$, we know that $w' \in \WW$ and that $\enne, w' \models A$.
But then $\emme, w' \models A$, as $\WW$ and $v$ are the same in $\emme$ and $\enne$:
the set of the worlds satisfying $A$ in $\enne$ is the same as the set of the worlds satisfying $A$ in $\emme$.
By Proposition \ref{MP_stronger _than_RC},  $< \subseteq <'$ and, hence, $Min_{<'}^{\enne}(A) \subseteq Min_{<}^{\emme}(A)$.
Thus, $w' \in Min_{<}^{\emme}(A)$.
As $\emme$ satisfies $A \ent C$, then $\emme, w' \models C$.
Again, as $\WW$ and $v$ are the same  in $\emme$ and $\enne$, $\enne, w' \models C$.
We can then conclude that $\enne \models A \ent C$.
As this is true for all $\enne \in {\cal F}(Min_{RC}(K))$, this proves the claim.
\hfill $\Box$
\end{proof}
The next corollary follows:

\begin{corollary}
 ${\cal RC}_K \subseteq {\cal MP}_K  \subseteq {\cal LC}_K$.
\end{corollary}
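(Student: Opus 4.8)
The plan is to obtain both inclusions by simply chaining together the two results already established in this section, so the proof itself is essentially immediate. First I would establish the left-hand inclusion ${\cal RC}_K \subseteq {\cal MP}_K$: this is exactly the content of Proposition \ref{prop:RC_MP}, which states that whenever a conditional $A \ent C$ belongs to the rational closure of $K$, it also belongs to the MP-closure of $K$. Since ${\cal RC}_K$ and ${\cal MP}_K$ are by definition the sets of conditionals in the rational and MP-closure of $K$ respectively, the set-theoretic inclusion is just a restatement of that proposition.

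Next I would establish the right-hand inclusion ${\cal MP}_K \subseteq {\cal LC}_K$, which is Corollary \ref{cor:MP-LC}: every conditional in the MP-closure of $K$ lies in the lexicographic closure of $K$. As with the first inclusion, passing from the ``for every conditional'' formulation to the set-inclusion formulation requires no further argument. Combining the two inclusions yields the desired chain ${\cal RC}_K \subseteq {\cal MP}_K \subseteq {\cal LC}_K$.

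There is no genuine obstacle remaining at this stage; the real work has already been carried out upstream. The left inclusion rested on the semantic characterization Theorem \ref{MP-characterization} together with the refinement fact $< \subseteq <'$, while the right inclusion ultimately rested on Proposition \ref{Prop:coarser}, namely that $\prec^{MP}$ is at least as coarse as $\prec$ (so that every lexicographic basis is also an MP-basis). Accordingly, I would keep the proof to a single sentence invoking Proposition \ref{prop:RC_MP} and Corollary \ref{cor:MP-LC}, rather than reproving anything. If desired, I would also note explicitly that these inclusions can be strict: Example \ref{example-differenza_MC_LC} already witnesses ${\cal MP}_K \subset {\cal LC}_K$, so the corollary genuinely places MP-closure strictly between rational and lexicographic closure.
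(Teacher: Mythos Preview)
Your proposal is correct and matches the paper's approach exactly: the paper simply states ``The next corollary follows'' immediately after Proposition~\ref{prop:RC_MP}, treating the result as an immediate consequence of that proposition together with Corollary~\ref{cor:MP-LC}, with no further argument given.
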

To show that, for some knowledge base $K$, the strict inclusion  ${\cal RC}_K \subset {\cal MP}_K$ holds, it is enough to observe that, in Example \ref{example-Student},
conditional $\mathit{Employee \wedge Student \ent  Young}$ is in the MP-closure of $K$, but not in the rational closure of $K$.

Notice that
Lehmann in his semantics considers a single model including the propositional interpretations $m$, ordered according to the preference relation $\prec$,
where, for any propositional interpretations $m$ and $m'$,
$m \prec m'$ is defined by condition (\ref{cond:preference_prop_int}).
On the contrary, in our semantics we consider a set of  models ${\cal F}(Min_{RC}(K))$, where the models $\enne  \in {\cal F}(Min_{RC}(K))$ may differ w.r.t. the set of worlds $\WW$ and the valuation function $v$.
Each model $\enne = \langle \WW, <', v \rangle$ in $ {\cal F}(Min_{RC}(K))$, however, is canonical (according to Definition \ref{canonical_model}), and it must contain at least one world for each propositional truth assignment (over the language ${\cal L}_{K,Q}$) which is compatible with $K$.
We will see that each model $\enne$ in $ {\cal F}(Min_{RC}(K))$ satisfies exactly the same conditionals.
\normalcolor
%
%
%
%
This is a consequence of the fact that the relative ordering of two worlds  $x,y \in \WW$ in any interpretation $\enne \in  {\cal F}(Min_{RC}(K))$  
only depends on the set of defaults 
violated in  $x$  and 
in $y$. 
In turn, the defaults violated in $x$ and in $y$ only depend on the valuation of the formulas $B \in {\cal L}_{K,Q}$  in $x$ and $y$. 
\normalcolor

The following lemma shows that the preference relation $<'$ among worlds in a model $\enne$ is determined by the valuation of the propositions in the language ${\cal L}_{K,Q}$ of the knowledge base $K$ and of a query $Q$ (a conditional).
The lemma will be used to prove the next proposition.

%

Given two models $\enne'=\langle \WW', <', v' \rangle $ and $\enne''=\langle \WW'', <'', v'' \rangle $, and two worlds $x \in \WW'$ and $y \in \WW''$,
we say that {\em the valuations  $v'(x)$ and $v''(y)$ coincide over the language ${\cal L}_{K,Q}$},
when,  for all $B \in {\cal L}_{K,Q}$,  
$\enne', x \models B$ iff $\enne'', y \models B$.
That is, the propositional formulas of the language ${\cal L}_{K,Q}$ have the same truth value in $x$ and in $y$.
\begin{lemma}\label{same valuation}
Given two models $\enne'=\langle \WW', <', v' \rangle $ and $\enne''=\langle \WW'', <'', v'' \rangle $ such that $\enne', \enne'' \in {\cal F}(Min_{RC}(K))$,
let $x',y' \in \WW'$ and $x'',y'' \in \WW''$. 

If the valuations  $v'(x')$ and $v''(x'')$ coincide over  language ${\cal L}_{K,Q}$,
and the valuations $v'(y')$ and $v''(y'')$ coincide over language ${\cal L}_{K,Q}$,
then, 
\begin{center}
$x' <' y'$ iff $x'' <'' y''$.
\end{center}

\end{lemma}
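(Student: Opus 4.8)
The plan is to reduce the claimed equivalence to an equality of the sets of violated defaults. Since $\enne', \enne'' \in {\cal F}(Min_{RC}(K))$, there are minimal canonical ranked models $\emme', \emme'' \in Min_{RC}(K)$ with $\enne' = {\cal F}(\emme')$ and $\enne'' = {\cal F}(\emme'')$. By Definition \ref{def:functor}, the preference relation of each MP-model is governed by the \emph{same} seriousness ordering $\prec^{MP}$ applied to the sets $V(\cdot)$ of violated defaults: $x' <' y'$ iff $V(x') \prec^{MP} V(y')$, and $x'' <'' y''$ iff $V(x'') \prec^{MP} V(y'')$. So it suffices to establish that $V(x') = V(x'')$ and $V(y') = V(y'')$, after which the two membership conditions become literally the same instance of $\prec^{MP}$.

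First I would observe that the set $V(z)$ of defaults violated by a world $z$ depends only on the restriction of its valuation to the language ${\cal L}_{K,Q}$. Indeed, a default $A \ent C \in K$ belongs to $V(z)$ exactly when $z \models A \wedge \neg C$, and since $A$ and $C$ are formulas of the language of $K$, the formula $A \wedge \neg C$ lies in ${\cal L}_{K,Q}$. By hypothesis the valuations $v'(x')$ and $v''(x'')$ coincide over ${\cal L}_{K,Q}$, so $\enne', x' \models A \wedge \neg C$ iff $\enne'', x'' \models A \wedge \neg C$ for every default $A \ent C \in K$; hence $V(x') = V(x'')$. The same argument applied to the pair $y', y''$ gives $V(y') = V(y'')$.

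To conclude I would invoke the fact that $\prec^{MP}$ (Definition \ref{MP-order}) is defined purely from $K$ --- through the ranks that the rational closure assigns to the antecedents of the defaults --- and is therefore independent of the particular model in which the violated-default sets are computed. Combining this with the two equalities yields the chain $x' <' y'$ iff $V(x') \prec^{MP} V(y')$ iff $V(x'') \prec^{MP} V(y'')$ iff $x'' <'' y''$, as required. The lemma carries no deep obstacle: it is essentially a bookkeeping observation, and the only two points that need care are (i) checking that each materialized default $A \wedge \neg C$ indeed belongs to ${\cal L}_{K,Q}$, so that coincidence of the valuations over ${\cal L}_{K,Q}$ transfers violation of defaults, and (ii) noting that $\prec^{MP}$ is one fixed relation, shared by both $\enne'$ and $\enne''$, rather than anything depending on $\WW'$, $\WW''$, $v'$ or $v''$.
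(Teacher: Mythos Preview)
Your proposal is correct and follows essentially the same approach as the paper's own proof: both arguments reduce the equivalence to showing $V(x')=V(x'')$ and $V(y')=V(y'')$ from the coincidence of valuations over ${\cal L}_{K,Q}$, and then invoke condition~(\ref{order_on_worlds}) from Definition~\ref{def:functor} together with the model-independence of $\prec^{MP}$ to conclude. Your write-up is in fact slightly more explicit than the paper's in spelling out why $A\wedge\neg C\in{\cal L}_{K,Q}$ and why $\prec^{MP}$ is shared by both models, but the structure and key steps are identical.
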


\begin{proof}
%

Let $x',y' \in \WW'$ and $x'',y'' \in \WW''$ and assume that  valuations  $v'(x')$ and $v''(x'')$ coincide over language ${\cal L}_{K,Q}$,
and that   $v'(y')$ and $v''(y'')$ coincide over  language ${\cal L}_{K,Q}$.
Hence, for all $B \in {\cal L}_{K,Q}$,  

$\enne', x' \models B$ iff $\enne'', x'' \models B$,  and, 

$\enne', y' \models B$ iff $\enne'', y'' \models B$.\\
Then the defaults of $K$ which are violated in $x'$ and in $x''$ are the same ones, i.e., $V(x')=V(x'')$ and,
similarly, $V(y')=V(y'')$.
As $\enne \in {\cal F}(Min_{RC}(K))$, by construction, condition (\ref{order_on_worlds}) holds and it must be:  
\begin{center}
$x' <' y'$ iff $V(x') \prec^{MP} V(y')$.
\end{center}
As $\enne' \in {\cal F}(Min_{RC}(K))$, again by condition  (\ref{order_on_worlds}), 
\begin{center}
$x'' <'' y''$ iff $V(x'') \prec^{MP} V(y'')$.
\end{center}
From the two equivalences above, given that $V(x')=V(x'')$ and  $V(y')=V(y'')$, it follows that 
\begin{center}
$x' <' y'$ iff $x'' <'' y''$.
\end{center}
\hfill $\Box$
\end{proof}
The property that the preference relation $<'$ among worlds in each  $\enne \in {\cal F}(Min_{RC}(K))$ is only determined by the valuation of the propositional formulas of the language ${\cal L}_{K,Q}$ at the worlds, 
can be used to prove that all the models in ${\cal F}(Min_{RC}(K))$ satisfy the same conditional formulas.

\begin{proposition}\label{prop:same_conditionals}
Given two models $\enne'=\langle \WW', <', v' \rangle $ and $\enne''=\langle \WW'', <'', v'' \rangle $ such that $\enne', \enne'' \in {\cal F}(Min_{RC}(K))$ and a query $Q= A \ent C$. Then, 
\begin{center}
$\enne' \models A \ent C$ \  $\iff$  \ $\enne'' \models A \ent C$.
\end{center}
\end{proposition}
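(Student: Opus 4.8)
The plan is to prove the biconditional by establishing one implication and then invoking symmetry: it suffices to show that if $A \ent C$ is falsified in $\enne'$ then it is falsified in $\enne''$, since exchanging the roles of the two models yields the converse. So assume $\enne' \not\models A \ent C$; then there is a world $x' \in Min_{<'}^{\enne'}(A)$ with $\enne', x' \models \neg C$. Since $Q = A \ent C$, both $A$ and $C$ lie in $\elle_{K,Q}$, and the whole argument will turn on transferring $x'$ to a world of $\enne''$ carrying the same valuation over $\elle_{K,Q}$, and then transferring the preference relation by means of Lemma \ref{same valuation}.

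First I would check that the valuation $v'(x')$, restricted to $\elle_{K,Q}$, is compatible with $K$ in the sense required for canonicity (Definition \ref{canonical_model}). Indeed, if it were not, there would be some $B \in \elle_{K,Q}$ with $\enne', x' \models B$ and $K \models B \ent \bot$; but $\enne'$ is a preferential model of $K$ by Proposition \ref{enne_is_model_of_K}, so it would have to satisfy $B \ent \bot$, which is impossible since $x'$ itself witnesses the satisfiability of $B$ in $\enne'$. Because $\enne''$ is canonical, there is then a world $x'' \in \WW''$ whose valuation coincides with $v'(x')$ over $\elle_{K,Q}$; as $A, C \in \elle_{K,Q}$, this gives $\enne'', x'' \models A$ and $\enne'', x'' \models \neg C$.

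The key step is to show that $x''$ is minimal, i.e. $x'' \in Min_{<''}^{\enne''}(A)$. Suppose not: since $\enne'', x'' \models A$, non-minimality provides a world $z'' \in \WW''$ with $\enne'', z'' \models A$ and $z'' <'' x''$. Using canonicity of $\enne'$ in the reverse direction, I would choose $z' \in \WW'$ whose valuation coincides with $v''(z'')$ over $\elle_{K,Q}$, so that $\enne', z' \models A$. Now the valuations of the pairs $(z',z'')$ and $(x',x'')$ coincide over $\elle_{K,Q}$, so Lemma \ref{same valuation} applies and yields $z' <' x'$ iff $z'' <'' x''$; hence $z' <' x'$. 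This contradicts $x' \in Min_{<'}^{\enne'}(A)$, since $z'$ satisfies $A$ and lies strictly below $x'$. Therefore $x'' \in Min_{<''}^{\enne''}(A)$, and as $\enne'', x'' \models \neg C$ we conclude $\enne'' \not\models A \ent C$.

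The real content of the proposition is already carried by Lemma \ref{same valuation}: once the preference relation $<'$ is known to depend only on the valuations over $\elle_{K,Q}$ (through the sets of violated defaults and the ordering $\prec^{MP}$), the statement reduces to moving witnessing worlds back and forth between the two canonical models, which is exactly what canonicity supplies. The one place that requires care — and which I expect to be the main obstacle to get right — is the compatibility check that licenses each appeal to canonicity: one must verify that every valuation realized by a world of a model of $K$ is compatible with $K$, which holds precisely because such a model cannot satisfy any formula $B$ with $K \models B \ent \bot$.
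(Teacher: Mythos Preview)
Your proof is correct and follows essentially the same approach as the paper's: transfer a falsifying minimal $A$-world from one model to the other via canonicity, and use Lemma~\ref{same valuation} to check that minimality is preserved. The only cosmetic difference is that the paper proves the $(\Rightarrow)$ direction by assuming $\enne' \models A \ent C$ and deriving a contradiction from a putative falsifying world in $\enne''$, whereas you prove the contrapositive directly starting from a falsifying world in $\enne'$; the underlying argument is identical.
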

\begin{proof}
($\Ri$) Assume $\enne' \models A \ent C$. We want to show that $\enne'' \models A \ent C$, i.e., for all $w \in \WW''$, if $w \in Min_{<''}^{\enne''}(A)$ then  $\enne'', w \models C$.
Suppose, by absurd, that, for some $w \in \WW''$, $w \in Min_{<''}^{\enne''}(A)$ but  $\enne, w \not \models C$.
Then there is truth assignment $v_0:  \mathit{ATM_{K,Q}} \longrightarrow \{true, false\}$ such that
$\enne'', w \models B$ iff $v_0(B)=true$, for all formulas $B \in {\cal L}_{K,Q}$.
Clearly, $v_0$ is compatible with $K$, as $\enne''$ is a model of $K$, by Proposition \ref{enne_is_model_of_K}.

As $\enne'$ is a canonical model of $K$ ($\enne' \in {\cal F}( \emme)$ for some canonical model $\emme \in Min_{RC}(K)$),
there must be a world $w' \in \WW'$ such that $\enne', w' \models B$ iff $v_0(B)=true$, for all formulas $B \in {\cal L}_{K,Q}$.
In particular, $\enne', w' \models A$ and $\enne', w' \not \models C$.
We show that $w' \in Min_{<'}^{\enne'}(A)$. 

If $w' \not \in Min_{<'}^{\enne'}(A)$, there must be a $z' \in \WW'$ such that $z' <' w'$ and $\enne', z' \models A$. 
This leads to a contradiction with the assumption that $w \in Min_{<''}^{\enne''}(A)$.
As $\enne''$ 
is equal to ${\cal F}(\emme'')$ for some minimal canonical model $\emme''$ of $K$, there must be an element $z'' \in \WW''$,
whose valuation corresponds to the same truth assignment over ${\cal L}_{K,Q}$ of $z'$ in $\enne'$.
In particular, $z'$ satisfies $A$ and, hence, $z''$ satisfies $A$ in $\enne''$.
By Lemma \ref{same valuation}, as $v''(w)$ and $v'(w')$  coincide over the language ${\cal L}_{K,Q}$,
and  $v''(z'')$ and $v'(z')$ coincide over the language ${\cal L}_{K,Q}$, then 
\begin{center}
$z' <' w'$ iff $z'' <'' w$.
\end{center}
Then, it must be $z'' <'' w$, which contradicts the assumption that $w \in Min_{<''}^{\enne''}(A)$.
Hence, it must be that $w \in min_{<'}^{\enne'}(A)$.
As $\enne', w \not \models C$, we can conclude that $\enne' \not \models A \ent C$, which contradicts the hypothesis.

 ($\Leftarrow$)  Assuming $\enne'' \models A \ent C$, we can prove that $\enne' \models A \ent C$ as in previous case, replacing $\enne'$ with $\enne''$ and vice-versa, as $\enne'$ and $\enne''$ are any two models in ${\cal F}(Min_{RC}(K))$.
%
\hfill $\Box$
\end{proof}
As a consequence of this result, all the models in ${\cal F}(Min_{RC}(K))$ must satisfy the same conditionals. 
Hence, for  a satisfiable knowledge base $K$
we can take {\em any}  (arbitrarily chosen) model $\enne_{MP}$ in $ {\cal F}(Min_{RC}(K))$ as the semantic characterization of the MP-closure, thus coming up with a semantics quite similar to the model-theoretic semantics by Lehmann \cite{Lehmann95}.
In particular, by the characterization Theorem \ref{MP-characterization},
\begin{align*}
\enne_{MP} \models A \ent C \mbox{ iff } A \ent C \in {\cal MP}_K. 
\end{align*}
Then,  MP-closure is the consequence relation defined by a preferential model, $\enne_{MP}$.
{\color{black} As an alternative, given the canonical model constructions for rational closure developed 
 by Booth and Paris \cite{BoothParis98} and by Pearl \cite{PearlTARK90}, mentioned in Section \ref{sez:RC}, 
$\enne_{MP}$ could have been chosen as the preferential model obtained by applying the functor ${\cal F}$ to the canonical model of rational closure.}
As a consequence of the property, proved by Lehmann and Magidor \cite{whatdoes}, that  any preferential model defines a preferential consequence relation,
it follows that the MP-closure is a preferential consequence relation.

\begin{corollary}
MP-closure is a preferential consequence relation
\end{corollary}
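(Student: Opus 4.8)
The plan is to reduce the claim to the representation result of Kraus, Lehmann and Magidor \cite{KrausLehmannMagidor:90} recalled in Section \ref{sez:RC}, which states that a binary relation $\ent$ on $\cal L$ is a preferential consequence relation if and only if it is the consequence relation $\ent_\emme$ defined by some preferential model $\emme$. It therefore suffices to exhibit a single preferential model whose associated consequence relation coincides exactly with the set ${\cal MP}_K$ of conditionals in the MP-closure of $K$. The whole point is that ${\cal MP}_K$, which is defined by a quantification over a family of models, can in fact be realized by one model, to which the representation theorem applies verbatim.

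First I would fix any model $\enne_{MP} \in {\cal F}(Min_{RC}(K))$ (Definition \ref{defi:MP-models}); such a model exists for a satisfiable $K$ since $Min_{RC}(K)$ is non-empty and the functor ${\cal F}$ is defined on each of its elements. By the discussion following Definition \ref{def:functor}, the relation $<'$ of $\enne_{MP}$ is irreflexive and transitive, so $\enne_{MP}$ is a preferential interpretation, and by Proposition \ref{enne_is_model_of_K} it is a model of $K$; restricting to finite models, the Smoothness condition of Definition \ref{semantica_rational} holds automatically, so $\enne_{MP}$ is a genuine preferential model. Next I would identify the consequence relation $\ent_{\enne_{MP}}$ defined by this single model with ${\cal MP}_K$: this is precisely the equivalence $\enne_{MP} \models A \ent C$ iff $A \ent C \in {\cal MP}_K$ established just above the statement, which follows by combining the characterization Theorem \ref{MP-characterization} (a conditional lies in ${\cal MP}_K$ iff it is true in \emph{all} MP-models) with Proposition \ref{prop:same_conditionals} (all models in ${\cal F}(Min_{RC}(K))$ satisfy the same conditionals). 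Together these collapse the quantification over all MP-models to the single chosen $\enne_{MP}$, so that ${\cal MP}_K$ is exactly $\ent_{\enne_{MP}}$, and the representation result then yields that ${\cal MP}_K$ is a preferential consequence relation.

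The substantive work underpinning this corollary has already been discharged in Theorem \ref{MP-characterization} and Proposition \ref{prop:same_conditionals}; the corollary itself is a short wrap-up, and I expect no genuine obstacle in it. The one point that must not be overlooked is the verification that $\enne_{MP}$ is a bona fide preferential model in the sense required by the representation theorem — irreflexivity and transitivity of $<'$ come from the discussion after Definition \ref{def:functor}, $\enne_{MP} \models K$ from Proposition \ref{enne_is_model_of_K}, and Smoothness is free in the finite setting to which we have restricted. It is precisely Proposition \ref{prop:same_conditionals} that makes the reduction to a single $\enne_{MP}$ legitimate, letting us apply the representation theorem directly to one preferential model rather than reasoning about the whole family ${\cal F}(Min_{RC}(K))$; given that proposition, the present corollary is immediate.
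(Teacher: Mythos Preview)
Your proposal is correct and follows essentially the same approach as the paper: both fix a single MP-model $\enne_{MP}\in{\cal F}(Min_{RC}(K))$, use Theorem~\ref{MP-characterization} together with Proposition~\ref{prop:same_conditionals} to identify ${\cal MP}_K$ with the consequence relation defined by that single preferential model, and then invoke the representation result that every preferential model defines a preferential consequence relation. The only cosmetic difference is that the paper attributes this last fact to Lehmann and Magidor~\cite{whatdoes} rather than to~\cite{KrausLehmannMagidor:90}, and is less explicit than you are about checking that $\enne_{MP}$ really is a preferential model (irreflexivity, transitivity, smoothness via finiteness); your extra care there is appropriate and harmless.
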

We have already seen in Section \ref{sec:construction} that the MP-closure, on the contrary, is not a rational consequence relation as it violates the property of Rational Monotonicity.

\section{Rational supersets of the MP-closure}  \label{sec:rational_relation}

We have seen that the  MP-closure does not define a rational consequence relation and that the models characterizing the MP-closure
by Theorem \ref{MP-characterization} are preferential models, in which the preference relation is not necessarily modular. Indeed, the MP-closure does not satisfy the property of rational monotonicity, as we have seen from Example \ref{exa:counterexa_RM}. 
 While the adequacy of the property of Rational Monotonicity might be subject of discussion, in this section 
we consider rational supersets of the MP-closure.  We know that the lexicographic closure is such a superset, by Corollary 3.
Another rational superset of the MP-closure (and of the rational closure) can be obtained by applying to the MP-closure a simplified version of the semantic construction proposed by Lehmann and Magidor \cite{whatdoes}, in their semantic characterization of the rational closure,
to transform a well-founded preferential model into a ranked model. 
This construction allows a rational consequence relation 
to be defined from the MP-closure.
 In this section we investigate its relations with  the lexicographic closure and show that 
these two closures do not coincide and are incomparable  i.e., none of them is stronger than the other one. \normalcolor

Given any MP-model of $K$, i.e., a preferential model $\enne = \langle \WW, <', v \rangle$ such that  $\enne \in {\cal F}(Min_{RC}(K))$, we want to minimally extend the preference relation $<'$ to a ranked preference relation $<^\Ra$, so to define a ranked model $\enne^\Ra = \langle \WW, <^\Ra, v \rangle$ of $K$.
%
The idea is that the height of a world in a preferential structure has to be regarded as its rank,
and we introduce two alternative (but equivalent) definitions of the rank of a world.
The first one (in Definition \ref{def:rank_in_MP}) is based on the definition of rank introduced in Section \ref{sez:RC} for ranked models (Definition \ref{definition_rank_prop}), which extends to preferential models\footnote{ Observe that, for a finite preferential model, the preference relation among worlds can be represented as a finite directed acyclic graph,
where the nodes are the worlds and the edges are given by the relation $<$.
As in the modular case,  the notion of the longest path between two nodes is well-defined, and for this reason we can exploit the same Definition \ref{definition_rank_prop} as for ranked models in Section \ref{sez:RC}. In the following proofs we will sometimes refer to the length of such paths (or chains of worlds).}. The second one is inspired to Lehmann and Magidor's construction.


\begin{definition} \label{def:rank_in_MP}
Given  $\enne = \langle \WW, <', v \rangle$ a preferential interpretation, we let $\enne^\Ra = \langle \WW, <^\Ra, v \rangle$
be the ranked interpretation where: the rank $k_{\enne^\Ra}(w)$ of a world $w \in \WW$ in $\enne^\Ra$
is the length of 
a maximal chain $w_0 <'  \ldots <' w$ from $w$ to a minimal world $w_0$ 
(i.e., a world for which there is no $z \in \WW$ such that $z<'w_0$). 
In particular, each minimal world $w_0$ is given rank $k_{\enne^\Ra}(w_0)=0$.
\end{definition}
As mentioned above, in their model-theoretic description of the rational closure, \linebreak Lehmann and Magidor introduce
``a construction that transforms a preferential model $W$ into a ranked model $W'$ letting all the states of $W$ sink as low as they can respecting the order of $W$, i.e., ranks the states in $W$ by their height in $W$'' \cite{whatdoes}. For finite knowledge bases over a finite language, we consider a simplified version of their iterative construction, and exploit it for transforming a finite preferential model $\enne$ of the MP-closure into a ranked one.
This provides an alternative inductive definition of the rank $k_{\enne^R}(w)$ of a world $w \in \WW$, 
which is equivalent to Definition \ref{def:rank_in_MP}. 

Given a preferential interpretation $\enne = \langle \WW, <', v \rangle$, let  $U_i$ be
the set of  worlds in $\WW$ with rank $i$, defined inductively as follows: 
\begin{align*}
U_0=& \{ w \mid w \in \WW \mbox{ and  there is no } z\in \WW, \mbox{ such that }z<'w\}\\
U_i=& \{ w \mid w \in \WW - (U_1 \cup \ldots U_{i-1}) \mbox{ and there is no } z \in \WW- (U_1 \cup \ldots U_{i-1}) \\
&\mbox{ \ \ \ \ \ \ \ \ \ such that } z<'w \};
\end{align*} 
One can easily prove, by induction on the rank $i$, that:
\begin{proposition}\label{ranks_MP_models} 
Given a preferential interpretation  $\enne = \langle \WW, <', v \rangle$, 
the ranked interpretation $\enne^R = \langle \WW, <^R, v \rangle$ can be defined  by letting, for all $w \in \WW$,
$k_{\enne^R}(w)=i$ if $w \in U_i$.
\end{proposition}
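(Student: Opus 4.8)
The plan is to prove the stated equivalence by strong induction on $i$, establishing that for every world $w$ the rank $k_{\enne^\Ra}(w)$ of Definition \ref{def:rank_in_MP} (the length of the longest $<'$-chain from a minimal world up to $w$) equals $i$ if and only if $w \in U_i$; this is exactly what makes the inductive construction of the sets $U_i$ agree with Definition \ref{def:rank_in_MP}, and it also shows \emph{en passant} that the $U_i$ partition $\WW$, so that each world lies in a unique $U_i$ and $k_{\enne^\Ra}(w)=i$ is well defined. The first thing I would record is the recursive characterization of the longest-chain rank that drives the whole argument: since $\WW$ is finite and $<'$ is a strict partial order (so there are no infinite descending chains), $k_{\enne^\Ra}(w)$ is well defined, with $k_{\enne^\Ra}(w)=0$ precisely when $w$ is $<'$-minimal, and for a non-minimal $w$
\[
k_{\enne^\Ra}(w) = 1 + \max \{ k_{\enne^\Ra}(z) \mid z <' w \},
\]
because a longest chain ending at $w$ is obtained by appending $w$ to a longest chain ending at one of its $<'$-predecessors.

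The base case $i=0$ is immediate: $w \in U_0$ iff $w$ has no $<'$-predecessor iff the only chain ending at $w$ is $w$ itself, i.e.\ $k_{\enne^\Ra}(w)=0$. For the inductive step, I would first use the induction hypothesis to record the bookkeeping identity that the worlds removed before stage $i$ are exactly those of smaller rank, namely $U_0 \cup \cdots \cup U_{i-1} = \{ w \in \WW \mid k_{\enne^\Ra}(w) < i \}$. This identity is what lets the purely local condition defining $U_i$ (minimality among the worlds surviving stage $i-1$) be read off as a condition on ranks.

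For the forward direction, if $w \in U_i$ with $i \geq 1$, then $w$ survives stage $i-1$, so by the identity above $k_{\enne^\Ra}(w) \geq i$; moreover $w$ is minimal among the survivors, so every $z$ with $z <' w$ has already been removed and hence satisfies $k_{\enne^\Ra}(z) \leq i-1$, whence the recursion gives $k_{\enne^\Ra}(w) \leq i$, and equality follows. For the converse, if $k_{\enne^\Ra}(w)=i \geq 1$ then $w$ is not among the worlds of smaller rank and so survives stage $i-1$; and it must be minimal among the survivors, for otherwise some surviving $z$ with $z <' w$ would have $k_{\enne^\Ra}(z) \geq i$, forcing $k_{\enne^\Ra}(w) \geq i+1$ by the recursion, a contradiction. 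Hence $w \in U_i$, which completes the induction.

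I do not expect a genuine obstacle here, as the author already signals by writing ``one can easily prove''; the argument is a routine strong induction. The only two points that need a little care are the well-definedness of the longest-chain rank, which rests on finiteness of $\WW$ (no infinite ascending chain can end at $w$), and the bookkeeping identity $U_0 \cup \cdots \cup U_{i-1} = \{ w \mid k_{\enne^\Ra}(w) < i \}$, which must be in place \emph{before} the local minimality condition defining $U_i$ can be translated into a statement about ranks.
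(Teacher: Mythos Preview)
Your proposal is correct and follows exactly the approach the paper signals: the paper merely writes ``one can easily prove, by induction on the rank $i$'' and states the proposition without giving any details, so your strong induction on $i$ with the recursive characterization of the longest-chain rank and the bookkeeping identity $U_0\cup\cdots\cup U_{i-1}=\{w\mid k_{\enne^\Ra}(w)<i\}$ is precisely the routine argument the authors have in mind. There is nothing to add or compare.
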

\normalcolor

To show that when $\enne$ is an MP-model of $K$ $\enne^\Ra$ is a model of $K$  as well, 
we need the following proposition.

\begin{proposition}\label{prop:<' subseteq <^R}
Given a preferential interpretation $\enne = \langle \WW, <', v \rangle$ and the ranked interpretation $\enne^\Ra = \langle \WW, <^{\Ra}, v \rangle$  defined as above,
it holds that $<' \subseteq <^\Ra$.
\end{proposition}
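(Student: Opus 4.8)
The plan is to unfold the meaning of $<^\Ra$ and reduce the claim to a statement purely about ranks. By construction $\enne^\Ra$ is the \emph{ranked} interpretation whose preference relation is read off from the ranking function $k_{\enne^\Ra}$, i.e. $x <^\Ra y$ iff $k_{\enne^\Ra}(x) < k_{\enne^\Ra}(y)$ (this is the standard passage from a rank to a modular relation recalled right after Definition \ref{definition_rank_prop}). Hence it suffices to prove the single implication: if $x <' y$ then $k_{\enne^\Ra}(x) < k_{\enne^\Ra}(y)$.

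First I would fix $x,y \in \WW$ with $x <' y$ and set $i = k_{\enne^\Ra}(x)$. Since $\enne$ is finite, the strict partial order $<'$ is well-founded, so every world lies above some minimal world (or is itself minimal); and by Definition \ref{def:rank_in_MP}, together with the footnote observing that the longest path between two nodes of a finite acyclic relation is well-defined, the value $i$ is realized by a chain $w_0 <' w_1 <' \cdots <' w_i = x$ of length $i$ starting from a minimal world $w_0$ (if $x$ is itself minimal, then $i=0$ and the chain is just $w_0 = x$).

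The key step is a one-element extension of this chain. Since $x <' y$ and $<'$ is transitive, appending $y$ yields $w_0 <' w_1 <' \cdots <' w_i = x <' y$, a $<'$-chain from the minimal world $w_0$ to $y$ of length $i+1$. Because $k_{\enne^\Ra}(y)$ is the length of a longest such chain to $y$, we obtain $k_{\enne^\Ra}(y) \geq i+1 > i = k_{\enne^\Ra}(x)$. By the reading of $<^\Ra$ above this gives $x <^\Ra y$, and since $x,y$ were arbitrary, $<' \subseteq <^\Ra$.

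An equivalent and slightly more economical route avoids speaking of longest chains and argues directly from the stratification $U_0, U_1, \ldots$ of Proposition \ref{ranks_MP_models}: if $x <' y$ with $x \in U_i$ and $y \in U_j$, then $j \leq i$ would place $x$ in $\WW - (U_0 \cup \cdots \cup U_{j-1})$ while $x <' y$, contradicting the very definition of $U_j$, so $j > i$. The only point to watch, and the reason the argument is not entirely routine in the non-modular case, is that the rank of a world must be taken as the length of a \emph{longest} ascending chain from a minimal element, not of an arbitrary maximal one; this is exactly what the $U_i$ construction computes, and it is what keeps the chain-extension step valid. I do not expect any genuine obstacle beyond making this interpretation explicit: the statement is in essence the observation that adding one edge to a longest path strictly increases the height of its endpoint.
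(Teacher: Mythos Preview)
Your proposal is correct and follows essentially the same approach as the paper: both argue that if $x <' y$, then any longest chain from a minimal world to $x$ can be extended by one step to a chain to $y$, forcing $k_{\enne^\Ra}(x) < k_{\enne^\Ra}(y)$ and hence $x <^\Ra y$. Your write-up is more detailed than the paper's (which simply asserts the longest-path inequality), and your alternative via the $U_i$ stratification and your remark about reading ``maximal chain'' as ``longest chain'' are useful clarifications that the paper leaves implicit.
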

\begin{proof}
The proof is immediate. Just observe that, if  $x <' y$ holds for some $x,y \in \WW$ in $\enne$, in the associated graph,
 the length of the longest path  $w_0 <'  \ldots <' y$ from $y$ to a minimal world $w_0$
 must be larger than
the length of the longest path  $w'_0 <'  \ldots <' x$ from $x$ to a minimal world $w'_0$.
Hence, $k_{\enne^\Ra}(x) < k_{\enne^\Ra}(y)$, and then $x <^\Ra y$.
\hfill $\Box$
\end{proof}
From Proposition \ref{prop:<' subseteq <^R} and Proposition \ref{prop:<' subseteq <''},
it is easy to prove that when $\enne$ is an MP-model of a knowledge base $K$, then  $\enne^\Ra$ is a model of $K$. 
\begin{proposition}\label{prop:N^R model of K}
Let $\enne$ be an MP-model of $K$ and let $\enne^\Ra$ be the ranked interpretation as defined above.
$\enne^\Ra$  satisfies a superset of the conditionals satisfied by $\enne$ and is a model of $K$.
\end{proposition}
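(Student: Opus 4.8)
The plan is to obtain the statement as an immediate consequence of the two preceding propositions, so that the proof is essentially a composition of earlier results. First I would recall that, since $\enne$ is an MP-model of $K$, it is in particular a model of $K$: this is exactly Proposition \ref{enne_is_model_of_K}, which guarantees $\enne \models A \ent C$ for every $A \ent C \in K$.

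Next I would establish the inclusion of the two preference relations. By Proposition \ref{prop:<' subseteq <^R}, the ranked interpretation $\enne^\Ra = \langle \WW, <^\Ra, v \rangle$ built from $\enne = \langle \WW, <', v \rangle$ satisfies $<' \subseteq <^\Ra$. The two interpretations share the same set of worlds $\WW$ and the same valuation $v$, and $\enne^\Ra$, being ranked, is in particular a preferential interpretation (a ranked model is a preferential model by Definition \ref{semantica_rational}). These are precisely the hypotheses of Proposition \ref{prop:<' subseteq <''}, applied with $\enne' = \enne$ and $\enne'' = \enne^\Ra$. Invoking it would yield: for every conditional $A \ent C$, if $\enne \models A \ent C$ then $\enne^\Ra \models A \ent C$. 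This is exactly the claim that $\enne^\Ra$ satisfies a superset of the conditionals satisfied by $\enne$.

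Finally, combining the two steps, for every $A \ent C \in K$ we would have $\enne \models A \ent C$ (first step) and hence $\enne^\Ra \models A \ent C$ (second step), so $\enne^\Ra$ is a model of $K$.

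There is no real obstacle here; the only points deserving a line of justification are that a ranked interpretation counts as a preferential one, so that Proposition \ref{prop:<' subseteq <''} applies, and---if one wishes to be fully explicit---that $\enne^\Ra$ is a legitimate ranked interpretation at all, i.e.\ that $<^\Ra$ is modular and the smoothness condition holds. Both are automatic in the present setting, since $<^\Ra$ is induced by the integer ranks $k_{\enne^\Ra}$ and the models are finite, smoothness being vacuous for finite structures as noted in the footnote to Definition \ref{semantica_rational}.
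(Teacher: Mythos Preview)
Your proof is correct and follows essentially the same approach as the paper: invoke Proposition~\ref{prop:<' subseteq <^R} to get $<' \subseteq <^\Ra$, then apply Proposition~\ref{prop:<' subseteq <''} to conclude that $\enne^\Ra$ satisfies every conditional that $\enne$ does, and finally use the fact that $\enne$ is a model of $K$ (the paper states this directly, while you cite Proposition~\ref{enne_is_model_of_K}). Your additional remarks on modularity and smoothness of $\enne^\Ra$ are more explicit than the paper's treatment but are not essential to the argument.
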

\begin{proof} 
Let $\enne= \langle \WW, <', v \rangle$ be an MP-model of $K$.
As both $\enne$ and $\enne^\Ra$ are preferential interpretations and, by Proposition \ref{prop:<' subseteq <^R}, $<' \subseteq <^\Ra$,
we can conclude, as a consequence of Proposition  \ref{prop:<' subseteq <''},
that all the conditionals $A \ent C$ satisfied by $\enne$ are satisfied by $\enne^\Ra$ as well.
In particular, as $\enne$ is a model of $K$, $\enne$ satisfies all the conditionals in $K$ and $\enne^\Ra$ does as well.
\hfill $\Box$
%
%
\end{proof}

We will now consider the rational consequence relation containing all the conditionals satisfied in $\enne^\Ra$, for some $\enne \in Min_{RC}(K)$.
Let us define a new functor ${\cal F}^\Ra$ as follows:
\begin{align} \label{eq:functorF^R}
{\cal F}^\Ra(\emme)= \{\enne^\Ra \mid \enne= {\cal F}(\emme)\}.
\end{align}
In particular, given a model $\emme  \in Min_{RC}(K)$, the functor ${\cal F}^\Ra$ first applies the functor ${\cal F}$ to $\emme$ and then transforms the resulting preferential model $\enne$ into a ranked model $\enne^\Ra$.

For a satisfiable knowledge base $K$,
we have shown in Proposition \ref{prop:same_conditionals}  that all models $\enne$ in $ {\cal F}(Min_{RC}(K))$ satisfy the same conditionals and, essentially, define the same preference relation among worlds. 
As a consequence, a single model in $ {\cal F}(Min_{RC}(K))$,  call it $\enne_{MP}$, 
already provides a semantic characterization of the MP-closure of a satisfiable knowledge base $K$,
in the following sense:
\begin{align*}
\enne_{MP} \models A \ent C \mbox{ iff } A \ent C \in {\cal MP}_K. 
\end{align*}
Let us now consider the ranked model  $\enne_{MP}^\Ra \in  {\cal F}^\Ra (Min_{RC}(K))$, obtained from $\enne_{MP}$ according to the construction above. 
As a consequence of the result proven by  Lehmann and Magidor \cite{whatdoes}, that the consequence relation defined by any ranked model is rational,
the model $\enne_{MP}^\Ra$ defines a rational consequence relation, that we call  ${\cal MP}_K^\Ra$, where
\begin{align*}
{\cal MP}_K^\Ra = \{ A \ent C \mid \;  \enne_{MP}^\Ra \models A \ent C \}.
\end{align*}
%
The definition of the rational consequence relation ${\cal MP}_K^\Ra$ does not depend on the choice of the model $\enne_{MP}$ in $ {\cal F}(Min_{RC}(K))$,
as shown by the following proposition.
\begin{proposition} \label{prop:modelli  N^R equivalenti}
For all $\enne_1, \enne_2 \in {\cal F}(Min_{RC}(K))$, $\enne_1^\Ra$ and $ \enne_2^\Ra$ satisfy the same conditionals. 
\end{proposition}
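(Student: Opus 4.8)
The plan is to mirror the proof of Proposition \ref{prop:same_conditionals}, but carried out for the ranked models $\enne^\Ra$ in place of the preferential models $\enne$. The single new ingredient I need is a \emph{rank lemma}, an analogue of Lemma \ref{same valuation}: writing $\enne_1=\langle \WW_1, <', v_1\rangle$ and $\enne_2=\langle \WW_2, <'', v_2\rangle$, if $x_1 \in \WW_1$ and $x_2 \in \WW_2$ have valuations $v_1(x_1)$ and $v_2(x_2)$ that coincide over ${\cal L}_{K,Q}$, then $k_{\enne_1^\Ra}(x_1)=k_{\enne_2^\Ra}(x_2)$. Once this is in hand, the proposition follows by a contraposition argument essentially identical to that of Proposition \ref{prop:same_conditionals}.

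First I would prove the rank lemma. By Proposition \ref{ranks_MP_models} (equivalently Definition \ref{def:rank_in_MP}), the rank $k_{\enne^\Ra}(w)$ is the height of $w$ in $(\WW,<')$, i.e. the length of the longest $<'$-chain from a minimal world up to $w$. By condition (\ref{order_on_worlds}), $x <' y$ holds iff $V(x) \prec^{MP} V(y)$, and each violation set $V(w)$ is determined solely by the valuation of $w$ over the language of $K$, hence over ${\cal L}_{K,Q}$. Consequently every $<'$-chain ending at $w$ projects to a strictly increasing $\prec^{MP}$-chain of violation sets ending at $V(w)$, and conversely any such chain of \emph{realized} violation sets lifts back to a $<'$-chain by choosing a world realizing each set (worlds sharing a violation set are $<'$-incomparable, so they contribute nothing to chain length). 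Hence $k_{\enne^\Ra}(w)$ equals the length of the longest $\prec^{MP}$-chain of realized violation sets reaching $V(w)$ from a $\prec^{MP}$-minimal realized set. It then remains to observe that $\enne_1$ and $\enne_2$ realize exactly the same collection of violation sets: both are canonical (Definition \ref{canonical_model}), so each realizes every valuation over ${\cal L}_{K,Q}$ that is compatible with $K$, while every world of a model of $K$ carries a compatible valuation; since $V(\cdot)$ depends only on the valuation, the realized violation sets coincide. As $\prec^{MP}$ is fixed, the height of $V(w)$ is the same quantity computed in either model, which is the rank lemma.

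With the rank lemma established, I would argue by contraposition as in Proposition \ref{prop:same_conditionals}. Assuming $\enne_1^\Ra \models A \ent C$, take any $w \in Min_{<^\Ra}^{\enne_2^\Ra}(A)$ and suppose $\enne_2^\Ra, w \not\models C$. Its valuation over ${\cal L}_{K,Q}$ is compatible with $K$, so by canonicity of $\enne_1$ there is $w' \in \WW_1$ with the same valuation, whence $\enne_1^\Ra, w' \models A \wedge \neg C$. To contradict $\enne_1^\Ra \models A \ent C$ I must show $w' \in Min_{<^\Ra}^{\enne_1^\Ra}(A)$. If not, there is $z' <^\Ra w'$ with $\enne_1^\Ra, z' \models A$, i.e. $k_{\enne_1^\Ra}(z') < k_{\enne_1^\Ra}(w')$; picking (again by canonicity) a world $z'' \in \WW_2$ with the same valuation as $z'$, the rank lemma gives $k_{\enne_2^\Ra}(z'') = k_{\enne_1^\Ra}(z') < k_{\enne_1^\Ra}(w') = k_{\enne_2^\Ra}(w)$, so $z'' <^\Ra w$ with $\enne_2^\Ra, z'' \models A$, contradicting $w \in Min_{<^\Ra}^{\enne_2^\Ra}(A)$. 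Thus $w' \in Min_{<^\Ra}^{\enne_1^\Ra}(A)$, forcing $\enne_1^\Ra, w' \models C$ against $\enne_1^\Ra, w' \models \neg C$. The converse implication is symmetric, as $\enne_1,\enne_2$ play interchangeable roles.

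The main obstacle is the rank lemma, and within it the point that the height of a world is an invariant of its valuation even though the two canonical models may contain different numbers of worlds and repeated valuations. The argument hinges on two facts that neutralise these differences: that $<'$ factors through the violation sets via condition (\ref{order_on_worlds}), so duplicated worlds are mutually incomparable, and that canonicity forces both models to realize precisely the compatible valuations, hence the same family of violation sets over which the $\prec^{MP}$-heights are computed. Some care is needed to confirm that a $\prec^{MP}$-minimal realized violation set is realized by a $<'$-minimal world and vice versa, so that the height measured inside the model genuinely coincides with the chain length in the $\prec^{MP}$-order on realized sets.
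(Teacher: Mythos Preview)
Your proof is correct and follows the same contraposition-plus-canonicity scheme as the paper's own argument, which explicitly mirrors the proof of Proposition~\ref{prop:same_conditionals}. Your explicit rank lemma is in fact a sharpening: at the crucial step the paper invokes Lemma~\ref{same valuation}, but that lemma only relates the preferential orders $<'$ in the MP-models, not the derived ranked orders $<^\Ra$; your reduction of $k_{\enne^\Ra}(w)$ to the height of $V(w)$ in the $\prec^{MP}$-poset of realized violation sets (which coincides across canonical models) supplies exactly the justification needed to transfer the argument to the ranked relations.
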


\begin{proof}
We prove that, for any conditional $A \ent C \in K$,
if $\enne_1^\Ra \models A \ent C$, then $\enne_2^\Ra \models A \ent C$. 
As $\enne_1$ and $ \enne_2 $ are any two models in $ {\cal F}(Min_{RC}(K))$, the claim follows.

Let $\enne_1 = \langle \WW_1, <_1, v_1 \rangle$ and $\enne_2 = \langle \WW_2, <_2, v_2 \rangle$
and assume that $\enne_1^\Ra \models A \ent C$.
We show that $\enne_2^\Ra \models A \ent C$, i.e., for all $w \in \WW_2$, if $w \in Min_{<_2}^{\enne_2^\Ra}(A)$, then $\enne_2^\Ra, w \models C$. 

Suppose, by absurd, that, for some $w \in \WW_2$, $w \in Min_{<_2}^{\enne_2^\Ra}(A)$ and  $\enne_2^\Ra, w \not \models C$.
Then there is truth assignment $v_0:  \mathit{ATM_{K,Q}} \longrightarrow \{true, false\}$ such that
$\enne_2^\Ra, w \models B$ iff $v_0(B)=true$, for all formulas $B \in {\cal L}_{K,Q}$.
Clearly, $v_0$ is compatible with $K$, as $\enne_2^\Ra$ is a model of $K$, by Proposition \ref{prop:N^R model of K}.

As $\enne_1$ is a canonical model of $K$
 ($\enne_1 \in {\cal F}( \emme)$ for some canonical model $\emme \in Min_{RC}(K)$),
$\enne_1^\Ra$is a canonical model of $K$ as well, and
there must be a world $w' \in \WW_1$ such that $\enne_1^\Ra, w' \models B$ iff $v_0(B)=true$, for all formulas $B \in {\cal L}_{K,Q}$.
In particular, $\enne_1^\Ra, w' \models A$ and $\enne_1^\Ra, w' \not \models C$.
We show that $w' \in Min_{<_1}^{\enne_1^\Ra}(A)$. 

If $w' \not \in Min_{<_1}^{\enne_1^\Ra}(A)$, there must be a $z' \in \WW_1$ such that $z' <_1 w'$ and $\enne_1^\Ra, z' \models A$. 
We show that we arrive at a contradiction with the assumption that $w \in Min_{<_2}^{\enne_2^\Ra}(A)$.

In fact, as $\enne_2$ 
is equal to ${\cal F}(\emme'')$ for some minimal canonical model $\emme''$ of $K$, there must be an element $z'' \in \WW_2$,
whose valuation corresponds to the same truth assignment over ${\cal L}_{K,Q}$ of $z'$ in $\enne_1^\Ra$.
In particular, as $z'$ satisfies $A$ in $\enne_1^\Ra$,  $z''$ satisfies $A$ in $\enne_2^\Ra$.
By Lemma \ref{same valuation}, as $v_2(w)$ and $v_1(w')$  coincide over the language ${\cal L}_{K,Q}$,
and  $v_2(z'')$ and $v_1(z')$ coincide over the language ${\cal L}_{K,Q}$, then 
\begin{center}
$z' <_1 w'$ iff $z'' <_2 w$.
\end{center}
Then, it must be $z'' <_2 w$, which contradicts the assumption that $w \in Min_{<_2}^{\enne_2^\Ra}(A)$.
Hence, it must be that $w' \in Min_{<_1}^{\enne_1^\Ra}(A)$.
As $\enne_1^\Ra, w' \not \models C$, we can conclude that $\enne_1^\Ra, w'  \not \models A \ent C$, which contradicts the hypothesis.
%
\hfill $\Box$
\end{proof}

Next proposition proves that the rational consequence relation ${\cal MP}_K^\Ra$ is a superset of ${\cal MP}_K$, the set of conditionals which belong to the MP-closure of $K$. 

\begin{proposition}
For a satisfiable knowledge base $K$,
${\cal MP}_K \subseteq {\cal MP}_K^\Ra$.
\end{proposition}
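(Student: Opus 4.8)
The plan is to derive the inclusion directly from the semantic characterization of the MP-closure together with the monotonicity of conditional satisfaction under refinement of the preference relation, both of which are already available. Concretely, I want to show that every conditional belonging to the MP-closure is satisfied by the ranked model $\enne_{MP}^\Ra$, since by definition ${\cal MP}_K^\Ra$ is precisely the set of conditionals satisfied by $\enne_{MP}^\Ra$.

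First I would fix an arbitrary $A \ent C \in {\cal MP}_K$. By the characterization Theorem \ref{MP-characterization}, together with Proposition \ref{prop:same_conditionals} (which ensures that all models in ${\cal F}(Min_{RC}(K))$ satisfy the same conditionals), the representative model $\enne_{MP} \in {\cal F}(Min_{RC}(K))$ satisfies $A \ent C$, i.e.\ $\enne_{MP} \models A \ent C$. Next I would move from the preferential model $\enne_{MP}$ to the ranked model $\enne_{MP}^\Ra$ obtained by letting the worlds sink as low as they can. By Proposition \ref{prop:N^R model of K}, $\enne_{MP}^\Ra$ satisfies a superset of the conditionals satisfied by $\enne_{MP}$; hence from $\enne_{MP} \models A \ent C$ we conclude $\enne_{MP}^\Ra \models A \ent C$. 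By the definition of ${\cal MP}_K^\Ra$ this means $A \ent C \in {\cal MP}_K^\Ra$, and since $A \ent C$ was arbitrary the desired inclusion ${\cal MP}_K \subseteq {\cal MP}_K^\Ra$ follows.

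The only delicate step is the implication $\enne_{MP} \models A \ent C \Rightarrow \enne_{MP}^\Ra \models A \ent C$, but this has already been settled: Proposition \ref{prop:N^R model of K} rests on the inclusion $<' \subseteq <^\Ra$ (Proposition \ref{prop:<' subseteq <^R}) combined with the general fact that enlarging the preference relation can only enlarge the set of satisfied conditionals (Proposition \ref{prop:<' subseteq <''}). Consequently there is no genuine obstacle remaining, and the statement is essentially a corollary of the preceding propositions rather than requiring a new construction. I would further remark that, since all models in ${\cal F}(Min_{RC}(K))$ and their ranked counterparts satisfy the same conditionals, the argument is independent of the particular choice of the representative $\enne_{MP}$.
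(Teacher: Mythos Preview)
Your proposal is correct and follows essentially the same approach as the paper's proof: both identify ${\cal MP}_K$ with the conditionals satisfied by $\enne_{MP}$, identify ${\cal MP}_K^\Ra$ with those satisfied by $\enne_{MP}^\Ra$, and then invoke Proposition \ref{prop:N^R model of K} (the ranked model satisfies a superset of the conditionals satisfied by the MP-model) to obtain the inclusion. Your version is slightly more explicit in unpacking the reliance on Propositions \ref{prop:<' subseteq <^R} and \ref{prop:<' subseteq <''}, but the argument is the same.
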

\begin{proof}
We have seen that ${\cal MP}_K$ is the set of  conditionals satisfied by an MP-model $\enne_{MP}$ and that, by construction, 
$ {\cal MP}_K^\Ra$ is the set of conditionals satisfied by the ranked model $\enne_{MP}^\Ra$.
As, by Proposition \ref{prop:N^R model of K},  $\enne_{MP}^\Ra$ satisfies a superset of the conditionals satisfied by $\enne_{MP}$,
we can conclude that ${\cal MP}_K \subseteq {\cal MP}_K^\Ra$.
%
\hfill $\Box$
\end{proof}
To see that the converse inclusion $ {\cal MP}_K^\Ra \subseteq {\cal MP}_K $ does not hold, let us consider again Example \ref{exa:counterexa_RM}, showing a knowledge base $K$ such that $ {\cal MP}_K^\Ra$ contains a conditional  which is not in $ {\cal MP}_K$. 
\begin{example}
Let $K$ be the knowledge base:
\begin{quote}
1. $\mathit{ Student \ent Merry}$\\
2. $\mathit{Student \ent  Young}$\\
3. $\mathit{Adult \ent Serious}$\\
4. $\mathit{Student \wedge Adult \ent (\neg Young \wedge \neg Merry) \vee \neg Serious}$
\end{quote}
It is possible to see that the conditional $\mathit{Student \wedge Adult \ent Young}$ which (as we have seen in Example \ref{exa:counterexa_RM}) does not belong to ${\cal MP}_K$,
on the contrary, belongs to ${\cal MP}_K^\Ra$.
Indeed, if we consider any preferential model $\enne_{MP} = \langle \WW, <^R v \rangle$ in  ${\cal F} (Min_{RC}(K))$,  the ranked model $\enne_{MP}^\Ra=  \langle \WW, <^R v \rangle$ satisfies  
$\mathit{Student \wedge Adult \ent Young}$.
Let us explain why.

First, there must be a world $w \in \WW$ whose propositional valuation is
 $\mathit{ v(w)= }$ $\mathit{ \{Student, Adult, Merry, Young \}}$, as this valuation is compatible with $K$ and the model $\enne_{MP}$ is a canonical model.
The world $w$ satisfies all the conditionals in $K$ except  conditional 3: $\mathit{ Adult \ent Serious}$.
Thus the set $V(w)$ of the conditionals violated by $w$ is associated with the tuple $\langle \emptyset, \emptyset, \{3\} \rangle_{V(w)}$ (conditional 3 has rank 0).
All the worlds $z\in \WW$ such that $z <' w$ must have the associated tuple  $\langle \emptyset, \emptyset, \emptyset \rangle_{V(z)}$
that is, they violate no conditional, and they must have rank 0 in $\enne_{MP}^\Ra$ (i.e., $k_{\enne_{MP}^\Ra}(z)=0$).
Therefore, $w$ must have rank $1$ in $\enne_{MP}^\Ra$ (i.e., $k_{\enne_{MP}^\Ra}(w)=1$).
One can see that 
$w$ is the unique minimal world in $\enne_{MP}^\Ra$ satisfying $\mathit{Student \wedge Adult}$.
In fact, the world $x \in \WW$ such that $\mathit{ v(x)= \{Student, Adult, Serious \}}$ (which falsifies $Young$) has rank $2$ in $\enne_{MP}^\Ra$.
While in the preferential model $\enne_{MP}$ both $x$ and $w$ are minimal worlds satisfying  $\mathit{Student \wedge Adult}$
(i.e., $x,w \in Min_<^{\enne_{MP}} (Student \wedge Adult)$),
so that $\enne_{MP} \not \models \mathit{Student \wedge Adult \ent Young}$,
in the ranked model $\enne_{MP}^\Ra$ the world $x$ has rank $2$ and is not a minimal world satisfying  $\mathit{Student \wedge Adult}$. Indeed $x$ violates two conditionals (conditionals 1 and 2) and it has a rank higher than the worlds falsifying just one of these two defaults. In particular, a world $y$ such that  $\mathit{ v(y)= \{Student, Adult, Merry,}$ $\mathit{ Serious \}}$ must be as well in $\WW$ 
and $y<' x$, as $y$ violates a (strict) subset of the conditionals violated by $x$ (conditional 2, but not 1). World $y$ has rank $1$ in $\enne_{MP}^\Ra$.
Therefore, $x$ is not minimal among the worlds satisfying $Student \wedge Adult$ in $\enne_{MP}^\Ra$
and it holds that $\mathit{Min_{<^\Ra}^{\enne_{MP}^\Ra} (Student \wedge Adult)}=\{w\}$. Thus, $\enne_{MP}^\Ra \models \mathit{Student \wedge Adult \ent Young}$.
\end{example}

In this example, the model $\enne_{MP}^\Ra$ looks  similar to a model of the lexicographic closure (according to the semantics by Lehmann),
as the rank of a world in $\enne_{MP}^\Ra$, as shown above,  also depends on the number of defaults it satisfies.
One may wonder whether the rational consequence relation ${\cal MP}_K^\Ra$  coincides with the lexicographic closure of $K$, ${\cal LC}_K$.
The next example shows that this is not the case, 
providing a  knowledge base $K$ which falsifies both the inclusions  ${\cal MP}_K^\Ra \subseteq {\cal LC}_K$ and 
${\cal LC}_K \subseteq {\cal MP}_K^\Ra$.

\begin{example}
Let $K$ be the knowledge base:
\begin{quote}
1. $\mathit{ A \ent E\wedge F}$\\
2. $\mathit{A \ent  E\wedge F \wedge E}$\\
3. $\mathit{A \ent  E\wedge F \wedge E \wedge F}$\\
4. $\mathit{C \ent \neg E}$\\
5. $\mathit{C \ent \neg F}$
\end{quote}
The knowledge base $K$ is satisfiable and, in the rational closure construction, all the conditionals have rank $0$. 
In the lexicographic closure, $D=\{1,2,3\}$ is the only basis for $A \wedge C$, and the conditional $\mathit{ A \wedge C \ent E}$ is in ${\cal LC}_K$.
In the MP-closure, there are two bases  for $A \wedge C$: $D=\{1,2,3\}$ and $D'=\{4,5\}$, and neither  $\mathit{ A \wedge C \ent E}$ nor
$\mathit{ A \wedge C \ent \neg E}$ are in ${\cal MP}_K$.
We show that the ${\cal MP}_K^\Ra$ contains $\mathit{ A \wedge C \ent \neg E}$.

Consider an MP-model $\enne=\langle \WW, <' v \rangle$ of $K$ and the model $\enne^\Ra$.
Consistently with the fact that in the MP-closure there are two bases $D$ and $D'$ for $A\wedge C$,  $\enne$ contains two minimal worlds  $x$ and $y$ satisfying $A \wedge C$, where
$v(x)=\{A,C,E,F\}$ and $v(y)=\{A,C\}$, so that $x$ violates defaults 4 and 5 ($V(x)= \langle \emptyset, \{4,5\} \rangle$) and $y$ violates defaults 1, 2 and 3 ($V(y)= \langle \emptyset, \{1,2,3\} \rangle$) .

All worlds $z$ such that $z <'y$ do not violate any default and have rank $0$ in $\enne^\Ra$, so that $y$ has  rank $1$ in $\enne^\Ra$.
In fact, there is no propositional assignment (and no world in $\enne$)  that may violate a subset of the defaults 1, 2 or 3, but not all of them.

Concerning $x$, there are two propositional assignments, $\{C,E\}$ and $\{C,F\}$violating a (strict) subset of the defaults violated by $x$.
As $\enne$ is canonical there are two worlds $z$ and $w$ such that
$v(z)=\{C,E\}$, $v(w)=\{C,F\}$,  $z<'x$ and $w<'x$.
In fact, $z$ violates default  4 ($V(z)= \langle \emptyset, \{5\} \rangle$) and $w$ violates default 5 ($V(w)= \langle \emptyset, \{4\} \rangle$) .
Both $z$ and $w$ have rank $1$ in $\enne^\Ra$, so that $x$ has rank $2$.

While $x$ and $y$ are both minimal $A \wedge C$ worlds in the MP-model $\enne$, on the contrary, $y <^\Ra x$ in $\enne^\Ra$, as $x$ has rank $2$
while $y$ has rank $1$. 
It can be seen that 
all the minimal worlds satisfying $A \wedge C$ in $\enne^\Ra$ have the same truth assignment as $y$, and falsify $E$ and $F$.
Hence, $\mathit{ A \wedge C \ent \neg E}$ is satisfied by $\enne^\Ra$ and is in ${\cal MP}_K^\Ra$.
On the contrary, $\mathit{ A \wedge C \ent E}$ is not in ${\cal MP}_K^\Ra$.

As $\mathit{ A \wedge C \ent \neg E}$ is not in the lexicographic closure of $K$,
 we can conclude that $\mathit{ A \wedge C \ent \neg E}$ is in ${\cal MP}_K^\Ra - {\cal LC}_K$ while 
$\mathit{ A \wedge C \ent E}$ is in ${\cal LC}_K- {\cal MP}_K^\Ra$.
\end{example}
We have seen previously that ${\cal RC}_K \subseteq {\cal MP}_K  \subseteq {\cal LC}_K$, and that, for a satisfiable knowledge base $K$,
$ {\cal MP}_K  \subseteq {\cal MP}_K^\Ra$.
The last example shows that ${\cal MP}_K^\Ra$ and ${\cal LC}_K$ are incomparable. 

\begin{corollary}
There is some knowledge base $K$ such that neither   ${\cal MP}_K^\Ra \subseteq {\cal LC}_K$ nor ${\cal LC}_K \subseteq {\cal MP}_K^\Ra$ hold.
\end{corollary}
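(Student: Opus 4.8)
The plan is to produce the required $K$ by exhibiting an explicit witness and reading off two separating conditionals, one for each direction of non-inclusion. The natural candidate is exactly the knowledge base displayed in the example immediately preceding this corollary, namely the $K$ with defaults $A \ent E \wedge F$, $A \ent E \wedge F \wedge E$, $A \ent E \wedge F \wedge E \wedge F$, $C \ent \neg E$, and $C \ent \neg F$. All five defaults have rank $0$ in the rational closure, so the whole computation takes place at a single rank, which keeps the bookkeeping manageable. I would therefore structure the proof as: first recall from that example the two facts $A \wedge C \ent \neg E \in {\cal MP}_K^\Ra$ with $A \wedge C \ent \neg E \notin {\cal LC}_K$, and $A \wedge C \ent E \in {\cal LC}_K$ with $A \wedge C \ent E \notin {\cal MP}_K^\Ra$; then conclude incomparability.

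The first fact witnesses ${\cal MP}_K^\Ra \not\subseteq {\cal LC}_K$, and the second witnesses ${\cal LC}_K \not\subseteq {\cal MP}_K^\Ra$; together they give the statement. To justify the lexicographic-closure side I would compute the seriousness ordering $\prec$ on the two candidate sets $D = \{1,2,3\}$ and $D' = \{4,5\}$ whose materializations are each consistent with $A \wedge C$: since $D$ contains three rank-$0$ defaults and $D'$ only two, we have $D' \prec D$, so $D$ is the unique $\prec$-basis, giving $A \wedge C \ent E \in {\cal LC}_K$ and $A \wedge C \ent \neg E \notin {\cal LC}_K$. For the MP side I would instead use $\prec^{MP}$, under which $D$ and $D'$ are incomparable (neither violation set is contained in the other), so both are MP-bases and neither $E$ nor $\neg E$ is an MP-conclusion.

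The genuinely load-bearing step, and the one I expect to be the main obstacle, is verifying $A \wedge C \ent \neg E \in {\cal MP}_K^\Ra$, i.e. that passing from the MP-model $\enne$ to the ranked model $\enne^\Ra$ via Definition \ref{def:rank_in_MP} breaks the tie in favour of $\neg E$. Here I would argue that, by canonicity of $\enne$, the relevant worlds all occur: the world $y$ with $v(y) = \{A,C\}$ violates $\{1,2,3\}$, and the world $x$ with $v(x) = \{A,C,E,F\}$ violates $\{4,5\}$, and both are $<'$-minimal $A \wedge C$ worlds in $\enne$. The key asymmetry is that no propositional assignment violates a proper nonempty subset of $\{1,2,3\}$ (the three consequents are logically equivalent), so every $z <' y$ violates nothing and $y$ receives rank $1$ in $\enne^\Ra$; whereas the worlds $\{C,E\}$ and $\{C,F\}$ each violate a single element of $\{4,5\}$, forcing $x$ to rank $2$. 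Hence $y <^\Ra x$, the unique $<^\Ra$-minimal $A \wedge C$ worlds falsify $E$, and $A \wedge C \ent \neg E$ holds in $\enne^\Ra$. The care needed is precisely in counting which subsets of violated defaults are realizable by some valuation, which is what makes the ranks in $\enne^\Ra$ asymmetric even though $\enne$ treats $x$ and $y$ symmetrically.
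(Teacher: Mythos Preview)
Your proposal is correct and takes essentially the same approach as the paper: the corollary is stated immediately after the example you cite, and the paper's proof consists precisely of that example, with the two separating conditionals $A \wedge C \ent E \in {\cal LC}_K \setminus {\cal MP}_K^\Ra$ and $A \wedge C \ent \neg E \in {\cal MP}_K^\Ra \setminus {\cal LC}_K$ computed exactly as you outline. Your rank computation in $\enne^\Ra$ (the asymmetry between the realizable strict subsets of $\{4,5\}$ versus the unrealizability of strict nonempty subsets of $\{1,2,3\}$) is the same key observation the paper uses.
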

We observe that, while the lexicographic closure of $K$ and the MP-closure have a  definition in terms of maxiconsistent sets the rational extension of the MP-closure ${\cal MP}_K^\Ra$ has only a semantic definition.  We have introduced it to show that there is at least another rational consequence relation, different from the lexicographic closure and incomparable with it, which is still a natural superset of the MP-closure (and of the rational closure).  

\bigskip

To conclude this section we show  that, among the ranked models whose modular preference relation extends $<'$,  $\enne^\Ra$ is the least one, with   
respect to $<_{FIMS}$. 
\begin{proposition}\label{proposition:FIMS}
Let $\enne= \langle \WW, <', v \rangle$ in $ {\cal F}(Min_{RC}(K))$. 
For all  ranked interpretations $\enne^*= \langle \WW, <^*, v \rangle$ such that $<' \subseteq  <^*$ and $\enne^* \neq \enne^\Ra$, it holds $\enne^\Ra <_{FIMS} \enne^*$. 
\end{proposition}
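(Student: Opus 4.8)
The plan is to reduce the statement to a comparison of the two ranking functions and then argue pointwise. Since $\enne^\Ra = \langle \WW, <^\Ra, v \rangle$ and $\enne^* = \langle \WW, <^*, v \rangle$ share the same set of worlds $\WW$ and the same valuation $v$, the relation $<_{FIMS}$ (Definition \ref{Preference between models in case of fixed valuation}) does apply, and to conclude $\enne^\Ra <_{FIMS} \enne^*$ it suffices to show two things: that $k_{\enne^\Ra}(x) \leq k_{\enne^*}(x)$ for every $x \in \WW$, and that strict inequality holds for at least one world. Note that $\enne^\Ra$ is itself a ranked interpretation extending $<'$: by Proposition \ref{prop:<' subseteq <^R} we have $<' \subseteq <^\Ra$, so $\enne^\Ra$ is a legitimate member of the family of candidates, and the proposition asserts it is the $<_{FIMS}$-least among them.

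For the pointwise inequality I would fix $x \in \WW$ and set $n = k_{\enne^\Ra}(x)$. By Definition \ref{def:rank_in_MP} (equivalently by the inductive description in Proposition \ref{ranks_MP_models}), $n$ is the length of a longest $<'$-chain $w_0 <' w_1 <' \cdots <' w_n = x$ running from a $<'$-minimal world $w_0$ up to $x$. Since $<' \subseteq <^*$, this very chain is also a $<^*$-chain $w_0 <^* w_1 <^* \cdots <^* w_n = x$. In the ranked model $\enne^*$ the rank strictly increases along $<^*$ (any longest $<^*$-chain reaching $w_i$ can be extended by $w_{i+1}$, so $k_{\enne^*}(w_i) < k_{\enne^*}(w_{i+1})$, cf. Definition \ref{definition_rank_prop}). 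Hence $k_{\enne^*}(w_0) < k_{\enne^*}(w_1) < \cdots < k_{\enne^*}(w_n)$ is a strictly increasing sequence of $n+1$ non-negative integers, which forces $k_{\enne^*}(x) = k_{\enne^*}(w_n) \geq n = k_{\enne^\Ra}(x)$. This gives $k_{\enne^\Ra}(x) \leq k_{\enne^*}(x)$ for all $x \in \WW$.

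For the strict part I would use the hypothesis $\enne^* \neq \enne^\Ra$. Both are ranked (modular) models over the same $\WW$ and $v$, and for a modular relation the preference is recovered from its ranking function by $a <^* b$ iff $k_{\enne^*}(a) < k_{\enne^*}(b)$ (as recalled in Section \ref{sez:RC}); the same holds for $<^\Ra$. Consequently two ranked models over the same $\WW, v$ coincide exactly when their ranking functions coincide, so $\enne^* \neq \enne^\Ra$ yields a world $x'$ with $k_{\enne^\Ra}(x') \neq k_{\enne^*}(x')$. Combined with the pointwise inequality $k_{\enne^\Ra}(x') \leq k_{\enne^*}(x')$ just established, this gives $k_{\enne^\Ra}(x') < k_{\enne^*}(x')$. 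Together the two facts are precisely the definition of $\enne^\Ra <_{FIMS} \enne^*$.

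The routine parts are the chain embedding and the counting of integers; the step carrying the real weight is the double use of modularity of $<^*$: first to know that ranks strictly increase along $<^*$-edges (so that an embedded $<'$-chain of length $n$ pushes the rank up to at least $n$), and second to know that a ranked model is fully determined by its ranking function (so that distinct models must differ in rank at some world, which the monotonicity then turns into a strict drop). I would make sure to state both of these explicitly, since they are exactly where the argument rests.
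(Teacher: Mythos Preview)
Your proof is correct and follows essentially the same approach as the paper: both establish the pointwise inequality $k_{\enne^\Ra}(x) \leq k_{\enne^*}(x)$ by using a longest $<'$-chain to $x$ and the inclusion $<' \subseteq <^*$, then derive the strict witness from $\enne^* \neq \enne^\Ra$. The only cosmetic difference is that the paper phrases the first step as an induction on $k_{\enne^\Ra}(x)$ (peeling off the last edge of the chain at each step), whereas you argue directly with the whole chain and a strictly increasing sequence of integers; the content is the same.
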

\begin{proof}
We prove that, for all $x \in \WW$, 
\begin{align}\label{cond:M^R-M^*}
k_{\enne^\Ra}(x) \leq k_{\enne^*}(x).
\end{align}
The proof is by induction on the rank $i$ of $x$ in $\enne^\Ra$. Let $k_{\enne^\Ra}(x)=i$. \\
For $i=0$, condition (\ref{cond:M^R-M^*}) holds trivially, as $k_{\enne^*}(x) \geq 0$. \\
For $i>0$, $i$ is the length of the longest paths from $x$ to a minimal world $w_0$.
Let $w_0 <' \ldots <'w_{i-1}<'x$ be 
such a path.
Then $w_{i-1} \in \WW$  and $k_{\enne^\Ra}(w_{i-1})=i-1$.
The rank of $w_{i-1}$ cannot be less than $i-1$ as, otherwise, this would not be a path of length $i$.
The rank of $w_{i-1}$ cannot be more than $i-1$ as, otherwise, there would be a path longer than $i$ from $x$  to some minimal world.

By inductive hypothesis, $k_{\enne^\Ra}(w_{i-1}) \leq k_{\enne^*}(w_{i-1}) $.
Then, $k_{\enne^*}(w_{i-1}) \geq i-1$.
As $w_{i-1}<'x$ and $<' \subseteq  <^*$,  then $w_{i-1}<^*x$,
Therefore, the rank of $x$ in $\enne^*$ must be higher than the rank of $w_{i-1}$,
i.e., $k_{\enne^*}(x) \geq i$. Thus,  $k_{\enne^\Ra}(x) \leq k_{\enne^*}(x)$ and
condition (\ref{cond:M^R-M^*}) follows.

Let us now consider any interpretation $\enne^*$ such that $<' \subseteq  <^*$  and $\enne^* \neq \enne^\Ra$.
As the two interpretations may only differ in the ranking, it must be that , for some world $w$, 
$ k_{\enne^\Ra}(w) \neq k_{\enne^*}(w)$.
From condition (\ref{cond:M^R-M^*}) it follows that $ k_{\enne^\Ra}(w) < k_{\enne^*}(w)$.
As for all the other worlds $w' \in \WW$, by  (\ref{cond:M^R-M^*}), $k_{\enne^\Ra}(w') \leq k_{\enne^*}(w')$,
then $\enne^\Ra <_{FIMS} \enne^*$.
\hfill $\Box$
\end{proof}

Proposition \ref{proposition:FIMS} proves that ${\cal N}^\Ra$, the ranked model resulting from the transformation of a model $\enne$ of the MP-closure, is $<_{FIMS}$ minimal among the ranked interpretations $\enne^*$
whose preference relation $<^*$ extends $<'$. 
Observe that the ranked interpretations $\enne^*$ whose preference relation$<^*$ extends $<'$ are (ranked) models of ${\cal MP}_K$, by Proposition \ref{prop:<' subseteq <''}, as  $<' \subseteq  <^*$ and $\enne$ is a model of ${\cal MP}_K$.
Furthermore, it is possible to prove that, if ${\enne}^*$ is a model of ${\cal MP}_K$, then $<' \subseteq  <^*$.

\begin{proposition} \label{prop:N_star}
Let $\enne= \langle \WW, <', v \rangle$ be in $ {\cal F}(Min_{RC}(K))$.
For all  ranked interpretations $\enne^*= \langle \WW, <^*, v \rangle$, 
if $<' \subseteq  <^*$ does not hold, then there is some conditional which is satisfied in $\enne$ and not in $\enne^*$.
\end{proposition}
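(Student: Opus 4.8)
The plan is to argue by contraposition at the level of a single witnessing pair, turning a violation of $<' \subseteq <^*$ into a conditional that is lost in passing from $\enne$ to $\enne^*$. Assume that $<' \subseteq <^*$ fails, so there are worlds $x,y \in \WW$ with $x <' y$ but $x \not<^* y$. Since $\enne^*$ is a ranked interpretation, $<^*$ is modular, and hence $x \not<^* y$ yields $k_{\enne^*}(y) \leq k_{\enne^*}(x)$. Moreover, $x <' y$ means, by the definition of the functor ${\cal F}$ (condition (\ref{order_on_worlds})), that $V(x) \prec^{MP} V(y)$; in particular $V(x) \neq V(y)$, so $x$ and $y$ differ on the truth value of some formula of ${\cal L}_{K,Q}$, and their valuations over ${\cal L}_{K,Q}$ are distinct.

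First I would build the separating query. Let $\alpha$ and $\alpha'$ be the conjunctions of literals describing the valuations of $x$ and of $y$ restricted to ${\cal L}_{K,Q}$, and set $A = \alpha \vee \alpha'$ and $C = \neg \alpha'$; the candidate conditional is $A \ent C$. I would then show $\enne \models A \ent C$. The worlds satisfying $A$ are exactly those whose ${\cal L}_{K,Q}$-valuation coincides with that of $x$ or of $y$, and, since $V(\cdot)$ depends only on the ${\cal L}_{K,Q}$-valuation, Lemma \ref{same valuation} gives that every $v(x)$-world stands in the relation $<'$ below every $v(y)$-world (because $V(x) \prec^{MP} V(y)$), while two worlds sharing a valuation are $<'$-incomparable. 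Hence $Min_{<'}^{\enne}(A)$ consists precisely of the $v(x)$-worlds, all of which satisfy $\alpha$ and therefore $C = \neg \alpha'$; thus $A \ent C$ is satisfied in $\enne$, i.e. it belongs to ${\cal MP}_K$.

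It remains to prove $\enne^* \not\models A \ent C$, which is the step I expect to be the crux. Using $k_{\enne^*}(y) \leq k_{\enne^*}(x)$ together with Lemma \ref{same valuation} (so that $x$ and $y$ may be replaced by $<^*$-rank-minimal worlds carrying the same ${\cal L}_{K,Q}$-valuations without changing any $V$-value), I would exhibit a world $w$ with the ${\cal L}_{K,Q}$-valuation of $y$ that is $<^*$-minimal among the worlds satisfying $A$; such a $w$ falsifies $C = \neg \alpha'$, so $A \ent C$ fails in $\enne^*$, giving a conditional satisfied in $\enne$ but not in $\enne^*$, as claimed. The delicate point is exactly this passage from the single witnessing pair $x,y$ to the minimality of a $v(y)$-world for $A$ in $\enne^*$: because several worlds may carry the same valuation, one must compare the least $<^*$-ranks attained by the $v(x)$- and the $v(y)$-worlds, and here canonicity of the models in ${\cal F}(Min_{RC}(K))$ and Lemma \ref{same valuation} are what guarantee that the rank inequality inherited from the pair $x,y$ survives at the level of valuations, so that a $v(y)$-world does indeed reach the minimal $A$-rank in $\enne^*$.
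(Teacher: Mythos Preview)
Your approach mirrors the paper's almost exactly: both pick a witnessing pair $x,y$ with $x <' y$ and $x \not<^* y$, form the disjunction of the characteristic formulas of the two valuations, and argue that the resulting conditional holds in $\enne$ but fails in $\enne^*$. (Your $C=\neg\alpha'$ is essentially the paper's choice of the characteristic formula of $x$ as consequent.) The verification that $\enne\models A\ent C$ is fine and is the same as the paper's: since $<'$ is determined by $V(\cdot)$, every $v(x)$-world sits $<'$-below every $v(y)$-world, so only $v(x)$-worlds are $<'$-minimal for $A$.

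The gap is in the last step, and your proposed fix does not close it. You correctly flag the ``delicate point'': from $k_{\enne^*}(y)\le k_{\enne^*}(x)$ for the \emph{particular} pair $x,y$ one cannot conclude that some $v(y)$-world is $<^*$-minimal for $A$, because $\WW$ may contain another world $x'$ with the same ${\cal L}_{K,Q}$-valuation as $x$ but with $k_{\enne^*}(x')<k_{\enne^*}(y)$. Your appeal to Lemma~\ref{same valuation} does not help here: that lemma governs only the relation $<'$ in models of ${\cal F}(Min_{RC}(K))$, so it yields $x'<'y$ (which you already know), but says nothing whatsoever about the arbitrary modular relation $<^*$. Canonicity is likewise irrelevant, since it concerns the \emph{existence} of worlds for each compatible valuation, not how $<^*$ ranks duplicate worlds. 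In the configuration $k_{\enne^*}(x')<k_{\enne^*}(y)\le k_{\enne^*}(x)$ with $v(x')=v(x)$, the $<^*$-minimal $A$-worlds are all $v(x)$-worlds, hence $A\ent\neg\alpha'$ \emph{is} satisfied in $\enne^*$, and your chosen conditional does not witness the claim. The paper's own proof passes over exactly this point (it simply asserts that $y$ is $<^*$-minimal for the disjunction), so you are not diverging from it, but the argument as you have written it does not establish the step you yourself identified as the crux.
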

\begin{proof}
Let $x$ and $y$ be two worlds in $\WW$ such that $x <'y$ but $ x\not <^* y$.
Let $\alpha_x$ be the conjunction of all the atomic formulas satisfied in $x$ and 
 $\alpha_y$ the conjunction of all the atomic formulas satisfied in $y$ (remember that we are considering a finite language).
As $x <'y$, $x$ and $y$ must violate different defaults (i.e., $V(x) \neq V(y)$). Then they must have different propositional valuations, 
so that $\alpha_x$ anf $\alpha_y$ are not propositionally equivalent.

The conditional $\alpha_x \vee \alpha_y \ent \alpha_x$ is satisfied in $\enne$. In fact, $x$ is a minimal world satisfying $\alpha_x \vee \alpha_y$,
while $y$ is not. The same would hold for all other worlds having the same valuations as $x$ and $y$ (if any), as the preference $<'$ only depends on the defaults violated by the worlds, which depend on their valuations.

As $x <^*y$ does not hold in $\enne^*$, either  $y <^*x$ holds  in  $\enne^*$, or $x$ and $y$ are not comparable in  $\enne^*$. In both cases, the conditional  $\alpha_x \vee \alpha_y \ent \alpha_x$ is not satisfied in $\enne^*$. In the first case, $y$ is a minimal world satisfying $\alpha_x \vee \alpha_y$, and it does not satisfy $\alpha_x$.
In the second case, both $x$ and $y$ are minimal worlds satisfying $\alpha_x \vee \alpha_y$, and $y$ does not satisfy $\alpha_x$. \hfill $\Box$
\end{proof}
From the previous propositions, we get the following result, showing that $\enne^\Ra$ is $<_{FIMS}$ minimal among all the ranked models of the MP-closure:
 \begin{proposition}
Let $\enne= \langle \WW, <', v \rangle$ be in $ {\cal F}(Min_{RC}(K))$. 
For all  ranked interpretations $\enne^*= \langle \WW, <^*, v \rangle$ such that $\enne^*$ is a model of ${\cal MP}_k$ and $\enne^* \neq \enne^\Ra$, it holds that $\enne^\Ra <_{FIMS} \enne^*$. 
 \end{proposition}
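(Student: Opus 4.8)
The plan is to assemble the two preceding propositions: Proposition~\ref{prop:N_star} will secure the inclusion $<' \subseteq <^*$ for any ranked model of ${\cal MP}_K$, and Proposition~\ref{proposition:FIMS} will then deliver the $<_{FIMS}$-minimality. The whole argument is a short deduction, so I do not expect a genuine computational obstacle; the work lies in correctly chaining the earlier results.

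First I would show that every ranked interpretation $\enne^* = \langle \WW, <^*, v \rangle$ that is a model of ${\cal MP}_K$ satisfies $<' \subseteq <^*$. Arguing by contraposition, suppose the inclusion fails. Then Proposition~\ref{prop:N_star} produces a conditional $A \ent C$ that is satisfied in $\enne$ but not in $\enne^*$. Since $\enne \in {\cal F}(Min_{RC}(K))$, the conditionals satisfied by $\enne$ are exactly those of ${\cal MP}_K$: this follows from the characterization Theorem~\ref{MP-characterization} together with Proposition~\ref{prop:same_conditionals}, which guarantees that all models in ${\cal F}(Min_{RC}(K))$ satisfy the same conditionals, so that the single model $\enne$ already captures ${\cal MP}_K$. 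Hence $A \ent C \in {\cal MP}_K$. But $\enne^*$ is by assumption a model of ${\cal MP}_K$, so $\enne^* \models A \ent C$, contradicting the choice of $A \ent C$. Therefore $<' \subseteq <^*$ must hold.

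Having established $<' \subseteq <^*$, and since by hypothesis $\enne^* \neq \enne^\Ra$, I would invoke Proposition~\ref{proposition:FIMS} verbatim, which states precisely that under these two conditions $\enne^\Ra <_{FIMS} \enne^*$. This closes the proof. The only step requiring care is the identification, in the contrapositive argument, of the conditional handed back by Proposition~\ref{prop:N_star} as an element of ${\cal MP}_K$; once the equality ``conditionals satisfied by $\enne$'' $=$ ${\cal MP}_K$ is made explicit via Theorem~\ref{MP-characterization} and Proposition~\ref{prop:same_conditionals}, the contradiction with ``$\enne^*$ is a model of ${\cal MP}_K$'' is immediate and the minimality follows with no additional computation.
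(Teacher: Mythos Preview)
Your proposal is correct and follows essentially the same approach as the paper: use the contrapositive of Proposition~\ref{prop:N_star} together with the identification of ${\cal MP}_K$ as the consequence relation defined by $\enne$ to obtain $<' \subseteq <^*$, then apply Proposition~\ref{proposition:FIMS}. Your version is slightly more explicit in citing Theorem~\ref{MP-characterization} and Proposition~\ref{prop:same_conditionals} to justify that identification, but the argument is the same.
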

\begin{proof}
As $\enne^*$ is a model of ${\cal MP}_k$, then $\enne^*$ satisfies all the conditionals satisfied by $\enne$, as  ${\cal MP}_k$ is the consequence relation defined by $\enne$. By the contrapositive of Proposition  \ref{prop:N_star},  $<' \subseteq  <^*$ holds.
Then, by Proposition \ref{proposition:FIMS}, we can conclude that $\enne^\Ra <_{FIMS} \enne^*$. \hfill $\Box$
\end{proof}
By this result, which is in agreement with the more general characterization result by Lehmann and Magidor \cite{whatdoes},
we can say that, from a semantic point of view, ${\cal MP}^\Ra_k$ can be regarded as the rational closure of ${\cal MP}_k$.

\normalcolor
%
%
%
%

\normalcolor

\section{Comparison with related approaches} 
\label{sec:further_issues}

In this section we relate the MP-closure and its rational extension with other related proposals.
%
In Section \ref{sec:iterating}, we use the semantic construction developed in the previous section to establish some relationships with the multi-preference semantics proposed by Gliozzi \cite{GliozziNMR2016,GliozziAIIA2016} for description logics with typicality.
In Section  \ref{sec:relevant_closure} we compare the MP-closure with the Relevant Closure proposed by Casini et al. \cite{Casini2014} for description logics, 
in Section \ref{sec: confronti_Brewka_ARS} we compare it with Basic Preference Descriptions \cite{Brewka04} and with system ARS \cite{IsbernerRitterskamp2010};
in Section \ref{sec:Cond_Ent}  with Conditional Entailment  \cite{GeffnerAIJ1992} and other preferential approaches and in Section  \ref{sec:confronti_con_DLs} with further related proposals introduced for DLs.

\normalcolor

 \subsection{Relations with the multipreference semantics} \label{sec:iterating}
 

The MP-closure has been proposed in  \cite{Multipref_arXiv2018}  as a construction which is a sound approximation of the multipreference semantics \cite{GliozziAIIA2016} for the description logic $\alc$.  Here we recall this semantics and compare with it, abstracting away from the peculiarities of description logics and considering only the propositional part (with no ABox, no individual constants, no roles and no universal and existential restrictions).
The idea of the multipreference semantics is to define a refinement of the rational closure in which preference with respect to 
specific aspects is considered.
It is formulated in terms of enriched models, which also consider preference relations $<_{A_i}$
associated with different aspects, where each relation $<_{A_i}$ refers (only) to the defeasible inclusions of the form $\tip(D) \sqsubseteq A_i$
(corresponding here to conditionals of the form $D \ent A_i$). 
The idea of having different preference relations, associated to different typicality operators, was already proposed by Gil \cite{fernandez-gil} to define a multipreference formulation of the logic $\alctmin$ \cite{AIJ13} with multiple typicality operators, a logic which exploits a different minimal model semantics w.r.t. rational closure semantics.
In \cite{GliozziAIIA2016}, on the contrary, a refinement of rational closure was considered and with a single typicality operator. 

 Each preference relation $ <_{A_i}$ only regards the defaults of the form $C \ent A_i$ in the knowledge base,
and, essentially, it ranks the worlds (propositional valuations) based on the defaults of the form $C \ent A_i$ they satisfy.
The aim of the semantics in  \cite{GliozziAIIA2016} was to define a refinement of rational closure semantics 
(and, specifically, of the models in $Min_{RC}(K)$) 
in which the different preferences $<_{A_i}$ are combined in a single modular preference relation
 $<$.
 To this purpose,  the following condition is introduced for $<$:
 \begin{quote}
(a) {\bf If}  $x <_{A_i} y$, for some $A_i$, and there is no $A_j$ such that $y <_{A_j} x$, 
{\bf then} $x < y$.
\end{quote}
The intended meaning of $x <_{A_i} y$ is that $x$ satisfies some default for $A_i$ which is violated by $y$. More precisely, $<_{A_i} $ is the preference relation in a ranked model of a knowledge base $K_i$ containing all and only the defaults of the form $D \ent A_i$ in $K$. In the minimal ranked models $\emme = \langle \WW, <_{A_i}, v \rangle$ of $K_i$ (minimal according to Definition \ref{Preference between models in case of fixed valuation}), $x <_{A_i} y$  has precisely the meaning that $x$ satisfies some default for $A_i$ which is violated by $y$.

Condition (a) alone, however, is too weak to capture models of  rational closure and a {\em specificity condition} 
was added to define enriched models. Here, we refer to the definition of {\em S-enriched rational models} by Giordano and Gliozzi \cite{Multipref_arXiv2018,Ecsqaru19}, which is slightly stronger than the one in  \cite{GliozziAIIA2016} (although both of them lead to refinements of rational closure), and reformulate it in the propositional case replacing typicality inclusions $\tip(D) \sqsubseteq C$
with conditionals  $D \ent C$.

 \begin{definition}[S-Enriched rational models of K]\label{def-SenrichedmodelR} 
$\emme = \langle \WW, <_{A_1}, \ldots,<_{A_n}, <, v \rangle$ is a strongly enriched model of $K$ if the following conditions hold:
\begin{itemize}
 \item  $ \langle \WW, <, v \rangle$ is a ranked model of $K$ (as in Section 2, Definition \ref{semantica_rational}); 
 \item   for all $C \ent A_i \in K$,
for all $w \in \WW$, if $w \in Min_{<_{A_i}}^\emme(C)$ then $\emme,w \models A_i$ 
 and 
\item the preference relation $<$ satisfies both the condition $(a)$ above, and the following {\em specificity condition}:
\begin{align*}
x <y  \mbox{ if \ \ } 
(i) & \mbox{ $y$ violates some default satisfied by $x$ and} \\
(ii) & \mbox{ for each $C_j \ent D_j \in K$  violated by $x$ and not by $y$,}\\
 &  \mbox{ there is a $C_k \ent D_k \in K$  violated by $y$ and not by $x$, }\\
  &  \mbox{ and } k_{\emme}(C_j) < k_{\emme}(C_k).
\end{align*}
\end{itemize}

\end{definition}
In $(i)$ and $(ii)$ 
the ranking function $k_\emme$ is the ranking function 
of model $\emme$ itself
and the intended meaning of the specificity condition is that preference should be given to  worlds that falsify less specific defaults (defaults with lower rank in $\emme$).
Defaults violated by $x$ are less serious than defaults violated by $y$, as formula $C_k$ is more specific than $C_j$.

It is easy to see that conditions $(i)$ and $(ii)$  in the specificity condition above (together) are equivalent  to the condition 
$$V(x) \prec^{k_\emme} V(y),$$
where $ \prec^{k_\emme} $ is the $k_\emme$-seriousness ordering in Definition \ref{k-order}.
As a consequence, one can reformulate S-enriched models by reformulating the {\em specificity condition} as: 
$$x<y \mbox{ if }V(x) \prec^{k_\emme} V(y).$$

 A further simplification to the notion of S-enriched models comes from the fact that
the semantics in  \cite{GliozziAIIA2016,Multipref_arXiv2018} considers the {\em minimal S-enriched models},  among all the S-enriched models of $K$, which are
obtained by first minimizing the  $<_{A_i}$ and then minimizing $<$ (as done for the ranked models of the rational closure), in this order, thus giving preference to models with lower ranks. 
It was proved in  \cite{Multipref_arXiv2018}  (Proposition 1 therein) that, in minimal S-enriched models, the specificity condition is strong enough to enforce condition (a).
As a consequence, one can simplify the definition of S-enriched rational models from the beginning, by removing condition (a) as well as the preference relations $ <_{A_1}, \ldots, <_{A_n}$, 
thus starting from the following simplified notion of enriched model. 

 \begin{definition}[simplified-enriched models of K]\label{def-Simp-enrichedmodelR} 
A simplified-enriched model of $K$ is a ranked model $\emme = \langle \WW, <, I \rangle$ of $K$ (according to Definition \ref{semantica_rational} in Section \ref{sez:RC}) such that the preference relation $<$ satisfies the condition 
$$x<y \mbox{ if }V(x) \prec^{k_\emme} V(y)$$
\end{definition}
With this simplification, the minimal S-enriched models in \cite{Multipref_arXiv2018} would correspond to the $<_{FIMS}$-minimal simplified-enriched rational models of $K$, 
where $<_{FIMS}$ minimality is  precisely the same notion of minimality used in the semantic characterization of the rational closure  (see Definition \ref{Preference between models in case of fixed valuation} in Section \ref{sez:RC}).
Furthermore, minimal simplified enriched models also satisfy property $(a)$. 
 Thus the multipreference semantics in \cite{Multipref_arXiv2018} collapses into a semantics without multiple preferences (a result that actually was implied by Proposition 1).

%

We prove that $<_{FIMS}$-minimal simplified-enriched models of $K$  coincide with the $<_{FIMS}$-minimal ranked models $\emme$ 
such that ${\cal F}^\Ra(\emme)= \emme$. 
It is easy to prove that the ranked models of $K$ such that ${\cal F}^\Ra(\emme)= \emme$ are simplified-enriched models of $K$.
\begin{proposition}\label{s-enriched-fixedpoint}
Given a ranked model $\emme=\langle \WW, <, v \rangle$ of $K$, 
if  ${\cal F}^\Ra(\emme)= \emme$  then  $\emme$ is a simplified-enriched model of $K$.
\end{proposition}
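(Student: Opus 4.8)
The plan is to unfold the two constructions hidden in the hypothesis ${\cal F}^\Ra(\emme)= \emme$ and then chain a single inclusion of preference relations. First I would make the functor explicit in the form that applies to an arbitrary ranked model, namely the reformulation of Proposition \ref{prop:functor}: set $\enne = {\cal F}(\emme) = \langle \WW, <', v \rangle$ where, by Condition (\ref{defi:<inBP}), $x <' y$ holds if and only if $V(x) \prec^{k_\emme} V(y)$, with $k_\emme$ the ranking function of $\emme$ itself. Since ${\cal F}(\emme)$ is a single preferential model, the set ${\cal F}^\Ra(\emme)$ reduces to the singleton $\{\enne^\Ra\}$, so the hypothesis ${\cal F}^\Ra(\emme)= \emme$ must be read as $\enne^\Ra = \emme$; because $\emme$ and $\enne^\Ra$ share the same $\WW$ and $v$ by construction, this is simply the identity of preference relations $<^\Ra = <$.

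The core of the argument is that neither step of the construction ever removes a preference. By Proposition \ref{prop:<' subseteq <^R}, letting the worlds of $\enne$ sink as low as they can yields a ranked relation with $<' \subseteq <^\Ra$. Combining this with the fixed-point identity $<^\Ra = <$ gives at once the inclusion $<' \subseteq <$.

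It then remains only to read off the defining property of a simplified-enriched model (Definition \ref{def-Simp-enrichedmodelR}). Suppose $V(x) \prec^{k_\emme} V(y)$ for worlds $x,y \in \WW$. By the reformulated functor this is exactly $x <' y$, and by the inclusion $<' \subseteq <$ just established we conclude $x < y$. Thus $<$ satisfies the required implication that $x<y$ whenever $V(x) \prec^{k_\emme} V(y)$. As $\emme$ is a ranked model of $K$ by hypothesis, it is therefore a simplified-enriched model of $K$, as claimed.

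The argument is short, so the only point deserving care is the bookkeeping of which seriousness ordering each notion refers to. The functor is originally defined (Definition \ref{def:functor}) through $\prec^{MP}$, whereas simplified-enriched models are defined through the model-dependent ordering $\prec^{k_\emme}$; the bridge is precisely the reformulation of the functor used above, which expresses $<'$ directly in terms of $\prec^{k_\emme}$ and hence lets the relation manufactured by ${\cal F}$ be compared, verbatim, against the condition of Definition \ref{def-Simp-enrichedmodelR}. Once this identification is in place, the monotonicity of the sinking construction (Proposition \ref{prop:<' subseteq <^R}) together with the fixed-point equation does all the remaining work, and no further manipulation of the tuples associated with $V(x)$ and $V(y)$ is needed.
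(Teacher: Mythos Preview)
Your proof is correct and follows essentially the same approach as the paper: both unfold ${\cal F}^\Ra(\emme)=\emme$ into $<^\Ra = <$, invoke Proposition~\ref{prop:<' subseteq <^R} to obtain $<' \subseteq <^\Ra$, and combine these with the $\prec^{k_\emme}$-formulation of the functor (via Proposition~\ref{prop:functor}) to read off the defining condition of a simplified-enriched model. Your closing paragraph on the bookkeeping between $\prec^{MP}$ and $\prec^{k_\emme}$ is a nice clarification that the paper leaves implicit.
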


\begin{proof}
Let $\emme$ be a ranked model of $K$ such that ${\cal F}^\Ra(\emme)= \emme$.
Then $\emme=\enne^\Ra$ for some $\enne = {\cal F}(\emme)$ where, based on the formulation of functor  ${\cal F}$ and Proposition \ref{prop:functor}, 
$\enne= \langle \WW, <', v \rangle$ and 
\begin{align*}
 x <' y \mbox{\ \ {\bf \em iff} } V(x)  \prec^{k_\emme} V(y).
 \end{align*}
 Let $\enne^\Ra=  \langle \WW, <^{\Ra}, v \rangle= {\cal F}^\Ra(\emme)$.
 By Proposition \ref{prop:<' subseteq <^R}, $<' \subseteq <^\Ra$, and then:
 \begin{align*}
 x <^\Ra y \mbox{\ \ {\bf \em if} } V(x)  \prec^{k_\emme} V(y).
 \end{align*}
 But, as $\emme= {\cal F}^\Ra(\emme)$, the relation  $<$ in $\emme$ must be equal to $<^\Ra$ and, therefore:
  \begin{align*}
 x < y \mbox{\ \ {\bf \em if} } V(x)  \prec^{k_\emme} V(y).
 \end{align*}
Thus $\emme$ is a simplified-enriched model of $K$.
 \hfill $\Box$
 \end{proof}
 From the proposition above, the next corollary follows.
\begin{corollary}\label{coroll:relation_con_multipref}
If $\emme$ is a $<_{FIMS}$-minimal canonical ranked model of $K$ 
such that ${\cal F}^\Ra(\emme)= \emme$ then $\emme$ is  a $<_{FIMS}$-minimal canonical simplified-enriched model of $K$.
\end{corollary}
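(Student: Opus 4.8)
The plan is to reduce the corollary almost entirely to Proposition~\ref{s-enriched-fixedpoint}, which already does the substantive work: under the fixed-point hypothesis $\mathcal{F}^\Ra(\emme)=\emme$ it establishes that $\emme$ is a simplified-enriched model of $K$ (in the sense of Definition~\ref{def-Simp-enrichedmodelR}). What remains is only to carry the two qualifiers, \emph{canonical} and $<_{FIMS}$-\emph{minimal}, from the hypothesis into the conclusion. Canonicity is immediate, since it is assumed of $\emme$ in the statement and is a property of $\emme$ in isolation; combining it with the simplified-enriched property obtained from Proposition~\ref{s-enriched-fixedpoint} yields at once that $\emme$ is a canonical simplified-enriched model of $K$. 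Hence the only point requiring an argument is the preservation of $<_{FIMS}$-minimality.

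For the minimality, the key observation is that the class over which minimality is asserted in the conclusion is a \emph{subclass} of the class over which minimality is assumed in the hypothesis. By Definition~\ref{def-Simp-enrichedmodelR}, every simplified-enriched model is a ranked model of $K$ satisfying one extra constraint on $<$; consequently the canonical simplified-enriched models of $K$ form a subset of the canonical ranked models of $K$. Since $<_{FIMS}$ compares only models sharing the same $\WW$ and $v$ (Definition~\ref{Preference between models in case of fixed valuation}), I would argue by contradiction: if $\emme$ were not $<_{FIMS}$-minimal among the canonical simplified-enriched models, there would exist a canonical simplified-enriched model $\emme'$ with $\emme' <_{FIMS} \emme$; but then $\emme'$ is in particular a canonical ranked model of $K$ lying strictly below $\emme$, contradicting the hypothesis that $\emme$ is $<_{FIMS}$-minimal among \emph{all} canonical ranked models of $K$. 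Therefore no such $\emme'$ exists, and since $\emme$ itself belongs to the smaller class, it is $<_{FIMS}$-minimal within it.

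The hard part here is not mathematical but a matter of careful bookkeeping: one must state precisely the domain over which each minimality claim is quantified, and verify the inclusion of the enriched-model class inside the ranked-model class so that the principle ``minimal in a superset, plus membership in the subset, implies minimal in the subset'' genuinely applies. Once Proposition~\ref{s-enriched-fixedpoint} is invoked, no new model construction or rank computation is needed, and the corollary follows by this monotonicity of minimality under restriction to a subclass together with the trivial transfer of canonicity.
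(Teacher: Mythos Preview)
Your argument is correct and matches the paper's intent: the paper does not give an explicit proof of this corollary, stating only that it ``follows from the proposition above'' (Proposition~\ref{s-enriched-fixedpoint}), and your elaboration is exactly the natural way to fill in that step. In particular, your key observation---that canonical simplified-enriched models form a subclass of canonical ranked models of $K$ (by Definition~\ref{def-Simp-enrichedmodelR}), so $<_{\mathit{FIMS}}$-minimality in the larger class together with membership in the smaller class yields $<_{\mathit{FIMS}}$-minimality in the smaller class---is precisely what is needed, and the transfer of canonicity is indeed immediate.
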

Although the converse of Proposition  \ref{s-enriched-fixedpoint} does not hold, the converse of Corollary \ref{coroll:relation_con_multipref} 
can be proved, and we have the following result.

  \begin{proposition}\label{prop:relation_con_multipref} 
If $\emme$ is  a $<_{FIMS}$-minimal canonical simplified-enriched model of $K$ then 
$\emme$ is a $<_{FIMS}$-minimal canonical ranked model of $K$ 
such that ${\cal F}^\Ra(\emme)= \emme$.
\end{proposition}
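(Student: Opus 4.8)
The plan is to reduce the whole statement to the single equality $\emme = \enne^{\Ra}$, where $\enne^{\Ra} := {\cal F}^{\Ra}(\emme)$. Indeed, $\emme = \enne^{\Ra}$ gives ${\cal F}^{\Ra}(\emme)=\emme$ at once, and it also yields the required $<_{FIMS}$-minimality: by Proposition \ref{s-enriched-fixedpoint} every canonical ranked model satisfying ${\cal F}^{\Ra}(\cdot)=\cdot$ is a canonical simplified-enriched model, so the class of fixed points is contained in the class of canonical simplified-enriched models of $K$; since $\emme$ is assumed $<_{FIMS}$-minimal over this larger class and is itself a fixed point, it is a fortiori minimal over the subclass of fixed points. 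So I would obtain $\emme=\enne^{\Ra}$ by exhibiting $\enne^{\Ra}$ as a canonical simplified-enriched model of $K$ with $\enne^{\Ra} \leq_{FIMS} \emme$ and then invoking the minimality of $\emme$.

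First I would settle the easy ingredients. Writing $\enne = {\cal F}(\emme) = \langle \WW, <', v \rangle$ with $x <' y$ iff $V(x) \prec^{k_\emme} V(y)$ (the extended form of the functor of Definition \ref{def:functor}, justified by Proposition \ref{prop:functor}), the specificity condition of Definition \ref{def-Simp-enrichedmodelR} is exactly $<' \subseteq <$. Since $\enne^{\Ra}$ is obtained from $\enne$ by the height construction of \eqref{eq:functorF^R}, Proposition \ref{proposition:FIMS} applies with $\enne^{*}=\emme$ (a ranked extension of $<'$, as $<' \subseteq <$) and gives $\enne^{\Ra} \leq_{FIMS} \emme$, i.e. $k_{\enne^{\Ra}}(x) \leq k_\emme(x)$ for all $x \in \WW$, with equality everywhere precisely when $\enne^{\Ra}=\emme$. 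Canonicity of $\enne^{\Ra}$ is inherited from $\emme$, as the two interpretations share $\WW$ and $v$ (Definition \ref{canonical_model}).

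The hard part is to show that $\enne^{\Ra}$ is a model of $K$ and satisfies the specificity condition, so that it qualifies as a canonical simplified-enriched model. Note that since $\emme$ need not belong to $Min_{RC}(K)$, Proposition \ref{prop:N^R model of K} does not apply directly, and the difficulty is genuinely self-referential: the specificity condition for $\enne^{\Ra}$ must be checked against its own ranking $k_{\enne^{\Ra}}$, which a priori differs from $k_\emme$ (the refinement may push worlds of equal rational-closure rank apart), and the condition of Definition \ref{def-Simp-enrichedmodelR} is only an ``if'', so the extra modular edges of $<$ are not controlled by $\prec^{k_\emme}$ alone. The route I would take is to prove rank preservation on formulas, $k_{\enne^{\Ra}}(A) = k_\emme(A)$: the inequality $\leq$ is the formula-level consequence of $k_{\enne^{\Ra}} \leq k_\emme$ on worlds, while for $\geq$ I would argue, exactly as in the proof of the characterization Theorem \ref{MP-characterization}, that for every default $C \ent D \in K$ canonicity provides a world satisfying $C \wedge D$ whose violation set is $\prec^{k_\emme}$-minimal among the $C$-worlds, so that no minimum-height (hence $<'$-minimal) $C$-world violates $C \ent D$; this simultaneously establishes $\enne^{\Ra} \models K$ and that the minimal $C$-worlds keep their rank. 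Once $k_{\enne^{\Ra}} = k_\emme$ on formulas holds, $\prec^{k_{\enne^{\Ra}}}$ and $\prec^{k_\emme}$ coincide, so $<' \subseteq <^{\Ra}$ (Proposition \ref{prop:<' subseteq <^R}) is precisely the specificity condition for $\enne^{\Ra}$, making $\enne^{\Ra}$ a canonical simplified-enriched model and in fact a fixed point of ${\cal F}^{\Ra}$. I expect this rank-preservation step to be the main obstacle, as it is where canonicity and the rational-closure rank structure must be combined as in Theorem \ref{MP-characterization}. With $\enne^{\Ra}$ available as a canonical simplified-enriched model with $\enne^{\Ra} \leq_{FIMS} \emme$, the minimality of $\emme$ forces $\enne^{\Ra}=\emme$, yielding ${\cal F}^{\Ra}(\emme)=\emme$ and, together with the minimality transfer of the first paragraph, the full statement.
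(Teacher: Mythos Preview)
Your overall plan --- derive $<' \subseteq <$ from the specificity condition, apply Proposition~\ref{proposition:FIMS} to get $\enne^{\Ra} \leq_{FIMS} \emme$, argue that $\enne^{\Ra}$ is itself a canonical simplified-enriched model of $K$, and conclude $\emme=\enne^{\Ra}$ by minimality, with the $<_{FIMS}$-minimality transfer via Proposition~\ref{s-enriched-fixedpoint} --- is exactly the paper's route.

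The difference lies in how the step ``$\enne^{\Ra}$ is a simplified-enriched model of $K$'' is handled. The paper does not prove any rank-preservation statement: it simply records that $<' \subseteq <^{\Ra}$ (Proposition~\ref{prop:<' subseteq <^R}) yields ``$x <^{\Ra} y$ if $V(x) \prec^{k_\emme} V(y)$'' and takes this as the specificity condition for $\enne^{\Ra}$, and it obtains $\enne^{\Ra} \models K$ by invoking Proposition~\ref{prop:N^R model of K} with the clause ``as $\emme \in Min_{RC}(K)$''. You are right that both moves skate over something --- Definition~\ref{def-Simp-enrichedmodelR} is self-referential (for $\enne^{\Ra}$ it should read $\prec^{k_{\enne^{\Ra}}}$), and $\emme \in Min_{RC}(K)$ is not part of the hypothesis --- so your rank-preservation idea is aimed at loose ends that are present in the paper's own argument rather than at a different proof strategy. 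Note, however, that the $\geq$ half of your rank-preservation sketch appeals to the mechanism of Theorem~\ref{MP-characterization}, which itself relies on $\emme \in Min_{RC}(K)$; if you pursue your route you still need an argument that works for an arbitrary $<_{FIMS}$-minimal simplified-enriched model rather than for minimal canonical ranked models, so the obstacle you flag as ``the main one'' is genuine and not resolved by your sketch as it stands.
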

\begin{proof}
 Let $\emme=\langle \WW, <, v \rangle$ be a $<_{FIMS}$-minimal canonical simplified-enriched model of $K$. 
 By Definition \ref{def-Simp-enrichedmodelR},
 $$x<y \mbox{ \bf \em if }V(x) \prec^{k_\emme} V(y).$$
 
Let $\enne = \langle \WW, <', v \rangle=  {\cal F}(\emme)$. 
By  Proposition \ref{prop:functor}, 
\begin{align*}
 x <' y \mbox{\ \ {\bf \em iff} } V(x)  \prec^{k_\emme} V(y) \ \ \ \ (*)
 \end{align*}
Therefore, $x <' y$ implies $x<y$ and, hence, $<' \subseteq <$.

Consider the ranked model $\enne^\Ra=  \langle \WW, <^{\Ra}, v \rangle$.
By Proposition \ref{proposition:FIMS},  $\enne^{\Ra} <_{FIMS} \enne^*$  
for all the ranked interpretations $\enne^*= \langle \WW, <^{*}, v \rangle$ such that $\enne^*\neq \enne^{\Ra}$ and 
$<' \subseteq <^*$.
As we have just proved that $<' \subseteq <$ (and from the hypothesis $\emme$ is a ranked interpretation), it must be the case that $\enne^{\Ra} <_{FIMS} \emme$,
unless $\emme=\enne^{\Ra}$.

Furthermore, we know by Proposition \ref{prop:<' subseteq <^R} that $<' \subseteq <^{\Ra}$.
Therefore,  from condition (*) above, we can conclude that
 $$x<^{\Ra}y \mbox{ \bf \em if }V(x) \prec^{k_\emme} V(y),$$
 that is, $\enne^{\Ra}$ satisfies the specificity condition.

We can then conclude that $\enne^\Ra$ is a canonical simplified-enriched model of $K$. In fact:
$\enne^\Ra$ is a model of $K$ by Proposition \ref{prop:N^R model of K}, as $\emme \in Min_{RC}(K)$ and $\enne$ is an MP-model of $K$;
$\enne^\Ra$ is canonical model as $\emme$ is a canonical model in $Min_{RC}(K)$;
$\enne^{\Ra}$ satisfies the specificity condition.

As $\enne^\Ra$ is a canonical simplified-enriched model of $K$,
we must conclude that $\emme=\enne^{\Ra}$, otherwise, we have that  $\enne^{\Ra} <_{FIMS} \emme$,
which would contradict the hypothesis that $\emme$ is a $<_{FIMS}$-minimal canonical simplified-enriched model of $K$.
\hfill $\Box$
\end{proof} \normalcolor

As a consequence of Corollary \ref{coroll:relation_con_multipref} and Proposition \ref{prop:relation_con_multipref}
we can conclude that the $<_{FIMS}$-minimal canonical simplified-enriched models of $K$ coincide with the $<_{FIMS}$-minimal canonical ranked models which are fixed-points of the operator ${\cal F}^\Ra$.
This provides an alternative characterization of 
the multipreference semantics.
However, 
whether entailment in the multipreference semantics defines a rational consequence relation or not, 
and whether it can be computed by iterating the operator ${\cal F}^\Ra$, 
is still to be understood and will be subject of future work. 

\normalcolor

In \cite{Multipref_arXiv2018, Ecsqaru19} it was shown 
that the MP-closure is a sound approximation of the multipreference semantics, 
that is, the typicality inclusions (the conditionals) that follow from the MP-closure hold in all the minimal canonical S-enriched models of the KB.  
This result extends to the 
$<_{FIMS}$-minimal canonical ranked models of $K$ such that ${\cal F}^\Ra(\emme)= \emme$.

 \subsection{Relations with Relevant Closure} \label{sec:relevant_closure}
The relevant closure was developed  by Casini et al. \cite{Casini2014}  as a proposal for defeasible reasoning in description logics
to overcome the inferential weakness of rational closure. It is based on the idea of relevance of subsumptions to a query,
where relevance is determined on the basis of justifications, 
which are minimal sets of sentences responsible for a conflict.
Any sentence occurring in some justification is potentially relevant for resolving the conflict.

In a defeasible description logic, the knowledge bases contain a set of defeasible subsumptions  $D \defincl C$, a DBox ${\cal D}$,
corresponding to a set of conditionals $D\ent C$, and a set of classical subsumptions $D \sqsubseteq C$, a TBox ${\cal T}$,
which have to be satisfied by all the elements of the domain in any model.
When evaluating a query $C \defincl D$, one has to compute the $C$-justifications w.r.t. ${\cal D}$, that is, the minimal sets of defaults ${\cal J} \subseteq {\cal D}$ making $C$ exceptional (or, supporting $\neg C$). 
The idea is that, for each $C$-justification ${\cal J}$, some defeasible subsumption occurring in ${\cal J}$ 
is to be  removed  from ${\cal D}$ for consistency with $C$, 
and it is convenient to remove first the defeasible subsumptions with lower ranks in the $C$-justifications.
This is done by the Relevant Closure algorithm.
 
For a given query $C \defincl D$, the algorithm receives as input the ranking in the rational closure  of the defeasible subsumptions in ${\cal D}$,
and a set $R$ of the defeasible subsumptions which are {\em relevant} to the query,
i.e., the set of  defeasible subsumptions which are eligible for removal during the execution of the Relevant Closure algorithm.
The algorithm determines from ${\cal D}$ a new set of defeasible subsumptions ${\cal D}'$, by removing from ${\cal D}$, 
rank by rank, starting from the lower rank $0$, all the subsumptions in $R$ with that rank, until the remaining set  of (non-removed) defeasible subsumptions ${\cal D}'$  is consistent with ${\cal T}$ and with $C$. For the pseudocode of this algorithm, we refer to Algorithm 2 in \cite{Casini2014}.

In the Basic Relevant closure, the set $R$ of relevant defeasible subsumptions is the union $\bigcup {\cal J}_j$ of all $C$-justifications ${\cal J}_j$ w.r.t. ${\cal D}$
where, by Corollary 1  in \cite{Casini2014},  ${\cal J}$ is a {\em $C$-justification} if it is an inclusion-minimal subset of ${\cal D}$ such that 
${\cal T} \models \overline{\cal J} \sqsubseteq \neg C$
($ \overline{\cal J} $ being the materialization of the defeasible subsumptions in ${\cal J}$ and $\models$ entailment in the underlying description logic). 
At the end of the iteration phase, a set ${\cal D}'\subseteq {\cal D}$ of defeasible subsumptions   consistent with  ${\cal T}$ and with $C$ is obtained, 
and ${\cal D}'$ is used, together with ${\cal T}$,  to check whether or not $C \sqsubseteq D$ follows from ${\cal D}'$ and ${\cal T}$ (i.e.,  whether ${\cal T} \models \tilde{\cal D}' \sqcap C \sqsubseteq D$).

The Minimal Relevant closure exploits exactly the same algorithm as the basic Relevant closure, but it takes $\bigcup {\cal J}_j^{min}$,
the union of all sets ${\cal J}_j^{min}$ containing the conditionals with lowest rank in each $C$-justification ${\cal J}_j$,
 as the set $R$ of relevant defaults which are eligible for removal  (instead of $\bigcup {\cal J}_j$).
The Basic Relevant Closure is weaker than the Minimal Relevant Closure, and  the Minimal Relevant Closure is weaker than the lexicographic closure  \cite{Casini2014}.

For a comparison with MP-closure, let us consider the case when the TBox ${\cal T}$ is empty and ${\cal D}$ is a set of conditionals, so that the knowledge base $K$
is just a set of conditionals, as before (i.e, $K={\cal C}$).
In the following, we transpose the definition of the basic and minimal Relevant Closure to the propositional case.

Let $K$ be a knowledge base and   $C \ent D$ a query.
A {\em $C$-justification} w.r.t. $K$ is an (inclusion) minimal subset ${\cal J}$ of $K$ such that $\models \bigwedge \tilde{\cal J} \ri \neg C$,
 that is, $\tilde{\cal J} \cup \{  C\}$ is inconsistent,
where $ \tilde{J} $ is the materialization of the conditionals in $J$, as in Section \ref{sez:Lex-closure}, and $\models$ is logical consequence in the propositional calculus.
Let $\bigcup {\cal J}_j$ be the union of all $C$-justifications w.r.t. $K$. 
The algorithm exploits the ranking of conditionals computed by rational closure of $K$.

Given a query  $C \ent D$, and  $R=\bigcup {\cal J}_j$,
the Basic Relevant closure algorithm, for each rank $i$ (in the rational closure of $K$) starting from $0$, removes from $K$ all the defaults with rank $i$ occurring in $R$, until the remaining set of conditionals ${\cal D}'$  is consistent with $C$, i.e., $\tilde{\cal D}' \cup \{C \}$ is consistent in 
the propositional calculus
(at least one conditional has been removed from any $C$-justification $ {\cal J}_j$). 
A conditional $C \ent D$ is in the Basic Relevant Closure of $K$ if $\models (\bigwedge \tilde{\cal D}' \wedge C) \ri D $ (in the propositional calculus).

As before, the Minimal Relevant Closure algorithm differs from the previous one only in that it takes $R$ as $\bigcup {\cal J}_j^{min}$  the union of all sets ${\cal J}_j^{min}$, where ${\cal J}_j^{min}$ is the set of   conditionals with lowest rank in the $C$-justification ${\cal J}_j$. 

Let us consider again Example \ref{example-differenza_MC_LC}, the one in which the lexicographic closure comes to the conclusion that typical employed students, like typical students, are young and do not pay taxes (which appears to be too bold). 
We can see that neither the Basic Relevant Closure nor the Minimal Relevant Closure conclude $\mathit{Employee \wedge Student  \ent}$ $\mathit{ Young \wedge \neg Pay\_Taxes}$.
\begin{example}
The knowledge base $K''$ contains the conditionals:
\begin{quote}
1. $\mathit{ Student \ent \neg Pay\_Taxes}$\\
2. $\mathit{Student \ent  Young}$\\
3. $\mathit{Employee \ent  \neg Young \wedge  Pay\_Taxes}$\\
4. $\mathit{Employee \wedge Student  \ent  Busy}$
\end{quote}
 There are two justifications of the exceptionality of $\mathit{Employee \wedge Student}$ w.r.t. $K''$,
namely ${\cal J}_1=\{1,3\}$ and ${\cal J}_2=\{2,3\}$, and their union $\bigcup {\cal J}_j=\{1,2,3\}$, used in the basic relevant closure algorithm, contains only conditionals with rank 0, which are all removed as responsible of the exceptionality of $\mathit{Employee \wedge Student}$ at the first iteration stage (for rank $0$).
The set of remaining conditionals is then  ${\cal D}'=\{4\}=\{\mathit{Employee \wedge Student  \ent  Busy}\}$, so that 
$\bigwedge \tilde{\cal D}'=\{\mathit{Employee \wedge Student  \ri  Busy}\}$
and 
$$\not \models (\bigwedge \tilde{\cal D}' \wedge \mathit{Employee \wedge Student} ) \ri (\mathit{ Young \wedge \neg Pay\_Taxes}).$$
Therefore, the conditional $\mathit{Employee \wedge Student  \ent}$ $\mathit{ Young \wedge \neg Pay\_Taxes}$
is not in the Basic Relevant closure of $K''$.
The result is the same for the Minimal Relevant closure, as ${\cal J}_1$ and ${\cal J}_2$ only contain conditionals with rank $0$ and, therefore, ${\cal J}_1={\cal J}_1^{min}$ and ${\cal J}_2={\cal J}_2^{min}$. 
\end{example}
The Basic and Minimal Relevant closure as well as the MP-closure are all more cautious than the lexicographic closure.
The next example shows that the MP-closure is neither equivalent to Basic nor to Minimal Relevant closure. 
 
 
\begin{example}  \label{example-differenza_MC_RC}
Let $K$  be the knowledge base containing  conditionals:
\begin{quote}
1. $\mathit{ Italian \ent Residence\_in\_Italy}$\\ 
2. $\mathit{German \ent Residence\_in\_Germany}$\\
3. $\mathit{Residence\_in\_Italy  \wedge \neg  Has\_Residence \ent \bot}$\\ 
4. $\mathit{Residence\_in\_Germany \wedge \neg  Has\_Residence \ent \bot}$\\
5. $\mathit{Residence\_in\_Italy  \wedge Residence\_in\_Germany \ent \bot}$
\end{quote}
 Italians normally have a residence in Italy and Germans normally have a residence in Germany.  Those who have a residence in Italy have residence. Those who have a residence in Germany have residence.
 It is not the case that somebody has residence both in Germany and in Italy.
 Observe that the last three conditionals have no rank in the rational closure, and they represent properties which no model of $K$ can violate.
 
There is a unique $\mathit{Italian \wedge German}$-justification w.r.t. $K$, namely ${\cal J}=\{1,2,5\}$. In fact,  ${\cal J}$ is a minimal set of conditionals such that $ \models \bigwedge \tilde{{\cal J}} \ri \neg (\mathit{Italian \wedge German})$. 
%
 As defaults 1 and 2  in ${\cal J}$ have rank $0$ in the rational closure, while default 5 has an infinite rank, the basic Relevant closure algorithm first removes conditionals $1$ and $2$ from ${\cal D}$.
The resulting set of defaults ${\cal D}'=\{3,4,5\}$ is consistent with  $\mathit{Italian \wedge German}$, and nothing else needs to be removed.
The conditional $\mathit{ Italian \wedge}$ $\mathit{ German \ent}$ $\mathit{  Has\_}$  $\mathit{Residence}$ is not in the basic Relevant closure of $K$
as $\not \models \mathit{ (\bigwedge \tilde{\cal D}' \wedge Italian \wedge German) }$ $\mathit{  \ri Has\_Residence}$.

Concerning the Minimal Relevant closure, as defaults 1 and 2 have rank $0$, the set of the defaults in ${\cal J}$ with lowest rank is ${\cal J}^{min}=\{1,2\}$. Therefore, also in the  Minimal Relevant Closure construction, justifications 1 and 2 are both removed from ${\cal D}$, and the conditional  $\mathit{ Italian \wedge German \ent  Has\_Residence}$ is not  in the Minimal Relevant Closure of $K$.

As a difference, the defeasible inclusion $\mathit{ Italian \wedge German \ent Has\_Residence}$ is  in the MP-closure of $K$ as well as in the Lexicographic closure of $K$, which both have two bases, $\{1\}$ and $\{2\}$.
\end{example}
%
Notice also that, for each justification $ {\cal J}_j$, the minimal Relevant closure always removes all the defaults in ${\cal J}_j^{min}$. Indeed, the defaults in ${\cal J}_j^{min}$ have all the same rank and they have to be removed all together at the same iteration of the algorithm (at iteration $i$ if they have rank $i$). It cannot be the case that some default in ${\cal J}_j^{min}$ is removed from ${\cal D}$ but not all of them. This observation will be useful in the following.

The example above shows that 
the MP-closure is different from both the Basic Relevant Closure and the Minimal Relevant Closure, and that the MP-closure cannot be weaker than such closures. We prove that the Minimal Relevant Closure is a subset of the MP-closure.


\begin{proposition}
Let $K$ be a set of conditionals.
If $C \ent A$ is in the Minimal Relevant Closure of $K$, then $C \ent A$ is in the MP-closure of $K$.
\end{proposition}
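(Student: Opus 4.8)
The plan is to reduce the statement to a single set-theoretic containment: for every MP-basis $D$ for $C$, I would show $\mathcal{D}' \subseteq D$, where $\mathcal{D}'$ is the set of conditionals surviving the Minimal Relevant Closure algorithm on the query $C \ent A$. Granting this, the proposition follows at once: if $C \ent A$ is in the Minimal Relevant Closure then $\tilde{\mathcal{D}'} \cup \{C\} \models A$ by definition, and $\mathcal{D}' \subseteq D$ gives $\tilde{\mathcal{D}'} \subseteq \tilde{D}$, so monotonicity of propositional consequence yields $\tilde{D} \cup \{C\} \models A$; since this holds for every MP-basis $D$ for $C$, it is exactly the defining condition for $C \ent A$ to belong to the MP-closure. (If $C$ has infinite rank the claim is immediate, since then $C \ent A$ already lies in the rational closure and hence in the MP-closure by the earlier corollary; so I would assume $C$ has finite rank.)

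To handle $\mathcal{D}'$ I would first determine what the algorithm discards. By the observation recorded just before this proposition, for each $C$-justification $\mathcal{J}_j$ the algorithm removes all of $\mathcal{J}_j^{min}$ (its lowest-rank members share a rank and are removed together at one iteration). Hence the removed set contains $R = \bigcup_j \mathcal{J}_j^{min}$, so $\mathcal{D}' = K \setminus (\text{removed})$ is disjoint from $R$, i.e. $\mathcal{D}' \subseteq K \setminus R$. It therefore suffices to prove $K \setminus D \subseteq R$, that is, every default omitted by an MP-basis lies in the lowest-rank layer of some $C$-justification; combined with $\mathcal{D}' \subseteq K \setminus R$ this gives $\mathcal{D}' \subseteq D$.

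The heart of the argument, and the step I expect to be the main obstacle, is a maximality characterization of MP-bases exploiting the lexicographic structure of $\ll$. I would prove the sub-lemma: if $D$ is $C$-consistent and maximal w.r.t. $\prec^{MP}$, then for each rank $r$ and each rank-$r$ default $\delta \notin D$, the set $D^{\geq r} \cup \{\delta\}$ is $C$-inconsistent, where $D^{\geq r}$ collects the members of $D$ of rank $\geq r$ (with $\infty$ read as the top rank). For if $D^{\geq r} \cup \{\delta\}$ were $C$-consistent, set $E = D^{\geq r} \cup \{\delta\}$: its tuple agrees with that of $D$ at every rank $> r$ and has $D_r \subset E_r = D_r \cup \{\delta\}$ at rank $r$, so reading from the top coordinate downward the first difference occurs at rank $r$ and yields $D \prec^{MP} E$ with $E$ a $C$-consistent set, contradicting maximality of $D$. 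Given the sub-lemma, take $\delta \in K \setminus D$ of rank $r$: then $D^{\geq r} \cup \{\delta\}$ is $C$-inconsistent while $D^{\geq r} \subseteq D$ is $C$-consistent, so some minimal $C$-inconsistent subset (a $C$-justification) $\mathcal{J} \subseteq D^{\geq r} \cup \{\delta\}$ must contain $\delta$; every member of $\mathcal{J}$ has rank $\geq r$, so $\delta$ is of minimal rank in $\mathcal{J}$, i.e. $\delta \in \mathcal{J}^{min} \subseteq R$. This establishes $K \setminus D \subseteq R$ and closes the proof; infinite-rank conditionals require no special treatment, as placing $\infty$ as the top coordinate of the tuples leaves every inequality and the justification-extraction step unchanged.
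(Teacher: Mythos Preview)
Your proof is correct and follows the same overall architecture as the paper's: both establish that for every MP-basis $D$ for $C$ one has ${\cal D}'\subseteq D$, and deduce the result. The paper frames this via contraposition, but the content is the same.

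Where you genuinely diverge is in the key combinatorial step, showing that each $\delta\in K\setminus D$ lies in ${\cal J}^{min}$ for some $C$-justification ${\cal J}$. The paper argues by contradiction: assuming no justification containing $\delta$ has $\delta$ among its lowest-rank members, it picks from each such justification a lower-rank element $d_{s_h}$ and forms ${\cal E}=D\cup\{\delta\}\setminus\{d_{s_1},\ldots,d_{s_t}\}$, then checks that ${\cal E}$ is $C$-consistent and strictly more serious than $D$. Your route is more direct: the sub-lemma shows outright that $D^{\geq r}\cup\{\delta\}$ is $C$-inconsistent (while $D^{\geq r}$ is $C$-consistent), so any minimal $C$-inconsistent subset ${\cal J}\subseteq D^{\geq r}\cup\{\delta\}$ is a justification containing $\delta$ whose members all have rank $\geq r$, hence $\delta\in{\cal J}^{min}$. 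This avoids the enumeration over all justifications through $\delta$ and the separate verification that the modified set ${\cal E}$ remains $C$-consistent; on the other hand, the paper's construction stays closer to $D$ itself and makes the role of individual justifications more explicit. Both arguments ultimately exploit the same lexicographic maximality of $D$, just packaged differently.
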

\begin{proof}
The proof is by contraposition.
If $C \ent A$ is not in the MP-closure of $K$, then there must be an MP-basis ${\cal B}$ for $C$ such that $\tilde{{\cal B}} \cup \{C\}  \not \models A$.
By definition of MP-basis, ${\cal B}$ is maximal w.r.t. the MP-seriousness ordering among the subsets of $K$ consistent with $C$.

We show that, when executing the Minimal Relevant Closure algorithm for the goal $C \ent A$, each conditional $d= E \ent F\not \in {\cal B}$  is removed from $K$ by the Minimal Relevant Closure algorithm, so that the resulting set of defaults ${\cal D}'$ must be a subset of the MP-basis ${\cal B}$.
As a consequence, $\tilde{{\cal D}}' \cup \{C\}  \not \models A$ and, by the deduction theorem,  $\not \models \tilde{\cal D}' \wedge C \ri A $. We can then conclude that $C \ent A$ is not in the Minimal Relevant Closure of $K$.

First observe that, as ${\cal B}$ is an MP basis for $C$, it is a maximal set of defaults such that $\tilde{{\cal B}} \cup \{C\}$ is consistent.
Then, if $d \not \in {\cal B}$,  $\tilde{{\cal B}} \cup \{ \tilde{d}\}  \cup \{ C\}$ is inconsistent and there must be some $C$-justification ${\cal J}$ such that $d \in{\cal J}$.
We show that, in particular, there must be a $C$-justification ${\cal J}_r$ such that  $d \in {\cal J}_r^{min}$, i.e., $d$ must have the lowest rank among the conditionals in ${\cal J}_r$.

Suppose, by absurd, that a $C$-justification ${\cal J}_r$ with $d \in {\cal J}_r^{min}$ does not exist.
Then for all $C$-justifications ${\cal J}_{s_h}$  such that $d \in {\cal J}_{s_h}$ (with $h=1,\ldots,t$), the rank of $d$ is not the lowest among the ranks of the conditionals in $ {\cal J}_{s_h}$, and (for each $h$) there must be another conditional 
$d_{s_h} \in  {\cal J}_{s_h}$ such that $\rf(d_{s_h}) <\rf(d)$.
Observe that the set of conditionals ${\cal E}={\cal B} \cup \{d\} \backslash \{d_{s_1},\ldots, d_{s_t}\}$ is then more serious than ${\cal B}$ in the MP-ordering (${\cal B} \prec^{MP} {\cal E}$),
and ${\cal E}$  must be consistent with $C$ as, for all the $C$-justifications ${\cal J}_{s_h}$, a conditional ($d_{s_h}$) is not in ${\cal E}$.
This contradicts the assumption that ${\cal B}$ (being a basis for $C$) is a maximally MP-serious set of conditionals in $K$ consistent with $C$.

Hence, there must be a $C$-justification ${\cal J}_{s_h}$ (for some $h$) such that  $d \in {\cal J}_{s_h}^{min}$. 
As a consequence, the Minimal Relevant Closure algorithm must remove $d$ from the set of conditionals in $K$ at the iteration  stage $i=\rf(d)$.
All the conditionals with rank $\rf(d)$ are removed from $K$ at that stage as no-other conditional in ${\cal J}_{s_h}$ has been removed in advance
(there is none in ${\cal J}_{s_h}$ with rank lower than $i$). 

As conditional $d$ is removed at some iteration, $d$ is not in ${\cal D}'$  the  set of defaults resulting from the execution of the Minimal Relevant Closure algorithm. But, as our choice of $d$ is arbitrary, this holds for all $d\not \in {\cal B}$. Therefore, ${\cal D}' \subseteq  {\cal B}$, that is, the set of conditionals ${\cal D}'$ computed by the Minimal Relevant Closure algorithm,  for the query $C\ent A$, is a subset of the MP-basis $ {\cal B}$ for $C$.

As mentioned above, we can now conclude that $C\ent A$ is not in the Minimal Relevant Closure of $K$.
As we know that 
 $\tilde{ {\cal B}} \cup \{C\}  \not \models A$, then $\not \models \bigwedge \tilde{ {\cal B}} \wedge \{C\}  \ri A$.
As ${\cal D}' \subseteq  {\cal B}$, then
$\not \models \bigwedge \tilde{ {\cal D}'} \wedge \{C\}  \ri A$. Thus, $C\ent A$ is not in the Minimal Relevant Closure of $K$.
\hfill $\Box$
\end{proof}
Basic Relevant closure is known \cite{Casini2014} to be weaker than Minimal Relevant Closure and, therefore, it is also weaker than the MP-closure. 

As for the MP-closure and the lexicographic closure, which may have an exponential number of bases (in the size of the knowledge base), computing the Relevant closure may require as well an exponential number of classical entailment checks \cite{Casini2014}.
Concerning the properties of the relevant closure it was shown by Casini et al. \cite{Casini2014}  that both Basic  Relevant Closure and Minimal Relevant Closure, among the properties of a rational inference relation, do not satisfy Or, Cautious Monotonicity and Rational Monotonicity.


\subsection{Comparisons with Basic Preference Descriptions and with System ARS} \label{sec: confronti_Brewka_ARS}

In this section, we compare the MP-closure with Brewka's basic preference descriptions \cite{Brewka04} and with system ARS  \cite{IsbernerRitterskamp2010} a refinement of System Z introduced by Kern-Isberner and Ritterskamp, using techniques for handling preference fusion.

We first show that the MP-closure semantics can be defined within the framework of basic preference descriptions, which allows the combination of different preference strategies for defining preferences among models, for a ranked knowledge base.
A ranked knowledge base (RKB), also called a stratified knowledge base, is a set of formulas $F$ together with a total preorder $\leq$ on $F$  \cite{Brewka04}.
It can be represented as a set of pairs $K=\{(f_i,r_i)\}$, where the rank $r_i$ of formula $f_i$ is a non-negative integer such that $f_i\geq f_j$ iff $r_i \geq r_j$.
A basic preference description associates with an RKB $K$ one preference strategy out of four, where each preference strategy defines an ordering on the set of possible worlds.

For a propositional interpretation $m$, let $K^i(m) = \{f \mid (f,i) \in K \mbox{ and } m \models f\}$, i.e., $K^i(m)$ is the set of formulas with rank $i$ which are satisfied by $m$. The subset strategy $\leq_\subseteq$ is defined as follows:
\begin{align*}
m_1 & \leq_\subseteq  m_2  \mbox{ ($m_1$ is preferred to $m_2$)  \ \ iff \ \ }   \\
&  K^i(m_1) = K^i(m_2) \mbox{ for all $i$, or} \\
& \mbox{there is an $i$ such that $K^i(m_1) \supset K^i(m_2)$ and,  for all $j>i$, $K^j(m_1) = K^j(m_2)$} 
\end{align*}
A strict preference relation $<_\subseteq$ can be defined as usual: $m_1 <_\subseteq m_2$ iff $m_1 \leq_\subseteq m_2$ and  not $m_2 \leq_\subseteq m_1$. 

Given a knowledge base $K$ (a set of conditionals), it is possible to associate with $K$ a basic preference description $K^{\subseteq}$ so that 
the associated preference relation $<_\subseteq$ is equivalent to preference relation among worlds used in the definition of MP-models (Definitions \ref{def:functor} and \ref{defi:MP-models}).
More precisely, we let $K^{\subseteq}=\{(f_d,r_d)_{d \in K}\}^{\subseteq}$, where  each formula $f_d=\alpha \ri \beta$ is the materialization of a default $d=\alpha \ent \beta$ in $ K$ and the rank $r_d= \rf(d)$ is the rank of the default in the rational closure.
   
To show that 
the strict preference relation $<_\subseteq$ induced by $K^{\subseteq}$ is equivalent to the preference relation among worlds used in the definition of MP-models (Definitions \ref{def:functor} and \ref{defi:MP-models}), we prove the following proposition.
\begin{proposition}
$m_1 <_\subseteq m_2$ iff $V(m_1) \prec^{MP} V(m_2)$.
\end{proposition}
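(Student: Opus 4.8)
The plan is to reduce both relations to a single rank-by-rank comparison of sets of defaults, exploiting a complementation between the ``satisfied'' and the ``violated'' defaults of each rank. First I would record the elementary translation between the two vocabularies. By construction of $K^{\subseteq}$ the map $d \mapsto f_d$ is a bijection between the defaults of $K$ and the pairs of $K^{\subseteq}$, and a default $d$ belongs to $V(m)$ exactly when $m \not\models f_d$. Writing $D^i$ for the set of all defaults of $K$ having rank $i$ in the rational closure, this gives, under the identification $d \leftrightarrow f_d$,
$$K^i(m) = D^i \setminus V_i(m),$$
where $V_i(m)$ is the set of rank-$i$ defaults violated by $m$ (and likewise $K^{\infty}(m) = D^{\infty}\setminus V_{\infty}(m)$ for the no-rank stratum). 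From this single identity I would read off the two facts that drive the whole argument: $K^i(m_1) = K^i(m_2)$ iff $V_i(m_1)=V_i(m_2)$, and $K^i(m_1) \supset K^i(m_2)$ iff $V_i(m_1) \subset V_i(m_2)$ --- complementation within $D^i$ turns a strict superset of the satisfied sets into a strict subset of the violated sets, matching the intuition that a preferred (more normal) world is one violating the less serious defaults.

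Next I would unfold the strict relation $<_{\subseteq}$. Since the first clause of $\leq_{\subseteq}$ (all $K^i$ equal) is symmetric, it can never yield a strict preference; hence $m_1 <_{\subseteq} m_2$ holds iff the second clause holds, i.e.\ there is an index $i$ with $K^i(m_1)\supset K^i(m_2)$ and $K^j(m_1)=K^j(m_2)$ for all $j>i$. I would include the short verification that this clause already entails $\neg(m_2 \leq_{\subseteq} m_1)$: comparing the witnessing index $i$ against any putative witness $i'$ for $m_2 \leq_\subseteq m_1$, each of the three cases $i'<i$, $i'=i$, $i'>i$ contradicts either the strict inclusion at $i$ or the equalities above $i$. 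Thus the second clause is exactly the strict relation.

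Then I would substitute the complementation facts into this characterization, obtaining that $m_1 <_{\subseteq} m_2$ iff there is an index $i$ with $V_i(m_1)\subset V_i(m_2)$ and $V_j(m_1)=V_j(m_2)$ for all $j>i$. Finally I would observe that this is precisely the unfolding of the lexicographic order $\ll$ on the tuples $\langle V_{\infty}(m), V_k(m), \ldots, V_0(m)\rangle$ associated with $V(m)$: the recursion in the definition of $\ll$ given just before Definition \ref{MP-order} says exactly that the most significant coordinate at which the two tuples differ is a strict inclusion $\subset$, with equality in all higher coordinates. By Definition \ref{MP-order} this is the statement $V(m_1) \prec^{MP} V(m_2)$, and chaining the equivalences proves the proposition.

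The main obstacle is bookkeeping at the top of the hierarchy: the definition of $K^{\subseteq}$ sets $r_d = \rf(d)$, which is $\infty$ for no-rank defaults, whereas Brewka's ranked knowledge bases are phrased with integer ranks. I would handle this by treating the no-rank defaults as a single stratum sitting above all finite ranks (equivalently, assigning them rank $k+1$), which is consistent with their position as the most significant component $D_{\infty}$ in the MP-tuple, so that ``for all $j>i$'' in the subset strategy and the leading coordinate of $\ll$ refer to the same stratum. Apart from this alignment and keeping the direction of the complementation correct, every remaining step is a routine unfolding of the definitions.
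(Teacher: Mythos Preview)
Your proposal is correct and follows essentially the same approach as the paper: both arguments rest on the complementation $K^i(m) = D^i \setminus V_i(m)$ and the observation that the strict relation $<_\subseteq$ is witnessed exactly by the second clause of $\leq_\subseteq$, which translates directly into the defining condition of $\prec^{MP}$. Your presentation is slightly more systematic (you first isolate the complementation identity and then verify once that the second clause alone characterizes the strict order, whereas the paper treats the two implications separately), and your remark on the $\infty$-rank stratum makes explicit a bookkeeping point the paper leaves implicit, but there is no substantive difference in the argument.
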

\begin{proof}
For the ``only if" direction, suppose $m_1 <_\subseteq m_2$ holds. Then $m_1 \leq_\subseteq m_2$ holds and $m_2 \leq_\subseteq m_1$ doesn't.
As $m_2 \leq_\subseteq m_1$ does not hold, in particular, $K^i(m_2) \neq K^i(m_1)$, for some $i$.
This also means that the defaults violated by $m_2$ and $m_1$ cannot be the same, i.e., $V(m_1) \neq V(m_2)$
(remember that $V(x)$ is the set of defaults violated by $x$).
From the fact that $K^i(m_2) \neq K^i(m_1)$ for some $i$, and that $m_1 \leq_\subseteq m_2$, we know that 
there must be a rank $i$ such that $K^i(m_1) \supset K^i(m_2)$ and, for all $j>i$, $K^j(m_1) = K^j(m_2)$. 

Letting $V_i(x)$ be the defaults with rank $i$ violated by $x$,
we get $V_i(m_1) \subset V_i(m_2)$ and, for all $j>i$, $V_j(m_1) = V_j(m_2)$.
As there is some $i$ such that $V_i(m_1) \subset V_i(m_2)$ and, for all $j>i$, $V_j(m_1) = V_j(m_2)$,
the tuple associated with $V(m_1)$ is less serious (in the MP-seriousness ordering, Definition \ref{MP-order}) than 
the  tuple associated with $V(m_2)$, that is 
$V(m_1) \prec^{MP} V(m_2)$.

For the ``if" direction, suppose $V(m_1) \prec^{MP} V(m_2)$.
Then for some rank $i \leq k$, $V_i(m_1) \subset V_i(m_2)$, and for all ranks $j>i$, $V_j(m_1) = V_j(m_2)$.
Then it must be that $K^i(m_1) \supset K^i(m_2)$ and, for all $j>i$, $K^j(m_1) = K^j(m_2)$.

From this, it follows that $m_1 \leq_\subseteq  m_2$. 
To show that $m_2 \leq_\subseteq  m_1$ does not hold, considering that, for rank $i$,  $K^i(m_1) \neq K^i(m_2)$,
it is enough to show that it is not the case that, for some rank $h$,
$K^h(m_2) \supset K^h(m_1)$ and,  for all $j>h$, $K^j(m_1) = K^j(m_2)$.
This is not possible as we have already proved that $i$ is the highest rank such that $K^i(m_1) \neq K^i(m_2)$
and that $K^i(m_1) \supset K^i(m_2)$. Hence,  $m_2 \leq_\subseteq  m_1$ does not hold and $m_1 \leq_\subseteq  m_2$ holds,
and the conclusion $m_1 <_\subseteq  m_2$ follows. 
\hfill $\Box$
\end{proof}

In \cite{IsbernerRitterskamp2010},
Kern-Isberner and Ritterskamp apply techniques from preference fusion to the total preorders induced by conditional knowledge bases to define an inference relation, called {\em system ARS},  which refines and improves system Z.  They view ``inference as a decision making problem establishing which worlds are more plausible than others". Starting from the priority relation between defaults given by system Z (and by the rational closure) and using the preference fusion approach, they obtain a system which is capable of dealing with irrelevant information. A farther refinement of system ARS, called {\em system $\mathit{ARS^\#}$}, reproduces lexicographic entailment.

Starting from an ordered partition $\Delta = \Delta_0 \cup \Delta_1 \ldots \cup \Delta_k $ of the set of defaults $\Delta$, which is determined as in system Z, where $\Delta_k$ is the most specific layer, a total preorder $\leq_i$ on worlds is associated with each layer $\Delta_i$, where $w \leq_i w'$ informally means that if $w$ violates some default in $\Delta_i$, $w'$ does as well. Fusion operations are iteratively applied, 
starting from layer $\Delta_0$, to fuse $\leq_0$ with $\leq_1$ (with priority to $\leq_1$), then to fuse the resulting relation with $\leq_2$ (with priority to $\leq_2$), and so on up to $\leq_k$, using the fusion ARS operator {\em but} to give priority to the more specific defaults. 
The resulting total preorder on the set of worlds is used to define system ARS.

Kern-Isberner and Ritterskamp have shown that system ARS solves the problem of irrelevance, but not the drowning problem, while the stronger system $\mathit{ARS^\#}$, which takes into consideration the number of defaults falsified in each $\Delta_i$,  deals with the drowning problem and reproduces the lexicographic entailment.
\cite{IsbernerRitterskamp2010} also provides an equivalent formulation of the preference among worlds in system  ARS using Brewka's basic preference descriptions. 
We report it in the following, to clarify the relationships between the MP-closure and system ARS.

For system ARS, 
given a partition of the set of defaults $\Delta = \Delta_0 \cup \Delta_1 \ldots \cup \Delta_k $, an associated  RKB  $K^{\subseteq}$ can be defined containing, for each default layer $\Delta_i$, 
a single formula $f_i= \bigwedge_{A \ent B \in \Delta_i} (A  \ri B)$, the conjunction of the materializations of all the defaults in $\Delta_i$, with the associated rank $r_i=i$.
It has been proved in \cite{IsbernerRitterskamp2010} that the ranking of worlds in system ARS is equivalent to the ranking of worlds induced by the basic preference relation $K^{\subseteq}$.

The main difference between system ARS and MP-closure is that system ARS does not consider the single defaults in each layer $\Delta_i$ separately, while the MP-closure semantics does.
For instance, in Example \ref{example-Student}, with $K$ containing the conditionals:
\begin{quote}
 1. $\mathit{ Student \ent \neg Pay\_Taxes}$\\
 2. $\mathit{Student \ent  Young}$\\
 3. $\mathit{ Employee \wedge Student \ent Pay\_Taxes}$
\end{quote}
the world  $\mathit{x=\{ Employee, Student,  Pay\_Taxes, Young \}}$ in which employed students pay taxes and are young is preferred in the MP-closure semantics to the world $\mathit{y=}$ $\mathit{\{ Employee, Student,  Pay\_Taxes, \neg Young \}}$ in which employed students pay taxes and are not young,
as the first one satisfies default  $ \mathit{ Student  \ent Young}$ (having rank 0), while the second doesn't (and both of them falsify the default $ \mathit{ Student  \ent \neg Pay\_Taxes}$ with rank $0$). 
In system ARS the two worlds $x$ and $y$ are not comparable, as they both falsify some default in $\Delta_0$, and the conditional $\mathit{ Employee \wedge Student \ent Young}$ does not belong to ARS inference relation (while it belongs to the MP-closure of $K$). 

Hence, system ARS is not stronger than the MP-closure. One can see that it is also not weaker.
In fact, in system ARS, worlds satisfying all defaults in some layer $\Delta_i$  may be preferred 
to worlds satisfying some, but not all, defaults in a layer $\Delta_j$, with $j>i$. 
This may lead to inferences which are not allowed in the MP-closure.

\normalcolor

\subsection{Comparisons with other preferential approaches} \label{sec:Cond_Ent}


In this section, we compare with  Geffner and Pearl's {\em Conditional Entailment} \cite{GeffnerAIJ1992} and with $ \alctm$ \cite{LPAR2007,AIJ13} a  typicality based extension of the description logic $\alc$ introduced by Giordano et al.
These approaches exploit preferential interpretations, without restricting to ranked models, and have both some relationships with circumscription 
based approaches.
They define consequence relations that do not satisfy rational monotonicity but, in contrast to rational closure and its stronger refinements, do not have the problem that ``conflicts among defaults that should remain unresolved, are resolved anomalously" \cite{GeffnerAIJ1992}. 
At the end of the section we also compare with Weydert's system JLZ \cite{Weydert03}, with system LCD \cite{Benferhat2000} by Benferhat at al.
and other preferential semantics.

Let us consider 
 the following example by Geffner and Pearl:

\begin{example}  \label{example-Pearl_Geffner}
Let $K$  be the knowledge base containing the conditionals:
\begin{quote}
1. $\mathit{ p \wedge s \ent q}$\\ 
2. $\mathit{r \ent \neg q}$
\end{quote}
in the rational closure (and its refinements), as well as in conditional entailment and in $\alctmin$, 
one can neither conclude  $\mathit{ p \wedge s \wedge r \ent q}$ nor  $\mathit{ p \wedge s \wedge r \ent \neg q}$.
In particular, in the rational closure both defaults 1 and 2 have rank 0, while the formula $\mathit{ p \wedge s \wedge r}$ has rank 1,
and $\rf(\mathit{ p \wedge s \wedge r \wedge q})$  $=\rf(\mathit{ p \wedge s \wedge r \wedge \neg q})=1$.
Similarly, in the MP-closure and in the lexicographic closure, there are two bases for $\mathit{ p \wedge s \wedge r}$, $B=\{1\} $
and $D=\{2\}$, and  neither $q$  belongs to all the bases, nor  $\neg q$ does. That is, the conflict among defaults 1 and 2 is unresolved.

However,  if we add to $K$ an additional default 3. $p \ent \neg q$,  defaults 2 and 3 get rank 0 in the rational closure, while default 1 gets rank 1, and the same for the formula $\mathit{ p \wedge s \wedge q \wedge r}$. 
As $\rf(\mathit{ p \wedge s \wedge r \wedge q})=1$  $<\rf(\mathit{ p \wedge s \wedge r \wedge \neg q})=2$, 
$\mathit{ p \wedge s \wedge r \ent q}$ follows.
The conflict is resolved in favor of $q$, but this is not really justified (and we could add further defaults to rise the rank of $r$ in the rational closure to resolve the conflict in favor of $\neg p$).
The same conclusion follows from the MP-closure and from the lexicographic closure as they define consequence relations stronger than the rational closure.
Conditional entailment 
on the contrary does not build on the rational closure ranking and leaves the conflict unresolved after the introduction of default 3. 
\end{example}
The problem above, which is related to the representation of preferences as levels of reliability, has also been recognized by Brewka in his logical framework for default reasoning \cite{Brewka89}, 
leading to a generalization of the approach to allow a partial ordering between premises. 



Geffner and Pearl introduce the notion of conditional entailment  based on a class of preferential structures called {\em prioritized structures}.
They consider, for each default $p(x) \ri q(x)$ in the knowledge base, a sentence $p(x) \wedge \delta_i(x) \Ri q(x)$\footnote{$p(x) \wedge \delta_i(x) \Ri q(x)$ stands for  the closed formula $\forall x(p(x) \wedge \delta_i(x) \Ri q(x))$}, as well as a default $p(x) \ri \delta_i(x)$, where $ \delta_i$ denotes a new and unique assumption predicate which summarizes the normality conditions required for concluding $q(x)$ from $p(x)$.
They refer to the collection of defaults violated by a model $M$ in a structure as a {\em gap} of the model, denoted by $D[M]$, and define a prioritized preferential structure as follows:

\begin{definition}[\cite{GeffnerAIJ1992}] \label{def:prioritized_structure}
A {\em prioritized preferential structure} is a quadruple $\la { \cal G_{L}}, <,{\cal D_L}, \linebreak \prec \ra$, where
${\cal G_{L}}$ stands for the set of interpretations over the underlying language ${\cal L}$,
${\cal D_L}$ stands for the set of assumptions in ${\cal L}$, $\prec$ for an irreflexive and transitive priority relation over ${\cal D_L}$,
and $<$ is a binary relation over ${\cal G_{L}}$, such that for two interpretations $M$ and $M'$, $M<M'$ holds iff 
$D[M] \neq D[M']$, and for every assumption $\delta$ in $D[M] - D[M']$ there exists an assumption $\delta'$ in $D[M'] - D[M]$
such that $d \prec d'$.
\end{definition}
This means that when $M<M'$, if $M$ violates some default which is satisfied by $M'$, then there is a default $d'$ with higher priority
which is violated by $M'$ and not by $M$.

To guarantee that the priority ordering $\prec$ (which must not contain infinite ascending chains) reflects the structure of the knowledge base, Geffner and Pearl develop a notion of admissible priority ordering and define admissible  prioritized structures, i.e. structures where $\prec$ is admissible. While we will not recall here the definition of an admissible priority ordering $\prec$,
let us observe that
the notion of specificity  used  in Section \ref{sec:iterating} in the definitions of S-Enriched rational models  (Definition \ref{def-SenrichedmodelR}) and of simplified-enriched models (Definition \ref{def-Simp-enrichedmodelR}), and formulated  through conditions $(i)$ and $(ii)$, is strongly related with the conditions above defining the preference $M <M'$ among two interpretations in prioritized preferential structures. 
There are few major differences, as the preference relation $<$ of an  S-enriched model $\emme$ in Definition \ref{def-SenrichedmodelR} is modular,
and its definition exploits the ranking function $k_\emme$ to determine the priority among defaults, while in conditional entailment all possible priority orderings $\prec$ which are admissible with $K$ are considered.
 The MP-closure semantics could be reformulated in a similar style (and the bi-preference interpretation developed for the MP-closure in DL setting \cite{arXiv_Skeptical_closure} is an example). 
As it exploits the rational 
closure ranking for determining the priority of defaults, and such a ranking is an admissible ranking, one may conjecture that 
the MP-closure defines a stronger notion of entailment with respect to conditional entailment. 

Conditional entailment has a sophisticated proof theory based on argumentation and
deals with irrelevance and the drowning problem. It has been shown to be weak in some respects  (e.g., it violates some form of defeasible specificity \cite{Benferhat2000}). 

As for $\alctmin$,
its semantics is based on preferential KLM semantics (but not on ranked interpretations)  and minimal entailment exploits a different notion of minimization with respect to the rational closure. 
The idea is that the conditionals $A \ent B$ are represented by strict inclusions $\tip(A) \sqsubseteq B$ in DL setting (meaning that the typical $A$ elements are $B$ elements), corresponding to material implications $\tip(A) \ri B$, where
 $\tip(A)$ is true at the minimal worlds $w$ satisfying $A$. 
Based on a modal interpretation of typicality, $\tip(A)$ can be interpreted as $A \wedge \neg \Box \neg A$, where $\Box$ is a G\"odel-L\"ob modality, whose accessibility relation is taken to be the inverse of $<$. 
Minimal interpretations are then defined as those minimizing the formulas $\neg \Box \neg A$ true at the worlds.

As conditional entailment, $\alctmin$ does not resolve conflicts anomalously. 
In Example  \ref{example-differenza_MC_LC}, it is less bold than lexicographic closure and (as the MP-closure) does not allow to conclude the conditional $\mathit{Employee \wedge Student  \ent  Young  \wedge}$ $\mathit{ \neg  Pay\_Taxes }$.
However, $\alctmin$ suffers from the blocking of property inheritance problem 
which can be easily explained as the minimization criterium does not consider each single default individually, as in conditional entailment (where an assumption $\delta_i$ is associated with each default), in the Lexicographic closure and in the MP-closure.

An extension of the typicality logic $\alctmin$ 
which deals with the  blocking of property inheritance problem 
has been developed by Fernandez Gil \cite{fernandez-gil}, based on the idea of considering subsets of the axioms in the knowledge base, by developing  a multi-typicality version of $\alctmin$ , which allows for different typicality operators $\tip_i$. 
As $\alctmin$ is based on a preferential extension of $\alc$  with typicality, but not on the rational closure, the resulting extension is neither weaker nor stronger than  MP-closure. 
Indeed, $\alctmin$ and the rational closure of $\alc$ with typicality are already incomparable \cite{AIJ15}.
A further difference between the MP-closure 
and the proposal in \cite{fernandez-gil} is that in the MP-closure we do not consider multiple typicality operators, but a single notion of ``normality" (or typicality) expressed by conditionals. 




System JLZ is a quasi-probabilistic default formalism for graded defaults introduced by Weydert \cite{Weydert03}, based on a canonical ranking construction.
We have already mentioned that, in Example \ref{exa:evidence_comparison}, system JLZ (as the lexicographic closure) allows for stronger conclusions than the MP-closure,
as it takes  into account the ``weight of independent reasons" for supporting some conclusion.
In other examples, however, such as in the presence of redundant conditionals, system JLZ is more cautious than the  MP-closure (and than the lexicographic closure).
Consider the following instance of the ``redundant shortcut" example  by Weydert  \cite{Weydert03}. From the knowledge base containing the conditionals $\{ \mathit{ Student \ent Adult}, \; \mathit{ Adult \ent Married},\;  \mathit{ Student \ent}$ $\mathit{ \neg Married}, \mathit{ Adult \ent }$ $\mathit{Responsible}$,  $\mathit{ Married \ent Responsible} \}$, system JLZ neither concludes that student are normally responsible nor its negation.
Vice-versa, $\mathit{ Student \ent Responsible}$ belongs to the MP-closure and to the lexicographic closure of the knowledge base.
Therefore system JLZ is neither weaker nor stronger than the MP-closure.

Benferhat et al. have developed a very general approach to deal with default information based on the theory of belief functions \cite{Benferhat2000}.
They prove that this approach allows a uniform characterization of several popular non-monotonic systems, and use $\epsilon$-belief assignments to build a new system, called LCD, which addresses correctly the well known problems of defeasible reasoning including specificity, irrelevance and blocking of inheritance.
In particular, LCD deals correctly with the problem of ambiguity preservation. LCD does not satisfy rationality monotonicity but, rather than a drawback, considers this ``as an indication that rational monotonicity does not necessarily apply to all situations" \cite{Benferhat2000}. While LCD gives rise to a stratification of rules in the knowledge base, this stratification may be different from the one produced by system Z and by the rational closure and for this reason is incomparable to other systems, such as Brewka's preferred subtheories \cite{Brewka89} and the lexicographic closure, as well as to the MP-closure.

 In \cite{CaisniJelia19} a systematic approach for extending the KLM framework for defeasible entailment has been developed by Casini et al., showing that the rational  closure and lexicographic closure fall within the refined framework, but that there are forms of defeasible entailment within the framework that are more ``adventurous" than (and bolder than) the lexicographic closure. A semantic characterization in terms of a class of ranked interpretations is developed. More precisely, a notion of {\em basic defeasible entailment relation} is defined which satisfies all KLM properties plus some additional ones (namely, Inclusion and Classic Preservation) and 
basic defeasible entailment relations are shown to be characterizable using ${\cal K}$-faithful rank functions. 
Furthermore, an algorithm is defined which computes the defeasible entailment relation generated by a  ${\cal K}$-faithful rank function. 
Starting from the observation that basic defeasible entailment is too permissive, 
rank preserving ${\cal K}$-faithful rank functions are introduced, which are required to be a refinement of the rational closure ranking, and the notion of 
{\em rational defeasible entailment relation} is defined as a basic defeasible entailment relations which is closed with respect to the rational closure consequences.
It is shown in \cite{CaisniJelia19} that rational defeasible entailment relations can be characterized as defeasible entailment relations generated by rank preserving  ${\cal K}$-faithful rank functions, and that the lexicographic closure is a rational defeasible entailment relation.
It has to be investigated whether the rational extension of the MP-closure introduced in Section  \ref{sec:rational_relation} also fits in this framework. 
We leave this investigation for future work.

To conclude this section, let us mention some recent work on the semantics of conditionals.

A general semantics for conditional knowledge bases has been developed by Kern-Isberner  based on {\em c-representations} \cite{Kern-Isberner01,Kern-IsbernerAMAI2004},
which are particular ranking functions using the principle of conditional preservation as their core construction mechanism. 
Several nonmonotonic inference relations have been proposed considering subsets of all c-representations based on various notions of minimality \cite{BeierleAMAI2018}.
In \cite{BeierleFLAIRS19}  Beierle et al. study the properties of skeptical, weakly skeptical and credulous c-inference and, in particular, they prove that skeptical inference over any notion of minimal c-representations does not satisfy Rational Monotonicity, but satisfies Weak Rational Monotonicity (and similarly for weakly skeptical inference over all c-representations and over any notion of minimal c-representations).

 In \cite{KoniecznyIJCAI19} Konieczny et al. have studied a new family of inference relations based on the selection of some maximal consistent subsets, leading to inference relations with a stronger inferential power than the basic relation based on all maximal consistent subsets of the knowledge base.
They define a general class of monotonic selection relations for comparing maximal consistent subsets and show that it corresponds precisely to the class of rational inference relations.

\normalcolor

\subsection{Comparisons with further approaches from DLs} \label{sec:confronti_con_DLs}

  There are other related approaches from DL literature that build on the rational closure, or deal with its limitations. 
 In the following we consider some of them, as well as some other related proposals.
 
${\cal DL}^N$ captures a form of  ``inheritance with overriding": a defeasible inclusion is inherited by a more specific class if it is not overridden by more specific (conflicting) properties. 
 ${\cal DL}^N$ is not necessarily defined starting from the ranking given by the rational closure but, when it does, it provides a possible approach  to address the problem of inheritance blocking in the rational closure.
Inference is based on a polynomial algorithm which allows a default property to be inherited. When a defeasible property of a concept is conflicting with another defeasible property, and none of them is more specific so to override the other, the concept may have an inconsistent prototype.
For instance, in Example \ref{example-Student-new}
 the concept $\mathit{Employee \wedge Student}$ has an inconsistent prototype, as employed students inherit  the property of  students of not paying taxes and the property of employee of paying taxes, none is more specific than the other. 
In such an example, as we have seen, the MP-closure and  the lexicographic closure only conclude  that employed students are busy, and
silently ignore the conflicting defaults.
In  ${\cal DL}^N$ unresolved conflicts have to be detected and then fixed by modifying the knowledge base. 
The logical properties of  ${\cal DL}^N$ are studied in \cite{BonattiSauro17}. It is shown that, when considering the internalized KLM postulates, where each inclusion $NC \sqsubseteq D$ corresponds to a conditional $C \ent D$, few of the postulates are satisfied (namely, Reflexivity, Left Logical Equivalence and Right Weakening) but, when only {\em N-free} knowledge bases are allowed (i.e., knowledge bases which do not allow normality concepts $NC$ on the r.h.s. of conditionals), all the postulates are satisfied, with the partial exception of Cautious Monotonicity.
That is, satisfying KLM properties in  ${\cal DL}^N$ comes at the price of  renouncing to the full expressiveness of the non monotonic DL (such as, supporting role restrictions to normal instances).


The inheritance-based rational closure by Casini and Straccia \cite{Casinistraccia2011,CasiniJAIR2013}
is a closure construction defined by combining the rational closure with defeasible inheritance networks. 
For answering a query "if A, normally B",
it relies on the idea that only the information related to the connection of  $A$ and $B$ (and, in particular, only the defeasible inclusions occurring on the routes connecting $A$ and $B$ in the corresponding net) are relevant and have to be considered in the rational closure construction for answering the query. 

 Recent work in Description Logic literature has considered extensions of the rational closure semantics under several respects. 

A notion of {\em stable rational closure} has been proposed by Bonatti \cite{Bonatti2019} to extend the rational closure to a wider class of expressive description logics, which do not satisfy the disjoint union model property and for which the consistency of rational closure is not guaranteed. Observing that the definition of exceptionality ranking in the rational closure is not adequate to deal with this case, a notion of stable ranking is introduced.
Stable rankings are not intended to deal with the blocking of property inheritance problem and they collapse to the classical ranking of the rational closure when the underlying logic satisfies the disjoint union model property.


A further problem of rational closure for DLs is that it disregards defeasible information for existential concepts, a problem which has been addressed
by Pensel and Turhan \cite{Pensel18}, who developed stronger versions of rational and relevant entailment in ${\cal EL}$, which considers defeasible information for quantified concepts.

An extension of DLs with defeasible role quantifiers and defeasible role inclusions has been developed by Britz and Varzinczak \cite{Britz2018,Britz2019}, by associating multiple preference relations with roles.

Casini et al.  \cite{CasiniDL19} propose a procedure that, given as input any possible ranking of the defeasible concept inclusions contained in the knowledge base,
allows to define a notion of entailment 
from the knowledge base. They show that, for any arbitrary ranking, there is a defeasible knowledge base whose Rational Closure is equivalent to that ranking.
This work relates to the KLM framework for defeasible entailment introduced by Casini et al. for the propositional case \cite{CaisniJelia19} that we have referred to in the previous section.

The rational closure construction for DLs has been exploited as well in the context of fuzzy logic \cite{CasiniStracciaLPAR2013} and probabilistic logics \cite{Lukasiewicz08}. 
A description logic with probabilistic conditionals $\alc^{ME}$ has been considered by Wilhelm and Kern-Isberner \cite{WilhelmIsberner19}, exploiting a reasoning methodology based on the principle of maximum entropy. The complexity of the consistency problem in $\alc^{ME}$ has also been studied  \cite{BaaderEKW19}. 

\normalcolor


Among recent approaches dealing with the problem of inheritance with exceptions in description logics,  
the one by Bozzato et al. 
presents a defeasible extension of the Context Knowledge Repositories (CKR) framework \cite{Bozzato2018} in which defeasible axioms are allowed in the global context 
and can be overridden by knowledge in a local context. Exceptions have to be justified in terms of semantic consequence.
A translation of extended CKRs (with  knowledge bases in ${\cal SROIQ}$-RL) into Datalog programs under the answer set semantics has also been developed.

   \section{Conclusions}  \label{sec:conclu} 

In this paper we have studied the notion of MP-closure in the propositional case. The MP-closure was originally introduced for description logics in  \cite{Multipref_arXiv2018,Ecsqaru19} as an approximation of multipreference semantics. 
As  lexicographic closure, MP-closure builds on  rational closure but it exploits a different seriousness ordering to compare sets of defaults:
a different lexicographic order is used which compares tuples of sets of defaults rather than tuples of numbers (the number of defaults in the sets).

Lehmann in his seminal work on lexicographic closure  \cite{Lehmann95} proposes that ``in case of contradictory defaults of the same rank, we try to satisfy as many as possible". 
Here, we have explored an alternative option of considering alternative sets of defaults with the same rank as equally serious, in spite of their cardinality.

In the paper, we have presented a characterization of  MP-closure both in terms of maxiconsistent sets and of a model-theoretic construction.
In particular, we have developed a simple preferential semantics for  MP-closure, 
introducing a functor ${\cal F}$ that maps each minimal canonical model of the knowledge base (characterizing the rational closure) into a preferential model,
thus defining, for a knowledge base $K$, a consequence relation ${\cal MP}_K$ which is a superset of the rational closure of $K$, ${\cal RC}_K$, but
a subset of the lexicographic closure, ${\cal LC}_K$.

${\cal MP}_K$ is not rational; however, we have seen that, starting from the preferential semantics of the MP-closure, a ranked semantics can be easily obtained 
to define a rational consequence relation (that we called ${\cal MP}^\Ra_K$),  which is a superset 
of the MP-closure. We have shown that ${\cal MP}^\Ra_K$  is incomparable with lexicographic closure, that is ${\cal MP}^\Ra_K$ neither includes ${\cal LC}_K$, nor is included in ${\cal LC}_K$.

We have also compared  MP-closure  with multipreference semantics  \cite{GliozziAIIA2016,Ecsqaru19} and with  Relevant Closure \cite{Casini2014}. 
They are both refinements of  rational closure as they build on rational closure ranking to define stronger consequence relations. 
These formalisms 
have been defined for description logics, but their definition can be transposed to  propositional logic.
Concerning the multipreference semantics, 
which is a semantic strengthening of rational closure, we have shown that, in the propositional setting,
there is an equivalence  relation among the $<_{FIMS}$-minimal canonical fixed-points of the operator ${\cal F}^\Ra$ (the operator used for defining a rational consequence relation extending the MP-closure)
and the minimal canonical simplified-enriched models in the multipreference semantics. 
Concerning Relevant Closure,
we have proven that MP-closure is stronger than both Basic Relevant Closure and  Minimal Basic Relevant Closure.


 MP-closure has also strong relationships with other non-monotonic formalisms based on the preferential approach, and specifically with those building on rational closure construction.
We have shown that MP-closure semantics 
can be equivalently formulated using Brewka's  Basic Preference Descriptions \cite{Brewka04} for ranked knowledge bases, starting from rational closure ranking and using the subset strategy.
It has also strong relationships with Kern-Isberner and Ritterskamp's {\em system ARS}  \cite{IsbernerRitterskamp2010}, which refines system Z, and with Geffner and Pearl's conditional entailment \cite{GeffnerAIJ1992}, which on the contrary does not build on  rational closure ranking
but exploits partial order priority relations among defaults.
%
Hence, 
we believe the MP-closure is a legitimate option 
among the different stronger and weaker semantics and constructions that have been proposed in the literature as  refinements or alternatives to  rational closure. Most of the paper is devoted to establishing such relationships. 
\normalcolor

A first semantic characterization of the MP-closure for the description logic $\alc$ was developed using bi-preferential (BP) interpretations \cite{arXiv_Skeptical_closure}, preferential interpretations developed along the lines of the preferential semantics introduced by Kraus, Lehmann and Magidor \cite{KrausLehmannMagidor:90,whatdoes}, but containing two preference relations, the first one $<_1$ playing the role of the ranked preference relations in the models of the RC, and the second one $<_2$ representing a preferential refinement of $<_1$. 
Another construction, developed for DLs, the skeptical closure \cite{Pruv2018}, was shown to be  a weaker variant of the MP-closure 
in  \cite{arXiv_Skeptical_closure}, which requires to build a single base, and we refer thereto for detailed comparisons.

The idea of considering subsets of the axioms in the knowledge base has also been considered by Fernandez Gil in \cite{fernandez-gil},
who developed a multi-typicality version of the typicality logic $\alctmin$ \cite{AIJ13}, 
which, as recalled in Section \ref{sec:iterating}, is another defeasible description logic based on a preferential extension of $\alc$ with typicality,  which, differently from the rational closure, is not based on a ranked semantics but,
nevertheless, suffers from the blocking of property inheritance problem.

There are several issues that are worth being investigated for future work. 
One is to verify whether these refinements of the rational closure can be characterized within the KLM framework for defeasible entailment developed by Casini et al. \cite{CaisniJelia19}.
Another issue is the extension of these constructions to the first-order case. 
In this regard, Description Logics may provide an interesting special case, which encompasses a limited treatment of non-monotonic reasoning in first-order logic, namely, the treatment of the decidable fragment including only unary and binary predicates. Some work in this direction has been already done, for instance,  
 by Bonatti in his extension of the rational closure to a wide class of expressive description logics \cite{Bonatti2019}, by Britz and Varzinczak by allowing in the language defeasible quantifiers and role inclusions besides defeasible subsumptions  \cite{Britz2018,Britz2019}, and  by Pensel and Turhan by developing versions of rational and relevant entailment in the low complexity description logic ${\cal EL}$, which considers defeasible information for quantified concepts \cite{Pensel18}.
In another direction, new alternative interpretations of conditionals also deserve being investigated. For instance, Koutras et al. \cite{KoutrasKR18} have proposed a  majority interpretation of conditionals (via a "most" generalized quantifier) which does not satisfy some of the KLM postulates, such as CM and AND. 
We have already mentioned in Section  \ref{sec:confronti_con_DLs} the minimal model semantics based on c-representations whose properties have been investigated by Beierle et al. \cite{BeierleFLAIRS19}, and the family of inference relations based on use of a scoring function studied by Konieczny et al. \cite{KoniecznyIJCAI19}.
%
As a final point, as in the MP-closure and in the lexicographic closure the number of possible bases for a given formula may be exponential in the number of defaults
(and, in the Relevant Closure, an exponential number of justifications is to be computed in the worst case),
from a practical point of view,  
to avoid an exponential blowup in the complexity, it may be essential to consider sound approximations of these constructions.
\normalcolor

\normalcolor

\medskip
{\bf Acknowledgement:} 
This research has been partially supported by INDAM-GNCS 
Project 2019  
and Project 2020.

\bibliographystyle{elsarticle-harv}

\end{document}